\newtheorem{lemma}{Lemma}
\newtheorem{claim}{Claim}
\newtheorem{definition}{Definition}
\setlist[itemize]{leftmargin=9pt,labelsep=4pt,noitemsep,topsep=3pt}
\setlist[enumerate]{leftmargin=10pt,labelsep=10pt,noitemsep,topsep=3pt}
\setlist[description]{leftmargin=5pt,labelsep=2pt,noitemsep,topsep=3pt}
\newtheorem{theorem}{Theorem}
\newcommand{\Ut}{T}
\newcommand{\Si}{S}
\newenvironment{proofof}[1][\proofname]{
  \proof[\bfseries Proof of #1]
}{
  \endproof
}
\newenvironment{proofideaof}[1][\proofname]{
  \proof[\bfseries Proof idea of #1]
}{
  \endproof
}
\newenvironment{statementof}[1][\proofname]{
  \proof[\bfseries Statement of #1]
}{
}
\newcommand{\Einput}{E^+}
\newcommand{\agset}{X} 
\newcommand{\rset}{Y} 
\newcommand{\BibTeX}{B\kern-.05em{\sc i\kern-.025em b}\kern-.08em\TeX}
\begin{document}

\begin{frontmatter}


\paperid{9712} 




\title{Facilitating Matches on Allocation Platforms}

\author[A]{\fnms{Yohai}~\snm{Trabelsi}\thanks{Corresponding Author. Email: yohai.trabelsi@gmail.com.} \footnote{Work done while the author was affiliated with Bar-Ilan University, Israel. }}

\author[B]{\fnms{Abhijin}~\snm{Adiga}}
\author[C]{\fnms{Yonatan}~\snm{Aumann}}
\author[C]{\fnms{Sarit}~\snm{Kraus}}
\author[B]{\fnms{S. S.}~\snm{Ravi}}

\address[A]{Harvard University}
\address[B]{Biocomplexity Institute, University of Virginia}
\address[C]{Dept. of Computer Science, Bar-Ilan University}

\begin{abstract}

We consider a setting where goods are allocated to agents by way of an allocation platform (e.g., a matching platform). An ``allocation facilitator'' aims to increase the overall utility/social-good of the allocation by encouraging (some of the) agents to relax (some of) their restrictions. At the same time, the advice must not hurt agents who would otherwise be better off.
Additionally, the facilitator may be constrained by a ``bound'' (a.k.a. `budget'), limiting the number and/or type of restrictions it may seek to relax. We consider the facilitator's optimization problem of choosing an optimal set of restrictions to request to relax under the aforementioned constraints.  
Our contributions are three-fold: 
(i) We provide a formal definition of the problem, including the participation guarantees to which the facilitator should adhere.  We define a hierarchy of participation guarantees and also consider several social-good functions.    
(ii) We provide polynomial algorithms for solving various versions of the associated optimization problems, including one-to-one and many-to-one allocation settings.
(iii) We demonstrate the benefits of such facilitation and relaxation, and the implications of the different participation guarantees, using extensive experimentation on three real-world datasets.

\end{abstract}

\end{frontmatter}

\section{Introduction}
Recently, allocation-platforms/matching-platforms, which allocate  resources of one sort or another to users, are being deployed for a variety of applications in both the public and private sectors, including in welfare and social services~\citep{huang2023methodology,pan2023matching,aguma2022matching}. Some examples are allocating home healthcare demand with service providers \citep{lin2021matching}, on-demand housekeeping platforms \citep{yu2022pricing}, 
government platforms for providing housing assistance to homeless individuals \citep{sharam2018matching}, ride-sharing platforms \citep{bao2023mathematical}, 
sharing parking spaces \citep{tang2023efficient}, and volunteer matching platforms \citep{slingerland2018empowering}.  In these platforms, users specify their resource requirements and constraints, and the platforms aim to optimize the allocation of resources to users 
where resources can be shared among several users.

The key stakeholders in the resultant allocation are clearly the users. In many cases, however, there may be additional stakeholders.  For example, local welfare authorities are justifiably interested in increasing the number of homeless individuals awarded housing assistance (see e.g., \cite{chan2018utilizing}), and university administration is interested in maximizing the number of courses for which classrooms have been successfully allocated (see e.g., \cite{frimpong2015allocation}).
At times, these stakeholders may be able to directly determine, or make changes to the allocation, but more often than not, the allocation procedure itself is fixed - for regulatory, commercial, or technical reasons (see e.g., ~\cite{trabelsi2022resource}).  
For example, New York City offers affordable housing opportunities through the {\it Housing Connect portal}~\cite{nyc_housing_connect}, which serves as an allocation and matching platform. The algorithm used for this allocation is governed by multiple laws and regulations, including: (i) the federal Fair Housing Act, (ii) the NYC Human Rights Law, (iii)NYC HPD regulations, and (iv) NYC HDC rules. 
These regulations determine the allocation, which does not change. 
At the same time, the NYC Department of Housing Preservation and Development (HPD) established an advisory initiative, called the {\it Housing Ambassadors Program}~\cite{hpd_housing_ambassadors}, to help people navigate and use the allocation platform effectively. In their website, they emphasize that the Housing Ambassadors do not provide housing directly, and they cannot guarantee that an applicant will receive an affordable unit through the lottery.

In such cases, interested parties -- which from now on we term \emph{facilitators} -- can still shape the resulting allocation by assisting and advising users in selecting the priorities and constraints they enter into the allocation platform.   It is important to stress that such advice need not be viewed as a form of manipulation, neither of the platform nor of the users. Indeed, users frequently do not know how best to express their true constraints, and such interventions -- if done right -- can benefit all~\cite{Slaugh2016}. As such, we only consider \emph{impartial} facilitators whose priorities are aligned with the overall social good, without any preference for one user or another.  In this paper, we study advice provisioning with such impartial facilitators: what advice should they provide? What guarantees should/can be offered to the users, both those following the advice and those who do not? 

In this paper, we assume that with appropriate guarantees, agents are willing to accept the advice. Providing them knowledge about whether they have been, or are currently, guaranteed a resource— either through the facilitator to ease constraints or the allocator as an act of transparency— further enhances their receptiveness.

For concreteness, we consider the following stylized model (see
Figure~\ref{fig:examplefc} for an example). 
There is an allocation platform that, given (i) a set $X$ of agents - each agent $x_i$ with \emph{demand level} $d_i$ for the number of resources it needs, (ii) a set $Y$ of resources, and (iii) a binary compatibility relation $E$ between agents and resources, outputs a maximum allocation of resources 
to agents~(where the maximum is in the sense of resource utilization). 
The compatibility relation $E$ represents the compatibility \emph{as provided to the platform by the agents}. The facilitator can advise agents to add additional compatibilities $\hat{E}$, thereby enlarging the input compatibility relation to $\Einput= E\cup \hat{E}$.
If offered no advice, then $\Einput=E$.
Adding any such new compatibility $e\in \hat{E}$ is associated with some \emph{discomfort level} $\rho(e)$. 
We consider a class of problems where the facilitator's
objective is to maximize resource allocation with the constraint
that the \emph{aggregate cost} does not exceed a specified bound. Here, the aggregate cost
is some function of all~$\rho(e)$s, such as sum of 
all~$\rho(e)$s or the count of strictly positive $\rho(e)$ values (which is the number of proposed relaxations).
Figure~\ref{fig:examplefc} shows solutions for different 
functions. 
The incompatibility 
between a course and a classroom in Figure~\ref{fig:examplefc} can arise due to various factors
such as seating capacity, commute distance, and accessibility issues.
Optimal solutions for various
scenarios (bound type and value---participation guarantee)
 are shown, all of which can be achieved by our framework.
 Expansions of the acronyms used in these
 scenarios are given under ``Summary of Contributions''.
 Formal definitions of the guarantees are provided in
 Section~\ref{sec:definitions}.
 Note that even for such a simple example, the solutions can be very different
for different scenarios.

We propose the following two requirements to ensure that the proposed relaxation does not harm the agents while encouraging cooperation among them:

\begin{description}
    \item{(i)} \emph{No agent is harmed:} Any agent that was guaranteed to be granted an allocation prior to the facilitator's actions, is also guaranteed so following the facilitator's actions.
    \item{(ii)} \emph{Participating agents benefit:} Any agent that is asked by the facilitator to add a compatibility (and does so) is guaranteed to be granted an allocation. 
\end{description}
\begin{figure}
\centering
\includegraphics[width=\columnwidth]{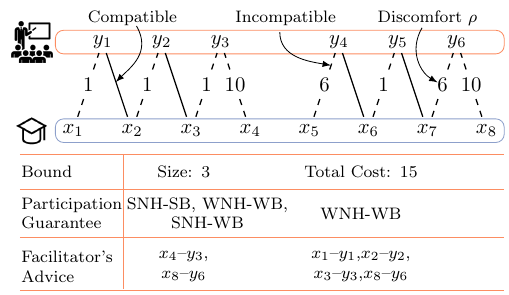}
\caption{An example of a course--classroom allocation platform where every
course needs to be matched to one classroom. Solid edges represent a compatible pair of a course and a classroom, dashed edges indicate an incompatible pair with a finite relaxation cost, and absence of an edge indicates an infinite relaxation cost incompatibility.
An example that distinguishes between the different guarantees is in Figure~\ref{fig:examplepcfc}.
}
\label{fig:examplefc}
\end{figure}
It should be noted that the facilitator cannot secure these guarantees by directly determining the allocation, as the platform is not under its control.
Rather, the facilitator must secure these guarantees by properly designing its advice.

Given such a setting, the facilitator’s objective is to devise a set of incompatibilities that it will request agents to relax, which (i)~maximizes the resultant allocation, while (ii)~maintaining the overall cost within the given bound, and (iii)~maintaining the no-harm and the participating agents guarantees.
Agents are incentivized to follow the facilitator's recommendation as this guarantees them an allocation.

\paragraph{Summary of Contributions.}
~\newline
\noindent 
{\underline{(a) Problem formulation:}}~
We formally define the problem, including 
\begin{itemize}
\item Participation guarantees: We consider two forms of ``no harm'' participation guarantees: \emph{strong no harm guarantee (SNH)}, and \emph{weak no harm guarantee (WNH)}, and two forms of benefit to relaxers guarantees: \emph{strong benefit guarantee (SB)}, and \emph{weak benefit guarantee (WB)}. In both cases, the strong guarantee holds regardless of the number of agents adhering to the facilitator's advice, while the weak guarantee holds if all agents comply with the facilitator's recommendations. Using the defined guarantees, we define three guarantee combinations: SNH-SB, WNH-WB, and SNH-WB. The WNH-SB combination is omitted because, in practice, ensuring a strong guarantee for no harm is more important than for the benefit to relaxers.    

\item Aggregated cost:
Inspired by Faliszewski and Rothe~\cite{Faliszewski_Rothe_2016}, we consider two functions for aggregating discomfort from relaxing individual incompatibilities: size  (total relaxations) and total cost (sum of discomforts).

\item 
We consider three allocation settings: (i)~one-to-one where a
resource is allocated to at most one agent and an agent is 
assigned at most one resource, (ii)~many-to-one 
where a resource is allocated to at most  one agent but an agent 
could be assigned multiple resources, and (iii)~one-to-many where an agent is assigned at most one resource but a resource could be allocated to multiple agents (e.g., by sharing a classroom).
\end{itemize}

\noindent
\ul{(b) Polynomial algorithms:}
We provide polynomial algorithms to solve each of the problem variants, for the three participation guarantees combinations, and the two aggregation functions.
For the reason mentioned above, we omit the combination of weak no harm and strong benefit.

\noindent
\ul{(c) Experimental study:}~ 
We applied the devised algorithms to three real-world datasets, and conducted experiments in all three allocations settings (one-to-one, one-to-many and many-to-one).  
In each, we study the improvement in allocation sizes obtained by the facilitator under the different participation guarantees.  We show that for all guarantees, a significant increase in allocation size is obtained. Comparing the performance of the different participation guarantees, we show that if all agents comply with the facilitator's advice, then the stronger guarantees result in somewhat smaller allocations than the weaker guarantees, but using the stronger guarantees is more robust to agents' failure to comply with the advice (as one would expect in practice).



\section{Related Work}
\label{sec:related}

\noindent
\textbf{Resource allocation in multi-agent systems.}
Many references discuss
how agents express their requirements, identify efficiently solvable allocation problems  
and provide methods for evaluating the
corresponding algorithms
(see e.g.,
\citep{Chevaleyre-etal-2006,Gorodetski-etal-2003,Dolgov-etal-2006}). 
Nguyen et al.~\cite{nguyen2013survey}
provide a good
survey on the complexity and approximability of
problems in this area. 
Zahedi et al.~\cite{zahedi-etal-2020,zahedi2023didn} 
present a 
method where dissatisfied agents 
can challenge the proposed allocation using counterfactual queries.
Relaxing the criteria for compatibility in kidney matching
is studied in \citep{rao2009comprehensive,kilambi2023evaluation}. However, their focus is on general criteria, not on specific agents (patients). 
Methods for active advice generation for a single agent appear in
\citet{trabelsi2022resource}. A multi-round setting where several
dissatisfied agents are given advice is presented in
\citet{trabelsi2023resource}. While these papers focus on agent satisfaction, our work additionally emphasizes the role of the facilitator.

\noindent
\textbf{Participation in all maximum matchings.}
Costa~\cite{costa1994persistency} presented an algorithm to partition a graph's edge set into three subsets: edges that participate in all maximum matchings, in some maximum matchings, and in none of the maximum matchings. 
Irving et al.~\cite{irving2006rank} 
showed how to efficiently compute the Dulmage–Mendelsohn decomposition~\cite{dulmage1958coverings} of a bipartite graph, enabling direct identification of the set of nodes participating in all maximum matchings. Zhang et al.~\cite{zhang2017efficient} showed another algorithm for this task. However, these works do not consider how changing the graph edges affects this set.

\noindent
\textbf{Modifying the graph for improving the allocation.}
\citet{boehmer2021bribery} considered bribery and external manipulations for providing participation guarantees to an agent pair in stable matchings. \citet{chen2023optimal,gokhale2024capacity} and \citet{bobbio2023capacity} considered adjusting the resource capacities for having a many-to-one matching with some desired properties (e.g., a stable matching). 
However, these works do not consider the criteria of participation in all possible maximum allocations. 
Participation in all maximum allocations is preferred in our context, as the allocator can choose any maximum allocation and is not restricted to any other property (like stability).

\section{Definitions and Problem Formulation}
\label{sec:definitions}
Here, we provide definitions for the one-to-one setting. The extension to the many-to-one case is provided in Section~\ref{sec:mw}.

\subsection{Preliminaries}
\label{sec:prelims}
\paragraph{The setting.} The setting consists of sets~\agset~ of agents,~\rset~ of resources, and~$E\subseteq X\times Y$ of \emph{compatible pairs}.  Here, $(x,y)\in E$ means that agent $x$ is willing to be allocated the resource $y$  (without relaxing her preferences). 
Technically, the triple $G=(X,Y,E)$ is simply a 
bipartite graph. 

\paragraph{Discomfort.} For incompatible pairs $(x,y)\not\in E$, there is a \emph{discomfort function} 
$\rho:\overline{E}\rightarrow \mathbb{R}^+\cup \{\infty\}$, where $\rho(x,y)$ reflects the ``discomfort'' for agent $x$ of relaxing the incompatibility $(x,y)$,  
that is, the discomfort that would be experienced by $x$ if allocated to $y$.  If $\rho(x,y)=\infty$ then $y$ cannot be allocated to $x$, and $(x,y)$ are deemed \emph{totally incompatible}. We let $E_R=\{(x,y) | (x,y)\not\in E, \rho(x,y)<\infty\}$ denote the set of \emph{relaxable} incompatibilities. 

\paragraph{Relaxations and cost bound.}  
Given $(X,Y,E)$ and $\rho$, we seek to relax some of the incompatibilities in order to increase the size of the resulting maximal allocation (produced by the allocation platform). Totally incompatible pairs cannot be relaxed. 
Thus, technically, a relaxation is a set $\hat{F}\subseteq E_R$. We denote
by $X(\hat{F})$ the set of agents that participate in the relaxation
$\hat{F}$. The \emph{aggregate cost} of this relaxation is obtained by
aggregating discomfort induced by $\hat{F}$. We consider two \emph{aggregation functions}:
\begin{itemize}
\item Total Cost: $\Ut(\hat{F}):=\sum_{(x,y)\in \hat{F}}\rho(x,y)$. 
\item Size: $\Si(\hat{F}):= |\hat{F}|$.

\end{itemize}
We assume that the facilitator has a \emph{cost bound}~$\beta$  
on the aggregate cost of the relaxation it chooses.

\paragraph{Allocations.} 
In the one-to-one case, an allocation $M\subseteq F$ is a matching and a maximum allocation is a maximum cardinality matching in the graph $G=(X,Y,F)$. An allocation in the many-to-many, many-to-one, and one-to-many cases are defined similarly.
Given a set $F\subseteq E$, we denote by $\mu(F)$ the size of a maximum 
allocation of $F$,
and by $\Gamma(F)\subseteq X$  the set of agents that participate in all maximum allocations of $F$. If several such allocations exist, one of the solutions is picked
arbitrarily.

\paragraph{Minimal relaxation.} A relaxation $\hat{F}$ is minimal 
if~$\mu(E\cup({\hat{F}\setminus \{e\}}))
< \mu(E\cup{\hat{F}})$, for any $e\in \hat{F}$.

\subsection{Participation Guarantees}
As explained in the introduction, the relaxation advice provided by the facilitator must provide guarantees both to the agents participating in the relaxation and to those not. The followings four guarantees are used for defining three guarantee combinations:
\begin{description}
        \item {\underline{Strong No Harm:}} 
        Any agent that participates in all maximum matchings prior to any relaxation continues to have this benefit after the relaxation:
        $\Gamma(E)\subseteq \Gamma(E\cup \hat{F})$, $\forall \hat{F}\subseteq \hat{E}$. 
        \item {\underline{Strong Benefit to relaxers:}} All relaxing agents are guaranteed to participate in any maximum matching: $X(\hat{F})\subseteq \Gamma(E\cup \hat{F})$, $\forall \hat{F}\subseteq \hat{E}$.
        \item {\underline{Weak No Harm:}}  $\Gamma(E)\subseteq \Gamma(E\cup \hat{E})$ 
        \item {\underline{Weak Benefit to relaxers:}} 
        $X(\hat{E})\subseteq \Gamma(E\cup\hat{E})$         
    \end{description}
    
Based on the above, we define the following three guarantee combinations:
\begin{description}
\item {\underline{Strong No Harm, and Strong Benefit to relaxers (SNH-SB):}}
Participation is guaranteed even if some agents do not follow the facilitator's advice. Using this guarantee is preferred when it is expected that several agents will not comply.
\item {\underline{Weak No Harm, and Weak Benefit to relaxers  (WNH-WB):}}
participation is guaranteed, but only under the assumption that all agents follow the facilitator's advice. 
This guarantee is preferred when all agents are known to comply as it often results in a larger allocation.
\item {\underline{Strong No Harm, and Weak Benefit to relaxers (SNH-WB):}}
the No-harm guarantee holds even with partial compliance, but benefit to relaxers is only guaranteed if all agents comply. In the intermediate case, where we expect very high compliance, the facilitator should weigh the costs of losing the Strong No Harm and/or the strong Benefit to relaxers guarantees and the dissatisfaction of some agents against the benefits of increasing the allocation size and choose one of the three guarantees (See Section~\ref{sec:experiments} for details).
\end{description}

It is easy to see that SNH-SB $\Rightarrow$ SNH-WB $\Rightarrow$ WNH-WB.  The  example in Figure~\ref{fig:examplepcfc} 
shows that the hierarchy is strict.

\begin{figure}[htb]
\centering

\includegraphics[width=0.9\columnwidth]{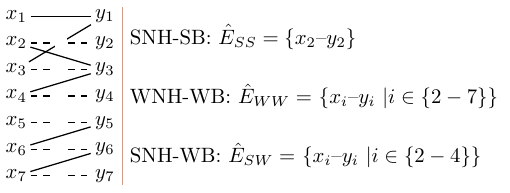}

\caption{$\hat{E}_{SS}$, $\hat{E}_{WW}$ and $\hat{E}_{SW}$ are relaxation sets for SNH-SB, WNH-WB and SNH-WB.  Strictness follows from the fact that $\hat{E}_{SS}\subset \hat{E}_{SW} \subset \hat{E}_{WW}$. 
\label{fig:examplepcfc}}
\end{figure}

\subsection{Problem Statement}
Given the above definitions, the facilitator's optimization problem is the following:
\begin{definition}
    \label{def:prob}
    Given $X,Y,E,\rho$, a participation guarantee $\mbox{PG}\in \{
    \mbox{WNH-WB},\mbox{SNH-WB},\mbox{SNH-SB}\}$, aggregated cost bound $\beta$, and aggregation function $g\in \{\Ut(\cdot),\Si(\cdot)\}$, find a relaxation $\hat{E}$ that is PG,  $g(\hat{E})\leq \beta$, and $\mu(E\cup \hat{E})$ is the maximum among all such relaxations.
\end{definition}

All in all, with three possible participation guarantees, and two aggregation functions, the above defines 
six 
optimization problems.
We use the term ``solution'' to such an optimization problem
to refer to a set of relaxations.

\subsection{An example}
In Figure~\ref{fig:examplefc} we match eight courses with six available classrooms. 
Initially, the course $x_2$ is compatible with the classroom $y_1$, $x_3$ with $y_2$, $x_6$ with $y_4$ and $x_7$ with $y_5$. 
The discomfort levels can vary widely; low for minor inconveniences 
such as the classroom being far from the offices of the tutor 
($x_1$ and~$y_1$) to  major restrictions such as inadequate capacity, in
which case, the tutor might have to make necessary arrangements to 
accommodate extra students ($x_4$ and~$y_3$).
There can be incompatibilities with infinite discomfort. For example, 
students might require hearing accessories, which might be absent in 
the classroom ($x_3$ and~$y_1$).
Figure~\ref{fig:examplefc} demonstrates the difference between the bound types and the different guarantees.
Figure~\ref{fig:examplepcfc} highlights the differences between various participation guarantees.

\section{Algorithmic Solutions}
\label{sec:algorithms}

Here we introduce efficient algorithms for the different guarantees. In Section~\ref{sec:pc}, we present an algorithm for the strong no harm strong benefit guarantee (SNH-SB). In Section~\ref{sec:mc}, we demonstrate how to adapt this algorithm for the strong no harm weak benefit guarantee (SNH-WB). 
\iftoggle{arxiv}
{For space reasons, the proofs for the SNH-WB algorithm, as well as the algorithmic solution for WNH-WB, are deferred to Section~\ref{sec:appalg} of the appendix.}
{For space reasons, some proofs for the SNH-WB algorithm, as well as the algorithmic solution for WNH-WB, are deferred to the full version of the paper~\cite{Trabelsi-etal-ecai-2025}.}

\subsection{Strong no harm strong benefit to relaxers (SNH-SB)}
\label{sec:pc}

We first consider the stronger (and more complex) Strong no harm strong benefit to relaxers guarantee  (SNH-SB). We provide an efficient algorithm 
(Algorithm \ref{alg:core}) for the problem and prove its correctness.
The core of the algorithm is the weight assignment and the maximum weighted matching calculation~(lines 6-8). 
By assigning appropriate weights to the edges, we ensure that the computed relaxation will be SNH-SB and will have the maximum number of allocations among SNH-SB pairs of the graph $G_k$. 
On the loop in lines~3-13, the algorithm adds $k$ dummy agents and connects them to all resources of $Y$. The dummy agents take the places of other agents in the weighted matching and this way bound the total cost for the non-dummy edges. 
The algorithm picks the smallest $k$ for which the total cost of a maximum weighted matching (without the dummy edges) does not exceed the bound and removes the added agents for constructing a solution. 
We will show that the algorithm returns an optimal solution.


\begin{algorithm}[ht]
\caption{SNH-SB Optimization}
\label{alg:core}
\begin{algorithmic}[1]
\Require Given $X,Y,E,E_R,\rho$, aggregated cost bound $\beta$, aggregation function $g\in \{\Ut(\cdot), \Si(\cdot)\}$ and weight function $w:E\cup E_R\rightarrow R^+$

\Ensure A relaxable set $\hat{E}$
\State $G \leftarrow (X,Y,E\cup E_R)$
\State $k_{low}\leftarrow 0, k_{high}\leftarrow |Y|$
\While{$k_{low}\leq k_{high}$}
\State $k = \left\lfloor \frac{k_{\text{low}} + k_{\text{high}}}{2} \right\rfloor$
\State Construct $G_k$ by adding $k$ dummy agents to $G$ and connecting them to all resources of $Y$ with edges $E_k$
\ForAll {$e\in E_k$}
\State $w(e)=1+\sum_{e\in {E \cup E_R}}w(e)$.
\EndFor
\State Compute a maximum weighted matching $M_k$ of $G_k$
\If {$g(M_k\setminus (E \cup E_k))\leq \beta$}
\State $E_{min} \leftarrow E_k$, $M_{min} \leftarrow M_k$
\State $k_{high}\leftarrow k-1$ 
\Else 
\State $k_{low}\leftarrow k+1$
\EndIf 
\EndWhile
\State $\hat{E}=M_{min}\setminus (E\cup E_{min})$
\State \Return $\hat{E}$
\end{algorithmic}
\end{algorithm}

It is important to note that the weights $w$ defined in 
Algorithm~\ref{alg:core} are only used by the facilitator to determine the relaxation $\hat{E}$.  The allocation itself is performed by a third party \emph{with no weights}.  Thus, the facilitator needs to devise appropriate weights that will guarantee the desired properties in the subsequent \emph{weightless} 
allocation. Devising these weight functions and proving their validity for various participation guarantees is the topic of subsequent sections.

\paragraph{The functions $u(\cdot)$.}
We define the function $u(\cdot)$ differently for each aggregation 
function. For the size function, $\Si(\cdot)$, $u_1(e)\equiv 1$ has 
value~1 for each relaxation~$e$. In this way, each relaxed edge has a
smaller weight compared to the weight of an original edge in the maximum
weighted matching computed in line~8 of Algorithm~\ref{alg:core}. 
For the total cost function~$\Ut(\cdot)$, we define~$u_2(e)$ to be the
discomfort level of relaxing~$e$.
Summing up the contributions of all the~$u_2(e)$ values to the weight
function, results in a factor that is equal to the total cost, divided by
some constant.

We define the weight function $w$ as:
\begin{align*}
        w(e) = \begin{cases}
            (|X|+1)^2 & e\in E \\
            |X|+1-\frac{u(e)}{\max_{e'\in E_R}\{ u(e')\}} & e\in E_R
        \end{cases}
\end{align*}

The main motivation behind this definition is to assign high weights to the original edges, ensuring that the maximum matching of the graph includes them as a subgraph even after any relaxations. Building on this, the function aims to increase the overall size of the matching by incorporating relaxable edges according to their costs.

We have the following result
for Algorithm~\ref{alg:core}.
\begin{theorem}
Algorithm~\ref{alg:core} with the weights function $w(\cdot)$ solves the 
SNH-SB optimization problem with aggregate function 
$g\in \{\Ut(\cdot),\Si(\cdot)\}$ and bound $\beta$.
\label{pro:pcfair}
\end{theorem}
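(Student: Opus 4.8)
I would split the argument into \emph{feasibility} of the returned $\hat E$ (it is a subset of $E_R$, it is SNH-SB, and $g(\hat E)\le\beta$) and \emph{optimality} ($\mu(E\cup\hat E)$ is maximum over all feasible relaxations). Two facts are used throughout: $\hat E = M_{min}\setminus(E\cup E_{min})$ is always a matching of relaxable edges (it is a subset of the matching $M_{min}$ of $G_k$), and both aggregation functions $g\in\{\Ut(\cdot),\Si(\cdot)\}$ are monotone under set inclusion. Also, $g(\hat E)\le\beta$ holds by construction, since $\hat E$ is exactly the set whose cost was tested against $\beta$ in the line that set $E_{min},M_{min}$.

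The first ingredient is a \emph{sufficient condition for SNH-SB}: if $\hat F\subseteq E_R$ is a matching and $M_0\cup\hat F$ is a matching for some maximum matching $M_0$ of $(X,Y,E)$, then $\hat F$ is SNH-SB. I would prove this by counting: for every $\hat F'\subseteq\hat F$, $\mu(E\cup\hat F')=\mu(E)+|\hat F'|$ (lower bound from $M_0\cup\hat F'$; upper bound since a matching of $E\cup\hat F'$ has at most $\mu(E)$ edges in $E$ and at most $|\hat F'|$ edges in $\hat F'$). Hence a maximum matching of $E\cup\hat F'$ missing some $x\in\Gamma(E)$ would use at most $\mu(E)-1$ edges of $E$ (a matching of $E$ missing $x$ is non-maximum) plus at most $|\hat F'|$ of $\hat F'$ — too few; and one missing some $x\in X(\hat F')$ cannot use the $\hat F'$-edge at $x$, leaving at most $|\hat F'|-1$ such edges and forcing at least $\mu(E)+1$ edges of $E$ — impossible. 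So both Strong No Harm and Strong Benefit hold for every sub-relaxation.

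Next I would analyze $M_k$ via the magnitudes of $w(\cdot)$. The dummy edges outweigh the total weight of all real edges, so $M_k$ saturates all $k$ dummy agents; an edge of $E$ has weight $(|X|+1)^2$, exceeding the total weight achievable from the $E_R$-edges of any matching, so among matchings saturating the dummies $M_k$ uses the maximum possible number $\nu_k=\min(\mu(E),|Y|-k)$ of $E$-edges, then the maximum number $b_k$ of $E_R$-edges, then the minimum total $u$-cost among those. Writing $B^{*}$ for the largest number of $E_R$-edges addable compatibly to a maximum matching of $E$ (note $B^{*}\le|Y|-\mu(E)$), one gets $b_k=\min(B^{*},|Y|-k-\mu(E))$, and the extracted relaxation $\hat E_k$ has $g(\hat E_k)=\mathrm{mincost}(b_k)$, where $\mathrm{mincost}(b)$ is the least $g$-cost of a $b$-edge matching of $E_R$ compatible with a maximum matching of $E$. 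Since $\mathrm{mincost}(\cdot)$ is non-decreasing (delete edges) and $b_k$ is non-increasing in $k$ with $b_{|Y|-\mu(E)}=0$, the predicate ``$\mathrm{mincost}(b_k)\le\beta$'' is monotone in $k$, so the binary search correctly returns the smallest $k^{*}$ with $g(\hat E_{k^{*}})\le\beta$, and $k^{*}\le|Y|-\mu(E)$. Consequently $M_{k^{*}}$ contains a maximum matching of $E$, so by the sufficiency lemma $\hat E=\hat E_{k^{*}}$ is SNH-SB with $g(\hat E)\le\beta$ and $\mu(E\cup\hat E)=\mu(E)+b_{k^{*}}$; and since $b_k$ takes every value in $\{0,\dots,B^{*}\}$ as $k$ ranges over $[0,|Y|]$, $b_{k^{*}}=\max\{b\le B^{*}:\mathrm{mincost}(b)\le\beta\}$. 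This settles feasibility and shows $\hat E$ is best possible among \emph{canonical} relaxations (matchings compatible with a maximum matching of $E$) within the budget.

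The last and hardest step is to rule out that a non-canonical SNH-SB relaxation does better: I would show that any SNH-SB $\hat F$ with $\mu(E\cup\hat F)=\mu(E)+b$ satisfies $b\le B^{*}$ and $\mathrm{mincost}(b)\le g(\hat F)$, so $b\le b_{k^{*}}$ and hence $\mu(E\cup\hat F)\le\mu(E\cup\hat E)$. To do this I would pass to a minimal sub-relaxation $\hat F^{*}\subseteq\hat F$ with $\mu(E\cup\hat F^{*})=\mu(E)+b$; minimality forces $\hat F^{*}$ to lie inside a maximum matching $N$ of $E\cup\hat F^{*}$, hence $\hat F^{*}$ is a matching, and then I would argue $N\cap E$ is a maximum matching of $E$: otherwise an $E$-augmenting path $P$ for $N\cap E$ exists, and since $N$ is maximum in $E\cup\hat F^{*}$ some endpoint of $P$ is covered by an $\hat F^{*}$-edge $e_1$; combining the Benefit-to-relaxers guarantee for $\hat F^{*}$ and for $\hat F^{*}\setminus\{e_1\}$ with an alternating-path exchange along $P$ (extended through the blocking edge(s)) yields either a maximum matching of $E\cup\hat F^{*}$ avoiding $e_1$ (contradicting minimality) or a larger matching of $E\cup\hat F^{*}$ (a contradiction). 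Then $\hat F^{*}$ is canonical with $|\hat F^{*}|=b$, giving $b\le B^{*}$ and $\mathrm{mincost}(b)\le g(\hat F^{*})\le g(\hat F)$. I expect this exchange argument — in particular handling the case where \emph{both} endpoints of the augmenting path are blocked by relaxed edges — to be the main obstacle; the weight-magnitude bookkeeping and the binary-search monotonicity should be routine once the structural characterization is in hand.
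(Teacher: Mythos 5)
Your proposal is correct and follows essentially the same architecture as the paper's proof: the weight magnitudes force $M_k$ to saturate the dummies and contain a maximum matching of $E$ (the paper's Lemmas~\ref{lem:mwc} and \ref{lem:mwc2}), then pack a maximum number of minimum-$u$-cost relaxable edges; your counting argument that a ``canonical'' relaxation (a matching of $E_R$ compatible with a maximum matching of $E$) is SNH-SB is exactly the paper's proof of Lemma~\ref{lem:mkpc}(1)--(2); and your reduction of an arbitrary SNH-SB competitor to a minimal canonical one is the paper's Lemma~\ref{lem:inc} ($\mu(E\cup\hat F^{*})=\mu(E)+|\hat F^{*}|$ for minimal SNH-SB $\hat F^{*}$). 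The one place you genuinely diverge is the proof of that structural lemma: the paper argues by induction on $|\hat F^{*}|$ via three auxiliary alternating-path claims, whereas you propose a direct augmenting-path exchange and flag the ``both endpoints blocked'' case as the main obstacle. That case does close with the tool you name. First, minimality alone gives $\hat F^{*}\subseteq N$ for \emph{every} maximum matching $N$ of $E\cup\hat F^{*}$ (a maximum matching avoiding $e$ would witness $\mu(E\cup\hat F^{*}\setminus\{e\})=\mu(E\cup\hat F^{*})$). Then, if the $E$-augmenting path $P$ for $N\cap E$ has endpoints blocked by $e_1=(x_0,y'')$ and $e_2=(x'',y_\ell)$ in $\hat F^{*}$, the matching $N'=(N\setminus\{e_1,e_2\})\triangle P$ has size $|N|-1$, which by minimality equals $\mu(E\cup\hat F^{*}\setminus\{e_1\})$; so $N'$ is a maximum matching of $E\cup\hat F^{*}\setminus\{e_1\}$ leaving $x''$ free, contradicting Strong Benefit applied to the sub-relaxation $\hat F^{*}\setminus\{e_1\}$. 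So your route is sound and arguably more direct than the paper's induction. A further point in your favour: you make explicit why the predicate tested by the binary search is monotone in $k$ (non-increasing $b_k$ composed with non-decreasing $\mathrm{mincost}$, and $k^{*}\le |Y|-\mu(E)$), a step the paper's proof leaves implicit.
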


The runtime of one iteration of the loop, starting in line 3 of Algorithm~\ref{alg:core}, is dominated by the weighted matching calculation, which can be done in $O(\max(|X|,|Y|)^3)$ time using the Hungarian algorithm. The loop repeats $O(\log(|Y|)$ times and therefore the total running time is $O(\log(|Y|) * \max(|X|,|Y|)^3)$. 

We now outline the proof of the theorem. 
\iftoggle{arxiv}
{The full proof appears in Section~\ref{sec:appalg} of the appendix.}
{The full proof appears in an extended version
of this paper~\cite{Trabelsi-etal-ecai-2025}.} 
Recall that for an edge set~$F$, 
$\mu(F)$ is the size of a maximum allocation of~$F$. Recall also the definition of a minimal relaxation~(Section~\ref{sec:prelims}).
To provide the strong no harm strong benefit guarantee (SNH-SB), we must
provide two participation guarantees, namely no-harm and benefit to relaxers, 
even when some agents do not follow the facilitator's advice. 
From definitions of the SNH-SB guarantees, it follows that for each 
SNH-SB relaxation~$\hat{E}$, any subset of $\hat{E}$ also provides the 
SNH-SB guarantees. Therefore, any minimal relaxation 
$\hat{E}'\subseteq \hat{E}$ for which $\mu(\hat{E}')=\mu(\hat{E})$ should also be
SNH-SB. Therefore, it suffices for Algorithm~\ref{alg:core} to consider relaxations which are  minimal. We will show in Lemma~\ref{lem:inc} that for all minimal SNH-SB relaxations, $\mu(E\cup \hat{E}) = \mu(E) +|\hat{E}|$. Therefore, any maximum matching of $(X,Y,E\cup \hat{E})$ must contain a maximum matching of $(X,Y,E)$ as a subgraph.  
We filter out all solutions without these properties by modifying the weight function and assigning significantly greater weights to the $E$ edges.

We show in Lemma~\ref{lem:mkpc} that Algorithm~\ref{alg:core} indeed filters out all non SNH-SB relaxations (see the proofs of Lemma~\ref{lem:mkpc}(1) and (2)) and that the returned relaxation maximizes the allocation size 
\iftoggle{arxiv}
{(see Lemmas~\ref{lem:mkpc}(3) and (4) and also Lemma~\ref{lem:gequ} in Section~\ref{sec:appalg} of the appendix.}
{(see Lemmas~\ref{lem:mkpc}(3) and (4) and also Lemma~\ref{lem:gequ} in the extended version of the paper~\cite{Trabelsi-etal-ecai-2025}).}
In the following lemmas, we denote $k_{\min}$ as the value of $k$, used in line 14 to define $M_{\min}$ and $E_{min}$. 
The values of $M_{\min}$ and $E_{min}$ in line 14 are denoted as $M_{k_{\min}}$ and $E_{k_{min}}$.

\begin{lemma}
 The set, $\hat{E}$ returned by  Algorithm~\ref{alg:core} has the following properties:
\begin{enumerate}[label={(\arabic*)}]
    \item It provides the strong benefit for relaxers.
    \item It provides the strong no harm guarantee.
    \item For any SNH-SB set  $\hat{F}\subseteq E_R$,  $|M_{k_{min}}|\geq \mu(E\cup E_{k_{min}}\cup \hat{F})$.
    \item For any SNH-SB set $\hat{F}\subseteq E_R$ and for any aggregation function $g$, if $|M_{k_{min}}|=\mu(E\cup E_{k_{min}} \cup \hat{F} )$, then $\hat{E} \preceq \hat{F}$.        
\end{enumerate}
\label{lem:mkpc}
\end{lemma}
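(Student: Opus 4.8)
The plan is to establish the four claims in sequence, using the special structure imposed by the weight function $w$, and in particular the fact (to be proved in Lemma~\ref{lem:inc}, which we may assume) that every minimal SNH-SB relaxation $\hat E$ satisfies $\mu(E\cup\hat E)=\mu(E)+|\hat E|$. The first step is to understand $M_{k}$ precisely. Because $w(e)=(|X|+1)^2$ for $e\in E$ while every $E_R$-edge has weight strictly less than $|X|+1$ and every dummy edge in $E_k$ has weight $1+\sum_{e\in E\cup E_R}w(e)$, a maximum weighted matching $M_k$ must (i) saturate as many dummy agents as possible, i.e. use exactly $\min(k,|Y|)$ dummy edges, then (ii) among the remaining resources, contain a maximum-cardinality matching of $G=(X,Y,E\cup E_R)$ restricted to those resources, and then (iii) among those, maximize the total $w$-weight, which — since $E$-edges dominate — first maximizes the number of $E$-edges used and only then, as a tie-break, minimizes $\sum_{e\in M_k\cap E_R} u(e)/\max_{e'}u(e')$, i.e. minimizes $g$ of the relaxation. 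I would prove this ``lexicographic'' reading of $M_k$ as a preliminary claim, since all four parts hinge on it.

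For part (1) and part (2): writing $\hat E = M_{k_{\min}}\setminus(E\cup E_{k_{\min}})$, the lexicographic structure forces $M_{k_{\min}}\cap E$ to be a maximum matching $M^\star$ of $G=(X,Y,E)$ on the resources not taken by dummy edges, and forces $\hat E$ to be minimal (otherwise dropping a redundant relaxable edge would not decrease cardinality but would decrease cost, contradicting weight-maximality — or one could route through a dummy edge of huge weight). Minimality plus Lemma~\ref{lem:inc} gives $\mu(E\cup\hat E)=\mu(E)+|\hat E|$, so every relaxable edge of $\hat E$ is ``used'': in every maximum matching of $E\cup\hat E$, all endpoints of $\hat E$ are matched and a maximum matching of $E$ is present as a subgraph. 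Standard alternating-path/König-type reasoning (an agent is in $\Gamma(F)$ iff it is matched in some — hence, by the decomposition, the relevant structural — maximum matching) then yields both $X(\hat E)\subseteq\Gamma(E\cup\hat E)$ and $\Gamma(E)\subseteq\Gamma(E\cup\hat E)$; and since any subset of an SNH-SB set is SNH-SB (noted in the text), the ``$\forall\hat F\subseteq\hat E$'' quantifier comes for free. The cleanest route is to first prove $\hat E$ is a minimal SNH-SB relaxation and then invoke the already-stated definitional monotonicity.

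For parts (3) and (4): let $\hat F\subseteq E_R$ be any SNH-SB relaxation; WLOG (again by definitional monotonicity) take $\hat F$ minimal, so $\mu(E\cup\hat F)=\mu(E)+|\hat F|$ by Lemma~\ref{lem:inc}. From a maximum matching of $(X,Y,E\cup E_{k_{\min}}\cup\hat F)$ one builds a $w$-matching of $G_{k_{\min}}$ of weight at least $\min(k_{\min},|Y|)\cdot w_{\mathrm{dummy}} + (\text{value of }M_{k_{\min}}\text{'s non-dummy part})$, using that $M_{k_{\min}}$ already achieves the maximum over non-dummy configurations; comparing cardinalities of the non-dummy parts and using weight-maximality of $M_{k_{\min}}$ gives $|M_{k_{\min}}|\ge\mu(E\cup E_{k_{\min}}\cup\hat F)$ — that is part (3). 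For part (4), suppose equality holds; then $\hat F$ realizes the same non-dummy cardinality as $M_{k_{\min}}$, so a matching built from $\hat F$ competes in the same cardinality class as $M_{k_{\min}}$, and weight-maximality of $M_{k_{\min}}$ forces $\sum_{e\in\hat E}u(e)\le\sum_{e\in\hat F}u(e)$ (for $g=\Ut$) or $|\hat E|\le|\hat F|$ (for $g=\Si$); in either case $g(\hat E)\le g(\hat F)$, which (after fixing the exact meaning of $\preceq$, presumably ``$g(\hat E)\le g(\hat F)$'', possibly with a lexicographic refinement) is $\hat E\preceq\hat F$. I expect the main obstacle to be part (4) together with the binary-search bookkeeping: one must argue that the $k$ selected by the binary search is genuinely the smallest feasible $k$ and that feasibility is monotone in $k$ (more dummy agents can only lower the attainable cost of the forced relaxation), so that $k_{\min}$ is well-defined and $E_{k_{\min}}$ is the right comparison point; and one must handle the precise definition of the order $\preceq$ (lexicographic vs. plain $g$-comparison) consistently across both aggregation functions. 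The weight-accounting in the ``lexicographic reading of $M_k$'' is the technical heart — everything else is alternating-path bookkeeping — so I would invest the most care in stating and proving that preliminary claim cleanly, including the bound $w_{\mathrm{dummy}} > \sum_{e\in E\cup E_R} w(e)$ that makes dummy edges strictly dominate any set of real edges, and $(|X|+1)^2 > |X|\cdot(|X|+1)$ that makes $E$-edges strictly dominate any set of $E_R$-edges of the same size.
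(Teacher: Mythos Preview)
Your high-level strategy is right and matches the paper for parts~(3) and~(4), but there are two genuine problems in parts~(1)--(2).

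First, your ``lexicographic reading'' of $M_k$ has steps (ii) and (iii) in the wrong order. Because $(|X|+1)^2 > |X|\cdot(|X|+1)\ge \sum_{e\in M_k\cap E_R}w(e)$, a single $E$-edge outweighs \emph{any} collection of $E_R$-edges; hence $M_k$ first maximizes the number of $E$-edges, and only then (among those) the number of $E_R$-edges. In particular $M_{k_{\min}}$ need \emph{not} be a maximum-cardinality matching of $E\cup E_R$ on the non-dummy resources. The conclusion you actually need---$|M_{k_{\min}}\cap E|=\mu(E)$ (the paper's Lemma~\ref{lem:mwc})---is still true, but it follows from the corrected order, not the one you stated.

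Second, and more seriously, your derivation of the strong guarantees is circular. You invoke Lemma~\ref{lem:inc} to obtain $\mu(E\cup\hat E)=\mu(E)+|\hat E|$, but Lemma~\ref{lem:inc}'s hypothesis is that $\hat E$ is a minimal \emph{SNH-SB} relaxation---precisely what you are trying to prove. And the ``$\forall\hat F\subseteq\hat E$'' quantifier does \emph{not} come for free from subset-closure: that closure says ``if $\hat E$ is SNH-SB then every subset is'', not ``if the weak ($\hat F=\hat E$) case holds then the strong case holds''. The paper avoids both traps by arguing directly for \emph{each} $\hat F\subseteq\hat E$: since $(M_{k_{\min}}\cap(E\cup E_{k_{\min}}))\cup\hat F$ is a matching of size $k_{\min}+\mu(E)+|\hat F|$, and no matching in $E\cup E_{k_{\min}}\cup\hat F$ can exceed $k_{\min}+\mu(E)+|\hat F|$, every maximum matching of that graph must contain all of $\hat F$ and restrict to a maximum matching of $E$; removing the dummy agents (Lemma~\ref{lem:remove}) finishes. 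This never appeals to Lemma~\ref{lem:inc}; that lemma is reserved for parts~(3)--(4), where it is applied to the \emph{given} SNH-SB set $\hat F$, exactly as you do.
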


Set $w^*=|X|+1$ and let $\hat{E}=M_{k_{min}}\setminus E$ 
be the relaxation returned by the algorithm.

\begin{proofideaof}[lemmas ~\ref{lem:mkpc}(1) and ~\ref{lem:mkpc}(2)]
Consider $\hat{F}\subseteq \hat{E}$. Let $\bar{M}=M_{k_{min}}\cap E$.  
We will first show that  $|\bar{M}_{\hat{F}}| =k_{min} + \mu(E)+|\hat{F}|$.  So, (i) $|\bar{M}_{\hat{F}}\cap E|=\mu(E)$, and (ii) $|\bar{M}_{\hat{F}}\cap \hat{F}|=|\hat{F}|$.  So, by (i) $\bar{M}_{\hat{F}}\cap E$ is a maximum matching of $(X,Y,E)$, so $\Gamma(E)\subseteq \bar{M}_{\hat{F}}$, and by (ii) $\hat{F}\subseteq \bar{M}_{\hat{F}}$. 
Since it is correct for any $\hat{F}$, $\hat{E}$ has the no-harm and benefit for relaxers SNH-SB guarantees with respect to $(X\cup X_{k_{min}},Y,E\cup E_{k_{min}}\cup \hat{E})$.

By construction of $E_{k_{min}}$, all agents $X_{k_{min}}$ participate in all maximum matchings of $(X\cup X_{k_{min}},Y,E\cup E_{k_{min}}\cup \hat{F})$. 
It can be shown that $\hat{E}$ has also the SNH-SB guarantees also with respect to $(X,Y,E\cup\hat{E})$ as requested.
\end{proofideaof}

\begin{proofideaof}[lemma ~\ref{lem:mkpc}(3)]

For purposes of contradiction, suppose that there is another SNH-SB set $\hat{F}$, for which the maximum matching of $(X\cup X_{k_{min}},Y,E \cup \hat{F} \cup E_{k_{min}})$ is of greater size. 

We set $\hat{F}'\subseteq \hat{F}$ as an inclusion minimal subset of $\hat{F}$ for which $\mu(\hat{F}')=\mu(\hat{F})$.
In particular, since $\hat{F}'$ is inclusion minimal,  $\mu(E\cup E_{k_{min}}\cup{\hat{F}})>\mu(E\cup E_{k_{min}}\cup{\hat{F}}\setminus \{e\}))$.
So, the conditions of Lemma \ref{lem:inc} hold for $(E\cup E_{k_{min}})$ and $\hat{F}'$. 
Let $\bar{M}_{\hat{F}'}$ be a maximum matching of $(X,Y,E\cup E_{k_{min}}\cup{\hat{F}})$. 
So, since $M_{k_{min}}$ is a matching of $(X\cup X_{k_{min}},Y,E\cup E_{k_{min}} \cup \hat{E})$, $|\bar{M}_{\hat{F}'}|> |M_{k_{min}}|$.
So,
\begin{align*}
    |\bar{M}_{\hat{F}'}\cap \hat{F}'| ~=~ & |\bar{M}_{\hat{F}'} \setminus E | = |\bar{M}_{\hat{F}'}| - |\bar{M}_{\hat{F}'}\cap E| \\
    ~=~ &  |\bar{M}_{\hat{F}'}| - \mu(E) \;\; \text{by Lemma~\ref{lem:inc}} \\
    ~>~ & |M_{k_{min}}| - \mu(E) \;\; \text{by assumption} \\
    ~=~ & |M_{k_{min}}| - |M_{k_{min}}\cap E| \;\; \text{by Lemma~\ref{lem:mwc}} \\
    ~=~ & |M_{k_{min}} \cap \hat{E}| +|M_{k_{min}} \cap E_{k_{min}}|
\end{align*}
\iftoggle{arxiv}
{We will show in Section~\ref{sec:appalg} of the appendix that}
{Using this equation, it can be shown that}
\begin{align}
w(M_{k_{min}}) ~<~ w(\bar{M}_{\hat{F}'}) 
\end{align}
This contradicts the maximality of $w(M_{k_{min}})$.
\end{proofideaof}

\begin{proofof}[lemma ~\ref{lem:mkpc}(4)]
Let $\hat{F}$ be a SNH-SB relaxation with maximum allocation when adding $k_{min}$ dummy agents and for which $g(\hat{F})\leq \beta$.

First note that since $\hat{F}$ is SNH-SB so is any subset of $\hat{F}$. So, for any $e\in \hat{F}$, $\mu(E\cup{\hat{F}})>\mu(E\cup{\hat{F}\setminus \{e\}})$, or else  $\hat{F}\setminus \{e\}$ is a SNH-SB set with allocation of the same size that preceding $\hat{F}$ in $\preceq$ (Since $u(\cdot)$ is strictly positive).

For $e\in E_R$, set $u'(e)=\frac{u(e)}{\max_{e'\in E_R}\{ w(e')\}}$.  So, $u'$ is also a representation of $\preceq$, and
\begin{align*}
        w(e) = \begin{cases}
            (w^*)^2  & e\in E \\
            w^*-u'(e) & e\in E_R
        \end{cases}
\end{align*}
We have already established that
\begin{align*}
\mu(E\cup E_{k_{min}} \cup \hat{E}) &=\mu(E)+ k_{min}+ |\hat{E}|.   
\end{align*}
And by Lemma~\ref{lem:inc}
\begin{align*}
\mu(E\cup E_{k_{min}}\cup \hat{F})=\mu(E)+ k_{min}+ |\hat{F}|.  
\end{align*}
So, since both $\hat{E}$ and $\hat{F}$ have allocation of the same size, $|\hat{E}|=|\hat{F}|$.
By Lemma~\ref{lem:mwc}
(for the first equality) and Lemma~\ref{lem:inc} (for the second) we have 
\iftoggle{arxiv}{(Details are in Section~\ref{sec:appalg} of the appendix.)}
{(see the extended version of the paper~\cite{Trabelsi-etal-ecai-2025} for details)}:
\begin{align*}
w(M_{k_{min}})& ~=~\\ &w(M_{k_{min}}\cap    
    E_{k_{min}}) -u'(\hat{F})+|\hat{F}|w^*+\mu(E)\cdot (w^*)^2 
\end{align*}
By construction $w(M_{k_{min}})\geq w(\bar{M}_{\hat{F}})$.  So, since $|\hat{E}|=|\hat{F}|$, $u'(\hat{E})\leq u'(\hat{F})$.  So, $\hat{E}\preceq\hat{F}$.
\end{proofof}

The two lemmas below are used 
 for proving Lemma~\ref{lem:mkpc}:
\begin{lemma}
\label{lem:inc}
Let $\hat{E}$ be a minimal SNH-SB relaxation. 
Then $\mu(E\cup \hat{E}) = \mu(E) +|\hat{E}|$. 
\end{lemma}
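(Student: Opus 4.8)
The plan is to prove the identity $\mu(E\cup\hat{E}) = \mu(E) + |\hat{E}|$ for a minimal SNH-SB relaxation $\hat{E}$ by establishing the two inequalities separately, with the ``$\le$'' direction being essentially a combinatorial fact about matchings and the ``$\ge$'' direction being where the SNH-SB hypothesis does the real work.

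First I would dispose of the easy direction, $\mu(E\cup\hat{E}) \le \mu(E) + |\hat{E}|$. Take any maximum matching $M$ of $(X,Y,E\cup\hat{E})$ and split it as $M = (M\cap E) \cup (M\cap\hat{E})$. Then $|M\cap E| \le \mu(E)$ since $M\cap E$ is a matching using only original edges, and $|M\cap\hat{E}| \le |\hat{E}|$ trivially. Adding these gives the bound. This step requires nothing beyond the definitions.

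The substantive direction is $\mu(E\cup\hat{E}) \ge \mu(E) + |\hat{E}|$, and here I would argue by induction on $|\hat{E}|$, peeling off one relaxed edge at a time and using minimality together with the Strong Benefit guarantee. The key observation is this: let $e = (x,y) \in \hat{E}$ and consider $\hat{E}' = \hat{E}\setminus\{e\}$. Minimality of $\hat{E}$ gives $\mu(E\cup\hat{E}) > \mu(E\cup\hat{E}')$, so $\mu(E\cup\hat{E}) = \mu(E\cup\hat{E}') + 1$ (the size of a maximum matching can increase by at most one when a single edge is added). Now I want to invoke the Strong Benefit guarantee. Since $\hat{E}$ is SNH-SB and $x \in X(\hat{E})$, we have $x \in \Gamma(E\cup\hat{E})$, i.e.\ $x$ is matched in \emph{every} maximum matching of $E\cup\hat{E}$; moreover the SNH-SB property is closed under taking subsets of the relaxation (as the excerpt notes), so $\hat{E}'$ is also SNH-SB, and $\hat{E}'$ is minimal as well (removing any further edge from $\hat{E}'$ drops $\mu$ because it did so from $\hat{E}$ — this needs a small check, but follows since $\mu(E\cup\hat{E}') = \mu(E\cup\hat{E}) - 1$ and each edge of $\hat E'$ was ``used up'' in $\hat E$). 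Applying the induction hypothesis to $\hat{E}'$ yields $\mu(E\cup\hat{E}') = \mu(E) + |\hat{E}'| = \mu(E) + |\hat{E}| - 1$, and combining with $\mu(E\cup\hat{E}) = \mu(E\cup\hat{E}') + 1$ gives exactly $\mu(E) + |\hat{E}|$. The base case $|\hat{E}| = 0$ is trivial.

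The main obstacle I anticipate is verifying the hereditary-minimality claim cleanly — that is, showing that if $\hat{E}$ is minimal then so is each $\hat{E}\setminus\{e\}$, so the induction goes through. One has to rule out the possibility that removing a \emph{second} edge $e'$ from $\hat{E}\setminus\{e\}$ leaves $\mu$ unchanged even though removing $e'$ from $\hat{E}$ did change it. The way I would handle this is to avoid induction-on-subsets altogether and instead argue directly: show that one can order the edges of $\hat E = \{e_1,\dots,e_m\}$ so that $\mu(E\cup\{e_1,\dots,e_i\}) = \mu(E) + i$ for all $i$, by repeatedly using that $\hat E$ is minimal (hence every $e_j$ lies on some augmenting path in $E\cup\hat E$) — or, more robustly, lean on Lemma~\ref{lem:mwc} / the structural facts the paper has already set up about maximum matchings of $E\cup\hat E$ containing a maximum matching of $E$ as a subgraph. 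If those downstream lemmas are in fact proved independently, the cleanest route is: use Strong No Harm to get $\Gamma(E)\subseteq\Gamma(E\cup\hat E)$, pick a maximum matching $M$ of $E\cup\hat E$, note every vertex of a maximum matching of $E$ is matched by $M$, and then a counting/exchange argument on $M\cap\hat E$ against the augmenting structure forces $|M\cap\hat E| = |\hat E|$ and $|M\cap E| = \mu(E)$. I would present the induction version as the primary proof and flag the minimality-closure step as the one deserving an explicit line of justification.
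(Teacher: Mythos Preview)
Your induction framework matches the paper's, and you correctly isolate the hereditary-minimality step as the crux. But none of the fixes you propose actually closes that gap, and the paper's proof shows it is \emph{not} a ``small check''.

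Concretely: from minimality of $\hat E$ you correctly get that every edge of $\hat E$ lies in every maximum matching of $E\cup\hat E$, hence $\hat E$ is itself a matching. But knowing that $e'\in\hat E'$ lies in every maximum matching of $E\cup\hat E$ does \emph{not} imply it lies in every maximum matching of the smaller graph $E\cup\hat E'$: once you delete $e$, new maximum matchings (of size $\mu(E\cup\hat E)-1$) can appear that were not obtained by deleting $e$ from a maximum matching of $E\cup\hat E$, and one of those new matchings could omit $e'$. Your parenthetical ``follows since $\mu(E\cup\hat E')=\mu(E\cup\hat E)-1$ and each edge of $\hat E'$ was used up in $\hat E$'' does not rule this out. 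Your fallback of leaning on Lemma~\ref{lem:mwc} or on ``maximum matchings of $E\cup\hat E$ contain a maximum matching of $E$'' is circular: Lemma~\ref{lem:mwc} concerns the algorithm's specific output $M_{k_{\min}}$, not an arbitrary SNH-SB relaxation, and the containment fact is itself derived \emph{from} Lemma~\ref{lem:inc} in the paper's logic.

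The paper handles exactly this situation as its Case~2. Suppose $\hat E'=\hat E\setminus\{(x,y)\}$ is not minimal, witnessed by $(x',y')$ with $\mu(E\cup\hat E\setminus\{(x,y)\})=\mu(E\cup\hat E\setminus\{(x,y),(x',y')\})$. The Strong Benefit guarantee (applied to the subset $\hat E\setminus\{(x,y)\}$, which is still SNH-SB) forces $x'\in\Gamma(E\cup\hat E\setminus\{(x,y)\})$. On the other hand, since removing $(x,y)$ drops $\mu$, the agent $x$ is \emph{not} in $\Gamma(E\cup\hat E\setminus\{(x,y)\})$, so by the standard alternating-path characterization there is an even alternating path from $x$ to a free vertex; crucially, because $x'$ is in $\Gamma$, this path avoids $x'$. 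One then argues this same path persists after also deleting $(x',y')$, so $x\notin\Gamma(E\cup\hat E\setminus\{(x,y),(x',y')\})$. Reinserting $(x,y)$ now yields a contradiction either with minimality of $\hat E$ (if $x$ becomes guaranteed, the matching size would have to grow) or with the Strong Benefit guarantee for $x$ in the subset $\hat E\setminus\{(x',y')\}$ (if $x$ does not become guaranteed). This is where the SB hypothesis does its real work --- not in your inductive step, but in ruling out Case~2 altogether.
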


The following three results are used in proving
Lemma~\ref{lem:inc}. In
these results, $G=(X,Y,E)$ is a bipartite graph.

\begin{description} 
\item{(1)}
Let $x\in \Gamma(E)$. Let $(x,y)$ be an edge in $E_R$.
Then $\mu(E)=\mu(E\cup \{(x,y)\})$.

\begin{proofideaof}[(1)]

We first note that since $E\subseteq E\cup \{(x,y)\}$, it follows that $\mu(E)\leq \mu(E\cup \{(x,y)\}$. 
We assume for contradiction that $\mu(E)<\mu(E\cup \{(x,y)\})$ and show that it contradicts the participation of $x$ in all maximum matchings of $G=(X,Y,E)$.
\end{proofideaof}

\smallskip 

\item{(2)}
Let $x$ be an agent that 
does not participate in all maximum matchings of $G$ and
$(x,y)\in E_R \notin E$.
If $x$ participates in all maximum matchings of 
$(X, Y, E\cup \{(x,y)\})$,  then 
$\mu(E\cup \{(x,y)\})=\mu(E)+1$.

\begin{proofof}[(2)]
    We assume that $x$ participates in all maximum matchings of $(\agset, \rset, E\cup \{(x,y)\})$ and prove that $\mu(E)+1=\mu(E\cup \{(x,y)\})$.
    Since $(X,Y,E\cup \{(x,y)\})$ is supergraph of $G$, 
    $\mu(E)\leq \mu(E\cup \{(x,y)\})$. On the other hand, from definition of matching,
    $\mu(E\cup \{(x,y)\})\leq \mu(E)+1$.
    Therefore, either $\mu(E\cup \{(x,y)\}) = \mu(E)$ or $\mu(E\cup \{(x,y)\}) = \mu(E)+1$. Assume for contradiction that $\mu(E\cup \{(x,y)\}) = \mu(E)$. In this case, all maximum matchings of $G$ are also maximum matchings of $(X,Y,E\cup \{(x,y)\})$. From the lemma's definitions, there is a maximum matching of $G$ in which $x$ does not participate. Since it is also a maximum matching of $(X,Y,E\cup \{(x,y)\})$, $x$ does not participate in all maximum matchings of $(X,Y,E\cup \{(x,y)\})$, a contradiction.
\end{proofof}

\smallskip 

\item{(3)} 
A node $x \in X$ does not participate in all maximum matchings of $G$ iff there is an even-length alternating path between $x$ and a free node with respect to any maximum matching of $G$.
(This well-known result in matching
theory~\citep{Lovasz-Plummer-1986}; 
\iftoggle{arxiv}{it is proved as
Claim~\ref{cl:result_3} in the appendix for completeness. The definitions for alternating paths and free nodes are also given in the appendix.}
{(A proof of this result and the definitions for alternating paths and free nodes are also given in the extended version 
of the paper~\cite{Trabelsi-etal-ecai-2025}.)}
\end{description}

\hspace*{0.25em}
We use the above three results to prove 
Lemma~\ref{lem:inc}.

\begin{proofof}[Lemma~\ref{lem:inc}]
We recall that an edge set $\hat{E}$ is minimal if for any edge $e\in \hat{E}$,
$\mu(E\cup{\hat{E}\setminus \{e\}})<\mu(E\cup{\hat{E}})$.
The proof is by induction on $|\hat{E}|$.
The basis is for $|\hat{E}|=1$, where the lemma is fulfilled trivially.
We now assume that the lemma holds for $|\hat{E}|=q$ 
and prove it for $|\hat{E}|=q+1$.

Let $|\hat{E}|$ be a SNH-SB relaxation and assume that for any edge $e\in \hat{E}$,
$\mu(E\cup{\hat{E}\setminus \{e\}})<\mu(E\cup{\hat{E}})$.
Let $x$ be any agent in $X(\hat{E})$, and let $(x,y)\in\hat{E}$ be the corresponding edge in $\hat{E}$. 
It follows that:
\begin{align}\label{eqn:lem_inc1}
\mu(E\cup{\hat{E}\setminus \{(x,y)\}}) + 1 = \mu(E\cup{\hat{E}}) 
\end{align}

\smallskip

\noindent
\underline{Case 1:}~ The set $\hat{E}\setminus \{(x,y)\}$ is minimal: Here,
by induction hypothesis, $\mu(E\cup{\hat{E}\setminus \{(x,y)\}}) = \mu(E)+q$.
Therefore, if we merge this condition with Equation~\eqref{eqn:lem_inc1}, we get 
$\mu(E\cup{\hat{E}}) = \mu(E\cup{\hat{E}\setminus \{(x,y)\}}) +1 = \mu(E)+q+1$ as required.

\smallskip
\noindent
\underline{Case 2:}~ The set $\hat{E}\setminus \{(x,y)\}$ is not minimal:
In this case, there is an edge $(x',y')\in\hat{E}\setminus \{(x,y)\}$ such that
$\mu(E\cup \hat{E}\setminus \{(x,y)\})=\mu(E\cup \hat{E}\setminus \{(x,y),(x',y')\})$.
Let $M$ be a maximum matching of $(X,Y,E\cup \hat{E}\setminus \{(x,y)\})$ without $(x',y')$. (Such a matching exists because of 
Equation~\eqref{eqn:lem_inc1} above.) 

We note that since $\mu(E\cup{\hat{E}})>\mu(E\cup \hat{E}\setminus \{(x,y)\})$, Result~(2) above
implies that $x$ must not participate in all maximum matchings of $(X,Y,E\cup \hat{E}\setminus \{(x,y)\})$.
Therefore, it follows from Result~(3) above that there is 
an even-length alternating path between $x$ and a free agent concerning any maximum matching of $(X,Y,E\cup \hat{E}\setminus \{(x,y)\})$.
On the other hand, since $\hat{E}$ has the SNH-SB benefit to relaxers, $x'$ must participate in all maximum matchings of $(X,Y,E\cup \hat{E}\setminus \{(x,y)\})$.
Therefore, by Result~(3) above, $x'$ does not appear in any even length alternating path between $x$ and a free vertex in any maximum matching of $(X,Y,E\cup \hat{E}\setminus \{(x,y)\})$.

Since $M$ is also a maximum matching of $(X,Y,E\cup \hat{E}\setminus \{(x,y),(x',y')\})$, an alternating path of even length between $x$ and a free vertex of $M$ exists in $(X,Y,E\cup \hat{E}\setminus \{(x,y),(x',y')\})$ and therefore $x$ does not participate in all maximum matchings of $(X,Y,E\cup \hat{E}\setminus \{(x,y),(x',y')\})$.

We now add $(x,y)$ to $(X,Y,E\cup \hat{E}\setminus \{(x,y),(x',y')\})$ and get $(X,Y,E\cup \hat{E}\setminus \{(x',y')\})$.
If $x$ participates in all maximum matchings of $(X,Y,E\cup \hat{E}\setminus \{(x',y')\})$, then by Result~(1) above, the matching size must increase, contradicting the fact that the original set is minimal. Otherwise, $x$ does not participate in all maximum matchings after $x$ relaxes its restrictions, contradicting the SNH-SB.
Therefore, we conclude that no such $x'$ exists, and we are done. 
\end{proofof}

Lemma~\ref{lem:mwc} below is based on the definition of $w(\cdot)$.
\begin{lemma}
$|M_{k_{min}}\cap E|=\mu(E)$.
\label{lem:mwc}
\end{lemma}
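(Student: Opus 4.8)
The plan is to exploit the enormous weight gap between original edges ($E$-edges, each of weight $(w^*)^2 = (|X|+1)^2$), relaxable edges ($E_R$-edges, each of weight in $(w^*-1, w^*]$), and dummy edges ($E_k$-edges, each of weight $1 + \sum_{e\in E\cup E_R} w(e)$). The claim $|M_{k_{min}}\cap E| = \mu(E)$ has two directions. The easy direction is $|M_{k_{min}}\cap E|\le \mu(E)$: the edges of $M_{k_{min}}\cap E$ form a matching in $(X,Y,E)$, so their number is at most the size of a maximum matching there, namely $\mu(E)$. The substantive direction is $|M_{k_{min}}\cap E|\ge \mu(E)$, i.e.\ the maximum weighted matching is forced to pick a full maximum matching of the original graph.

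For the hard direction I would argue by an exchange/contradiction argument. Suppose $|M_{k_{min}}\cap E| = \mu(E) - t$ for some $t\ge 1$. Consider a maximum matching $M^*$ of $(X,Y,E)$ with $|M^*| = \mu(E)$. Then $M^* \triangle (M_{k_{min}}\cap E)$ decomposes into alternating paths and cycles; since $M^*$ has $t$ more edges than $M_{k_{min}}\cap E$, at least $t$ of these components are augmenting paths for $M_{k_{min}}\cap E$ (relative to $M^*$) each having both endpoints unmatched in $M_{k_{min}}\cap E$. Take one such augmenting path $P$ whose edges alternate between $M^*\setminus M_{k_{min}}$ and $(M_{k_{min}}\cap E)\setminus M^*$, with strictly more edges from $E$ coming in than going out. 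The endpoints of $P$ are a vertex in $X$ and a vertex in $Y$; I need to check they are not used by the \emph{other} parts of $M_{k_{min}}$ (the $E_R$-edges or the dummy $E_k$-edges) — and if they are, I follow the matching out along those edges, converting $P$ into a longer alternating structure. The key point: augmenting $M_{k_{min}}$ along $P$ (dropping the at-most-some $E_R$ or dummy edges at its ends and adding the $E$-edges of $M^*$) changes the weight by at least $(w^*)^2$ (one net extra $E$-edge) minus at most the weight of the edges we gave up. Since the total weight of \emph{all} $E_R$-edges is less than $|E_R|\cdot w^* \le |X\times Y|\cdot w^* $, and more carefully $\sum_{e\in E\cup E_R} w(e) < (|X|+1)\cdot (w^*)^2$ is dwarfed by... — actually the cleanest accounting is: giving up one $E$-edge costs exactly $(w^*)^2$, and the weights chosen guarantee $(w^*)^2 > \sum_{e\in E_R} w(e) + (\text{anything a single augmentation can gain from }E_R)$ is \emph{not} quite automatic, so I must be careful: a single augmentation can only remove $O(1)$ non-$E$ edges at the endpoints, each of weight $< w^*$, while it adds at least one net $E$-edge of weight $(w^*)^2 \gg 2w^*$. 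That strict inequality $(|X|+1)^2 > 2(|X|+1)$ for $|X|\ge 2$ (and the small cases handled directly) delivers the contradiction with maximality of $w(M_{k_{min}})$.

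The main obstacle I anticipate is the bookkeeping at the endpoints of the augmenting path $P$: a priori an endpoint of $P$ (unmatched within $M_{k_{min}}\cap E$) could still be matched in $M_{k_{min}}$ by an $E_R$-edge or — for a $Y$-endpoint — by a dummy $E_k$-edge, and dummy edges carry \emph{huge} weight, so naively swapping them out would \emph{decrease} the weight. I would handle this by noting that the weights on $E_k$ edges are so large that every dummy agent is matched in $M_{k_{min}}$ (else we could add a dummy edge and strictly increase weight, since each dummy edge outweighs the entire rest of the graph), and that each dummy agent is matched to some $y\in Y$; then I track the alternating path \emph{through} such a $y$ into its dummy partner, or else observe the relevant $Y$-endpoint of $P$ is not a dummy partner, so that the swap only touches $E$ and $E_R$ edges. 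Once that case analysis is pinned down, the weight inequality $(w^*)^2 > 2w^*$ closes the argument. Throughout I may invoke Lemma~\ref{lem:inc} for the structural fact $\mu(E\cup E_{k_{min}}\cup\hat{E}) = \mu(E)+k_{min}+|\hat{E}|$, which already tells us $M_{k_{min}}$ saturates all $k_{min}$ dummy agents and contains a maximum matching of $E$ in size; indeed, combined with $|M_{k_{min}}\cap E_{k_{min}}| = k_{min}$ and $|M_{k_{min}}\cap\hat{E}| = |\hat{E}|$, the equality $|M_{k_{min}}\cap E| = \mu(E)$ follows immediately by subtraction — so in fact the slickest route is to not re-prove anything and simply \emph{derive Lemma~\ref{lem:mwc} as a corollary of Lemma~\ref{lem:inc}} together with the observation that a maximum weighted matching of $G_{k_{min}}$ must be a maximum matching of $G_{k_{min}}$ on the $E\cup E_{k_{min}}$ edges.
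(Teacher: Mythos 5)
Your easy direction ($|M_{k_{min}}\cap E|\le\mu(E)$ because $M_{k_{min}}\cap E$ is a matching of $G$) matches the paper. For the substantive direction, however, your main route has a genuine unresolved gap exactly at the point you flag: when the $Y$-endpoint $y_\ell$ of the augmenting path is matched in $M_{k_{min}}$ by a dummy edge, your local exchange must evict that dummy agent, and ``tracking through the dummy partner'' does not resolve anything --- the dummy is adjacent to all of $Y$, so the walk does not terminate, and the dummy can only be rehoused if some resource is free after the exchange. If $M_{k_{min}}$ saturates all of $Y$ and the $X$-endpoint $x_0$ had no $E_R$-partner to release, there is no free resource, the dummy edge (whose weight exceeds the whole rest of the graph) is simply lost, and the exchange \emph{decreases} the weight; your inequality $(w^*)^2>2w^*$ does not cover this case. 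The paper sidesteps all of this local bookkeeping with a single global comparison: it exhibits one explicit competitor matching consisting of a maximum matching $M$ of $(X,Y,E)$ together with $k_{min}$ dummy edges placed on resources unused by $M$ (feasible because $k_{min}\le|Y|-|M|$), giving $w(M_{k_{min}})\ge w(M_{k_{min}}\cap E_{k_{min}})+|M|(|X|+1)^2$, and then upper-bounds $w(M_{k_{min}})$ under the assumption $|M_{k_{min}}\cap E|<\mu(E)$ using the fact that at most $|X|$ edges of $E_R$, each of weight below $|X|+1$, can appear. If you want to keep the exchange route, you would need to either prove the stranded-dummy case cannot arise or handle it separately; the global comparison is much cleaner.

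Your closing ``slickest route'' --- deriving the lemma from Lemma~\ref{lem:inc} together with $|M_{k_{min}}\cap E_{k_{min}}|=k_{min}$ and $|M_{k_{min}}\cap\hat{E}|=|\hat{E}|$ --- is circular in the paper's logical structure. Lemma~\ref{lem:inc} applies only to relaxations already known to be minimal and SNH-SB; that $\hat{E}=M_{k_{min}}\setminus(E\cup E_{k_{min}})$ has these properties is the content of Lemma~\ref{lem:mkpc}(1)--(2), whose proof (as does that of Lemma~\ref{lem:mwc2}, i.e.\ $|M_{k_{min}}\cap E_{k_{min}}|=k_{min}$) explicitly invokes Lemma~\ref{lem:mwc}. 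So the ``corollary'' presupposes the lemma it is meant to prove.
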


\subsection{ Strong no harm weak benefit guarantee (SNH-WB)}
\label{sec:mc}
As in the SNH-SB problem, the SNH-WB problem can also be solved using Algorithm~\ref{alg:core}.  
Intuitively, as in the right column of Figure~\ref{fig:examplefc}, it is possible that a maximum matching of $(X, Y, E \cup \hat{E})$ does not have a maximum matching of $(X, Y, E)$ as a subgraph. Therefore, the weights of the $E$ edges must be reduced. However, these weights must remain greater than those of $\hat{E}$ to ensure that the WB guarantee is preserved.

Note that if the set $\hat{E}$ returned by the algorithm contains an edge $\{x,y\}\in \hat{E}$ such that $x\in \Gamma(E)$, then $x$ cannot participate in all maximum matchings of $(X,Y,E\cup \hat{E}\setminus \{x,y\})$ and therefore, the SNH-WB no-harm guarantee does not hold. That is so since otherwise, $\mu(E\cup \hat{E}\setminus \{x,y\})=\mu(E\cup \hat{E})$ and the maximum weighted matching in line 8 won't contain the relaxed edge $\{x,y\}$, contradicting the fact that $\{x,y\}\in \hat{E}$. Therefore, we assign a negative weight to all edges with agent in $\Gamma(E)$ and this way, we keep the no-harm guarantee even when some agents of $\hat{E}$ do not relax.  
The updated weight function is defined as follows:
\begin{align*}
        w(e) = \begin{cases}
            -1 & e=(x,y)\in E \cup E_R, x\in \Gamma(E) \\
            (|X|+1) & e=(x,y)\in E, x\notin \Gamma(E)\\
            |X|+1-\\ \;\;\;\; \frac{u(e)}{\max_{e'\in E_R}\{ u(e')\}} & e=(x,y)\in E_R, x\notin \Gamma(E)
        \end{cases}
\end{align*}
In this function, the weights of the original edges are smaller as it is not required to have a maximum matching from original edges as a subgraph of the resulting matching; however, it is required that the guaranteed agents won’t be asked to relax their restrictions. This way, if they do not relax, no harm will be caused.

Note that the set $\Gamma(E)$ can be computed in polynomial time using the method of~\cite{irving2006rank}.
The functions $u$ that are used are identical to those used in the SNH-SB case. We conclude with the following theorem:
\begin{theorem}
Algorithm~\ref{alg:core} with the weights function $w(\cdot)$ solves the
SNH-WB optimization problem with an aggregate cost function $g$ (see Theorem~\ref{pro:pcfair}) and a bound $\beta$.
    \label{pro:mc}
\end{theorem}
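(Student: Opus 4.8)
The plan is to establish Theorem~\ref{pro:mc} along the same lines as Theorem~\ref{pro:pcfair}: prove an SNH-WB analogue of Lemma~\ref{lem:mkpc} asserting that the set $\hat E$ returned by Algorithm~\ref{alg:core} under the updated weights (1)~provides the weak benefit to relaxers, (2)~provides the strong no harm guarantee, (3)~satisfies $|M_{k_{min}}|\ge\mu(E\cup E_{k_{min}}\cup\hat F)$ for every SNH-WB set $\hat F\subseteq E_R$, and (4)~is $\preceq$-minimal among all SNH-WB sets attaining that allocation size; parts~(3)--(4) together with the binary search over the number $k$ of dummy agents (used, as in the SNH-SB case, to fold the budget $\beta$ into a maximum-weighted-matching computation) then yield optimality. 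As before, one first reduces to minimal SNH-WB relaxations: for any SNH-WB set $\hat F$ there is a minimal $\hat F'\subseteq\hat F$ with $\mu(E\cup\hat F')=\mu(E\cup\hat F)$; strong no harm is inherited by $\hat F'$ since it quantifies over all subsets, and weak benefit is inherited because every maximum matching of $E\cup\hat F'$ has size $\mu(E\cup\hat F)$ and is therefore a maximum matching of $E\cup\hat F$, so $\Gamma(E\cup\hat F)\subseteq\Gamma(E\cup\hat F')$. A useful observation, used repeatedly, is that every minimal SNH-WB relaxation is edge-disjoint from $\Gamma(E)$: if it contained an edge $(x,y)$ with $x\in\Gamma(E)$, deleting that edge would strictly decrease $\mu$ by minimality, so by Result~(2) $x\notin\Gamma(E\cup\hat F\setminus\{(x,y)\})$, contradicting strong no harm applied to the subset $\hat F\setminus\{(x,y)\}$.

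The weight function supplies two structural facts. First, every edge incident to an agent of $\Gamma(E)$ has weight $-1$, so no maximum-weight matching $M_k$ uses such an edge and hence $\hat E$ touches no guaranteed agent---consistent with the observation above, so restricting the search to such relaxations loses nothing. Second, all remaining original edges have weight $|X|+1$ and all remaining relaxable edges have weight in $(|X|,|X|+1]$; exactly as in Lemma~\ref{lem:mwc}, this forces $M_k$ restricted to the non-guaranteed agents to be a maximum-cardinality matching that uses original edges in preference to relaxable ones and cheaper relaxable edges in preference to costlier ones. Building on these, I would prove an SNH-WB counterpart of Lemma~\ref{lem:inc}: for a minimal SNH-WB relaxation $\hat E$ (necessarily avoiding $\Gamma(E)$), every agent of $\Gamma(E)$ remains saturated in every maximum matching of $E\cup\hat F$ for each $\hat F\subseteq\hat E$. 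Its proof reuses Results~(1)--(3)---in particular the even-length alternating-path characterisation of $\Gamma$---much as the proof of Lemma~\ref{lem:inc} does, except that the weak-benefit property (all relaxers of $\hat E$ saturated in every maximum matching of $E\cup\hat E$) replaces the strong-benefit property, so the alternating-path argument has to be carried out at the graph $E\cup\hat E$ rather than at each intermediate graph $E\cup\hat E\setminus\{e\}$.

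I expect the no-harm part~(2) to be the main obstacle. In the SNH-SB case one obtained $\Gamma(E)\subseteq\Gamma(E\cup\hat F)$ for free from the identity $\mu(E\cup\hat E)=\mu(E)+|\hat E|$ and the fact that a maximum matching of $E\cup\hat E$ contains a maximum matching of $E$; neither holds here, so the inclusion must be shown directly. The argument I have in mind: suppose some $x\in\Gamma(E)$ leaves $\Gamma(E\cup\hat F)$ for some $\hat F\subseteq\hat E$; by Result~(3) there is an even-length alternating path from $x$ to a free vertex with respect to a maximum matching of $E\cup\hat F$, and since $\hat F$ is edge-disjoint from $x$ this path must use an added edge; peeling off edges of $\hat F$ along the path and invoking minimality of $\hat E$ together with the weak-benefit saturation of the relaxers yields a contradiction (re-adding the removed edge either pushes the matching size up, contradicting minimality, or desaturates a relaxer, contradicting weak benefit). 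Once~(1) and~(2) are settled, parts~(3) and~(4) are essentially a transcription of the corresponding parts of Lemma~\ref{lem:mkpc}: expand $w(M_{k_{min}})$ as (weight of the dummy edges) $+\,\mu(E\text{-part})\cdot(|X|+1)+|\hat E|(|X|+1)-u'(\hat E)$, compare with $w(\bar M_{\hat F})$ for a competing SNH-WB relaxation $\hat F$ using the counterpart of Lemma~\ref{lem:inc}, and conclude from maximality of the weighted matching and the least-$k$ rule of the binary search. The runtime is the same as for Algorithm~\ref{alg:core}, with one extra polynomial-time computation of $\Gamma(E)$.
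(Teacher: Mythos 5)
Your overall strategy---reuse Algorithm~\ref{alg:core} with the modified weights, show that minimal SNH-WB relaxations must avoid $\Gamma(E)$ and that the weight $-1$ enforces this, then prove a four-part lemma and close with the binary search over $k$---matches the paper's, and your observation that a minimal SNH-WB relaxation is edge-disjoint from $\Gamma(E)$ is exactly the content of the paper's Lemma~\ref{lem:mixed} (though the deletion step should invoke Result~(1), i.e.\ Claim~\ref{cl:result_1}, whose contrapositive gives $x\notin\Gamma(E\cup\hat F\setminus\{(x,y)\})$, not Result~(2)). The genuine gap is in your parts~(3)--(4). You propose to transcribe the weight-counting from the SNH-SB proof of Lemma~\ref{lem:mkpc}, but that counting rests on Lemma~\ref{lem:inc} ($\mu(E\cup\hat E)=\mu(E)+|\hat E|$) and Lemma~\ref{lem:mwc} ($|M_{k_{min}}\cap E|=\mu(E)$), and you yourself note that neither survives in the SNH-WB setting, where a maximum matching of $E\cup\hat F$ need not contain a maximum matching of $E$. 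The ``counterpart of Lemma~\ref{lem:inc}'' you offer in its place is a saturation statement about $\Gamma(E)$, not a size identity, so it cannot feed the arithmetic that isolates a $\mu(E)\cdot(w^\ast)^2$ term and an $|\hat F|w^\ast$ term in $w(\bar M_{\hat F})$. The paper sidesteps this by grafting the SNH-WB proof onto the WNH-WB machinery (Lemma~\ref{lem:mk}(3)--(4)) rather than the SNH-SB machinery: optimality there comes from the crude bound $w(e)\ge|X|$ on every admissible edge (so a larger-cardinality matching would already have larger weight) together with the direct expansion $w(M\setminus E_{k_{min}})=jw^\ast-u'(\hat F)$, neither of which requires the relaxed matching to contain a maximum matching of $E$. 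Your parts~(3)--(4) would need to switch to that style of counting, restricted to the part of the graph away from $\Gamma(E)$ where the weights are still at least $|X|$.

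A second, smaller gap: the no-harm argument is only sketched, and the sketch omits the ingredient that makes the paper's version close, namely Lemma~\ref{lemma:O-connected} (in $(X,Y,E)$ the agents of $\Gamma(E)$ and of $X\setminus\Gamma(E)$ lie in separate connected components). In the paper's argument (Lemma~\ref{lem:mk}(2)) this is what keeps the alternating path built from a purportedly harmed $x_0\in\Gamma(E)$ inside $\Gamma(E)$ until it must either become augmenting (contradicting maximality of the platform's matching) or expose an equal-size matching using strictly fewer relaxed edges (contradicting the benefit part). Without the component-separation fact, your ``peel off edges of $\hat F$ along the path'' step has no control over where the path wanders, and the dichotomy ``matching size increases or a relaxer is desaturated'' is asserted rather than established.
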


\section{Many-to-One and One-to-Many Allocations}
\label{sec:mw}

Up to now, we have considered allocating a single resource to each agent. 
In many scenarios like allocation of classrooms to courses~\cite{trabelsi2023resource}, a single agent might need multiple resources, e.g., for providing a course with enough classrooms for
holding an exam where students need to be far from each
other.  In other scenarios, a single resource might be shared among multiple agents, e.g., for providing multiple courses with very few participants with an appropriate classroom for holding an exam together.
This section formulates an optimization problem for many-to-one allocations and presents a polynomial time algorithm to solve it.
The one-to-many case where a resource can be
shared by multiple agents can be done similarly.

The problem is similar to Definition~\ref{def:prob} but with an addition: a function $d:X\rightarrow \mathbb{N}$ that assigns to each agent its desired amount of resources is added to the problem inputs.
Note that $\mu(\cdot)$ is the size of a maximum valid allocation with respect to $d$ (not necessarily a maximum matching).

Note when moving to the many-to-one setting, the aggregation functions are naturally applied to the set of relaxed edges (not relaxing agents). In particular, the size aggregation cost function
bounds the overall number of relaxed edges.

We propose Algorithm~\ref{alg:manyone} that uses Algorithm~\ref{alg:core}. Algorithm~\ref{alg:manyone} first duplicates the agents that need more than one resource and adds some relevant edges. It then runs  Algorithm~\ref{alg:core} on the modified graph and returns its results. 
Lemma~\ref{lem:many} is crucial for showing that this modification does not affect the desired participation guarantees. 
\begin{lemma}
    Let $G=(X,Y,E)$ be a graph and let $x,x'\in X$ be agents. Set $Y_x$ and $Y_{x'}$ as the resources adjacent to $x$ and $x'$ in $E$.
    If $Y_x \subseteq Y_x'$ and $x\in \Gamma(G)$,  then $x'$ is also in $\Gamma(G)$.
    \label{lem:many}
\end{lemma}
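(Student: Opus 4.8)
The plan is to prove the contrapositive: assuming $x' \notin \Gamma(G)$, I will show $x \notin \Gamma(G)$, using the characterization in Result~(3) (Lemma~\ref{lem:claim1} of the excerpt) that a node fails to participate in all maximum matchings precisely when there is an even-length alternating path from it to a free node with respect to any (equivalently, some fixed) maximum matching $M$. So fix a maximum matching $M$ of $G$ and suppose $x'$ is joined to some $M$-free node by an even-length alternating path $P$. The first edge of $P$ leaves $x'$ and is \emph{non-matching} (even-length alternating paths from an unmatched-at-the-start or matched-at-the-start vertex — here $x'$ may be matched or unmatched; if unmatched it is itself free and then $x$'s situation is handled by noting $x$ can be matched to $x'$'s neighbor and rerouting). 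I would split on whether $x'$ is $M$-saturated.

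If $x'$ is $M$-free, then $x'$ has at least one neighbor $y \in Y_{x'} \supseteq Y_x$, and $y \in Y_x$ too; if $y$ is also $M$-free, the single edge $(x,y)$ — if $x$ is $M$-free this contradicts maximality, and if $x$ is $M$-saturated then swapping along $(x',y)$... — more cleanly, I handle the generic case where $x'$ is $M$-saturated, and then note the free case reduces to it by first applying an augmenting-free rerouting, or simply observe directly that if $x'$ is free then every neighbor of $x'$ is $M$-saturated (else $M$ is augmentable), so pick $y\in Y_x$; $y$ is matched to some $z\neq x$ (since $y\in Y_{x'}$ with $x'$ free means $M(y)\neq x'$, and generically $\neq x$), giving the two-edge even alternating path $x - y - z$ where we then must verify $z$ connects to a free node — actually the cleanest route is: take the even alternating path $P$ from $x'$ to a free node $f$; its second vertex is some $y\in Y_{x'}\subseteq$ ... no, $Y_{x'}$ need not contain $Y_x$. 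Let me restate the real argument.

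The heart of the argument: let $P = x', y_1, x_1, y_2, x_2, \ldots, f$ be the even alternating path for $x'$ ending at a free node $f$, where $(x',y_1)\notin M$, $(y_1,x_1)\in M$, etc. Here $y_1 \in Y_{x'}$. If $y_1 \in Y_x$ as well — but we only know $Y_x\subseteq Y_{x'}$, not the reverse, so $y_1$ need not be a neighbor of $x$. Instead I use $x \in \Gamma(G)$: since $x$ is matched in every maximum matching, $x$ is $M$-saturated, say $(x, y_x)\in M$ with $y_x \in Y_x \subseteq Y_{x'}$. Now consider whether $y_x$ appears on $P$. If it does not, I claim we can build an even alternating path from $x$: since $x$ is $M$-saturated via $(x,y_x)$, and $y_x\in Y_{x'}$, form the path $x, y_x, x'$? — but $(y_x, x')$ is non-matching and $(x,y_x)$ is matching, that is an even alternating path $x - y_x - x'$ of length $2$ ending at $x'$; if $x'$ were free we'd be done immediately, and if $x'$ is saturated we append... the issue is $(x', y_x)$ starting a path already "used up" $y_x$. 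The rigorous version: concatenate $x - y_x - x'$ with $P$, but $P$ starts with the \emph{non}-matching edge $(x', y_1)$ while we just arrived at $x'$ along the non-matching edge $(y_x, x')$ — parity conflict. This is exactly the main obstacle, and I expect it is resolved by instead taking $M' := M \triangle \{(x,y_x),(x',y_x)\}$ is not a matching either since $x'$ may be matched. I expect the correct and clean proof uses the Dulmage–Mendelsohn / Gallai–Edmonds structure: $x\in\Gamma(G)$ iff $x$ is in no "inessential" block; the condition $Y_x\subseteq Y_{x'}$ forces $x'$ into the same or a "more essential" block. So the real plan is: invoke the Gallai–Edmonds (or Dulmage–Mendelsohn, as cited via \cite{irving2006rank,dulmage1958coverings}) decomposition; $x \in \Gamma(G)$ means $x$ lies outside the set $D$ of vertices missed by some maximum matching; an even alternating path from $x'$ to a free vertex would place $x'\in D$ together with its entire alternating-reachable neighbourhood; since $Y_x\subseteq Y_{x'}$, every resource reachable from $x$ in one step is reachable from $x'$, so an even alternating walk from $x$ to a free node can be extracted from the corresponding structure around $x'$, forcing $x\in D$, contradiction.

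The step I expect to be the main obstacle is precisely the parity/bookkeeping when splicing the alternating path through the shared neighbourhood: naively prepending $x$'s matched edge to $x'$'s alternating path creates a parity mismatch at the join vertex, so one must argue more carefully — either by casing on whether $x$'s matched partner $y_x$ lies on $x'$'s path $P$ (if it does, truncate $P$ at $y_x$ and prepend $(x,y_x)$, which flips to the correct parity; if it does not, a short detour argument or the structural decomposition closes it), or by appealing directly to the decomposition theorem to avoid path-surgery entirely. I would write the clean version using the casework on whether $y_x\in V(P)$, since that keeps the proof self-contained using only Result~(3) already available in the excerpt.
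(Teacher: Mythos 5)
There is a genuine gap: you never actually complete the argument. You explicitly flag the parity/bookkeeping at the splice point as ``the main obstacle,'' propose two possible resolutions (casework on whether $y_x\in V(P)$, or retreating to the Gallai--Edmonds/Dulmage--Mendelsohn decomposition), and stop without executing either. Along the way there is also an error: in the case where $x'$ is $M$-free you assert that a neighbor $y\in Y_x$ is matched to some $z\neq x$ ``generically'' --- but $z=x$ is not a degenerate case to wave away; it is the essential case, since $x\in\Gamma(G)$ forces $x$ to be matched in $M$, and its partner is precisely such a $y$.

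The idea you are missing is that you do not need to fix an arbitrary maximum matching and hunt for alternating paths at all. The hypothesis $x'\notin\Gamma(G)$ hands you, by definition, a maximum matching $M$ in which $x'$ is unmatched. Since $x\in\Gamma(G)$, $x$ is matched in $M$ to some $y\in Y_x\subseteq Y_{x'}$, so $(x',y)\in E$, and $M'=\bigl(M\setminus\{(x,y)\}\bigr)\cup\{(x',y)\}$ is a matching of the same size that misses $x$ --- contradicting $x\in\Gamma(G)$. This two-line swap is exactly the paper's proof. For what it is worth, your splicing plan can be pushed through: under the paper's convention that an alternating path begins with a matched edge, prepending the two edges $(x,y_x)\in M$ and $(y_x,x')\notin M$ to $x'$'s even alternating path creates no parity clash, and the simplicity issue is handled by your proposed case split (if $x$ already lies on $P$, truncate $P$ at $x$). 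But that is far more machinery than the statement requires, and as submitted the proposal does not carry it out.
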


\begin{proofof}[lemma~\ref{lem:many}]
    Assume for the purposes of contradiction that
    $x' \not\in \Gamma(G)$. Let $M$ be a maximum matching of $G$ without $x'$.
    Since $x\in \Gamma(G)$, there is $y$ for which $\{x,y\}\in M$.
    Since $Y_x \subseteq Y_x'$, the edge $\{x',y\}\in E$. therefore, we can substitute the edges and have a maximum matching of $G$ without $x$, and we have a contradiction.
\end{proofof}

\begin{algorithm}

\begin{algorithmic}[1]
\caption{\label{alg:manyone}}
\Require Given, $X,Y,E,\rho$, aggregated cost bound $\beta$, aggregation function $g\in \{\Ut(\cdot),\Si(\cdot)\}$, demand function $d:X\rightarrow N$ and a weight function $w:E\cup E_R\rightarrow R^+$

\Ensure A relaxable set $\hat{E}$
\State $X',Y',E'\leftarrow \emptyset$

\ForAll{agents $x\in X$}
\State $X' = X'\cup \{x_{j}|0 \leq j < d(x)\}$.
\EndFor
\ForAll {resources $y\in Y$} 
\State $Y' = Y'\cup \{y\}$.
\EndFor
\State $E'=\{\{x_{j},y\}|\{x,y\}\in E\}$,  
\State $E_R'=\{\{x_{j},y\}|\{x,y\}\in E_R\}$,  
\State $\forall{\{x_{j},y\}}\in E_R'$, $\rho(\{\{x_{j},y\}) = \rho(\{x,y\})$,  

\State $\forall{\{x_{j},y\}}\in E'\cup E_R',  w'(x_{j},y)=w(x,y)$

\State Run Algorithm~\ref{alg:core} with $X',Y',E',E_R',\rho', \beta, g$ and $w'$.
Let $\hat{E}'$ be the returned edge set.
\State Construct a set $\hat{E}$ from $\hat{E}'$ by substituting all instances $x_j$ of $x$ by $x$ itself.
\State \Return $\hat{E}$
\end{algorithmic}
\end{algorithm}
We conclude with the following:

\begin{theorem}
Algorithm~\ref{alg:manyone} with a weight function $w(\cdot)$ corresponds to a participation guarantee PG $\in$ \{SNH-SB, WNH-WB, SNH-WB\} (e.g., the weight functions in Section~\ref{sec:algorithms}) solves the many-to-one PG optimization problem with an aggregate cost function $g$ (see Theorem~\ref{pro:pcfair}), a bound $\beta$ and a demand function $d(\cdot)$.
\label{pro:pmo}
\end{theorem}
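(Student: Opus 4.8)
The plan is to show that Algorithm~\ref{alg:manyone} reduces the many-to-one problem to the one-to-one problem solved by Algorithm~\ref{alg:core}, so that correctness follows from Theorem~\ref{pro:pcfair} (or~\ref{pro:mc}) applied to the expanded instance $(X',Y',E',E_R',\rho')$. First I would set up the correspondence between allocations: a valid many-to-one allocation of $(X,Y,E\cup\hat E)$ with respect to $d$ corresponds bijectively (up to the choice of which copy $x_j$ of $x$ receives which resource) to a matching of the expanded bipartite graph $(X',Y',E'\cup\hat E')$, because $x$ receives at most $d(x)$ resources iff its $d(x)$ copies collectively receive at most $d(x)$ resources, one each. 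Under this correspondence, $\mu(\cdot)$ is preserved, aggregate cost is preserved (since $\rho'$ copies $\rho$ and the size function counts edges the same way — note the remark in the text that in the many-to-one setting the aggregation functions are applied to relaxed edges, which matches the expanded instance exactly), and a maximum valid allocation maps to a maximum matching. Hence optimality of the returned $\hat E'$ in the expanded instance transfers to optimality of $\hat E$ in the original.

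The second and more delicate part is verifying that the participation guarantees transfer, i.e.\ that $\hat E$ is PG for $(X,Y,E)$ whenever $\hat E'$ is PG for $(X',Y',E')$. Here the key point is the relationship between $\Gamma$ in the two instances, and this is exactly where Lemma~\ref{lem:many} is invoked. All copies $x_0,\dots,x_{d(x)-1}$ of a single agent $x$ have identical neighborhoods in $E'$ (and in $E'\cup\hat E'$), so by Lemma~\ref{lem:many} (applied in both directions) either all copies of $x$ lie in $\Gamma$ or none do; thus ``$x\in\Gamma$ in the original instance'' is well-defined as ``every copy of $x$ is in $\Gamma'$ in the expanded instance,'' and one checks this is equivalent via the allocation correspondence above — $x$ participates in every maximum valid allocation iff all of its copies participate in every maximum matching. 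Granting this, I would argue each guarantee componentwise: for No Harm, $\Gamma(E)\subseteq\Gamma(E\cup\hat E)$ follows by pushing the inclusion $\Gamma'(E')\subseteq\Gamma'(E'\cup\hat E')$ through the copy-equivalence; for Benefit to relaxers, if $x\in X(\hat E)$ then some copy $x_j\in X'(\hat E')$, so $x_j\in\Gamma'(E'\cup\hat E')$, hence (again by Lemma~\ref{lem:many}) all copies of $x$ are in $\Gamma'$, hence $x\in\Gamma(E\cup\hat E)$; the strong versions follow since the copy-equivalence holds uniformly for every subset $\hat F\subseteq\hat E$ and its preimage $\hat F'\subseteq\hat E'$.

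I would then simply assemble these observations: for the given weight function $w$ corresponding to PG, the induced $w'$ on the expanded graph is precisely the weight function that Theorem~\ref{pro:pcfair}/\ref{pro:mc} requires for PG on $(X',Y',E')$ (since $w'$ is defined by $w'(x_j,y)=w(x,y)$ and $\Gamma'$-membership of $x_j$ mirrors $\Gamma(E)$-membership of $x$), so Algorithm~\ref{alg:core} returns an optimal PG relaxation $\hat E'$ of the expanded instance; translating back yields an optimal PG relaxation $\hat E$ of the original many-to-one instance with $g(\hat E)\le\beta$. The main obstacle I anticipate is the bookkeeping in the allocation correspondence when several copies of the same agent are present and maximum allocations are non-unique — one must be careful that "picked arbitrarily" maximum allocations on the two sides can be chosen consistently, and that the $\Gamma$/copy-equivalence is genuinely symmetric (this is the content of using Lemma~\ref{lem:many} in both directions, since a priori the copies of $x$ are interchangeable but the copies of a neighbor $x'$ with $Y_{x'}\supsetneq Y_x$ need not be). Everything else is a routine transfer of the one-to-one results through the reduction.
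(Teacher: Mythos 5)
Your proposal is correct and follows essentially the same route as the paper: reduce to the one-to-one instance via agent duplication, invoke the correctness of Algorithm~\ref{alg:core} (Theorems~\ref{pro:pcfair}/\ref{pro:mc}) on the expanded graph, and use Lemma~\ref{lem:many} in both directions (since copies of an agent have identical neighborhoods) to transfer $\Gamma$-membership, and hence the no-harm and benefit guarantees, from the copies back to the original agent. You spell out the allocation/cost correspondence and the per-guarantee transfer in more detail than the paper's rather terse proof, but the key ideas are identical.
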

\begin{proofof}[Theorem~\ref{pro:pmo}]

Note that Algorithm~\ref{alg:manyone} uses Algorithm~\ref{alg:core} as a subprocedure. The correctness of Algorithm~\ref{alg:core} was proved in the previous sections.
We conclude that Algorithm~\ref{alg:core} returns a relaxation set with a maximum sized allocation that has the relevant participation guarantee  with the relevant aggregate function and bound.
It remains to show that relevant participation guarantee is preserved when merging the duplicated agents as we do in line 11 of Algorithm~\ref{alg:manyone}. 

Lemma~\ref{lem:many} implies that if one instance of an agent is guaranteed to be matched so is for all other instances. Therefore, the different guarantees for relaxers and the different no-harm guarantees provided by Algorithm~\ref{alg:core} to some instances of an agent hold for all and therefore for the agent itself. This concludes the proof of Theorem~\ref{pro:pmo}.
\end{proofof}

The runtime of Algorithm~\ref{alg:manyone} is determined by the runtime of Algorithm~\ref{alg:core} on line 10. The key difference lies in the size of \( X' \), which is \( \sum_{x \in X} d(x) \). The overall complexity is therefore \( O(\log(|Y|) \cdot \max(\sum_{x \in X} d(x), |Y|)^3) \).

\section{Experiments}
\label{sec:experiments}
In this section, we discuss the results of employing a facilitator, as described in this paper, on three different real-world datasets. 
Experiments were conducted in all three allocations settings (one-to-one, one-to-many and many-to-one).  We compared the WNH-WB, SNH-WB and the SNH-SB participation guarantees and the extent to which these different participation guarantees affect the resultant allocation size.\footnote{The code and data for running the experiments are available at~\cite{trabelsi2025facilitating}}.

\paragraph{Datasets.}
We ran our experiments on three real-world datasets. The \textbf{courses
and classes dataset (COURSE)}~\cite{trabelsi2022resource} contains 154 courses and
144 classrooms, with the goal of matching courses to appropriate
classrooms.  The attributes in this dataset are room capacity, location, availability of accommodations for physical
disability, and availability of accessories for hearing disability. 
The \textbf{students lab dataset (LAB) }~\cite{trabelsi2023resource} consists of a lab
with 31 students that should be matched to seats in 14 lab rooms. 
Room attributes include proximity to some locations in the lab (e.g., advisor's
room, restrooms, kitchen), capacity, and strength of the Wi-Fi signal. 
The \textbf{children activities
dataset (CHILD)} \cite{varone2019dataset} consists of 653 children and 533 available activities over a period of
several weeks. We focused on one week during the autumn vacation in the
Swiss municipality of Morges. During that week, 249 activities were offered.
Attributes included minimum and maximum allowed age and children's
priorities over the activities. Note that the CHILD dataset has significantly more agents and relatively less available resources, so the overall match rate and the potential for  improvements by the facilitator are much more limited.

Appropriate \textbf{discomfort} functions were defined for the datasets. 
In the LAB dataset, the attributes were explicitly rated on a scale of 1--5, and the discomfort function was defined accordingly, with rating 1 taken as no discomfort, and 2-5 having discomfort of 1-4, respectively. 
The discomfort for relaxing an edge is the sum of the discomforts for all its attributes.
For the COURSE
dataset, the discomfort was estimated from the nature of the constraints. These estimates were informed by consultations with a university administrator and faculty members.
The availability of accommodations for physical disability and the availability of accessories for hearing disability have a crucial impact on the discomfort level. The distance from the desired location determines the discomfort incurred by the room's location. Finally, the discomfort of the room capacity is more significant as the difference from the desired capacity is more significant. 

The CHILD dataset has attributes of both types (explicit rating and not), so both methods were employed for determining the discomfort function. Children rated some activities, and their ratings  contributed to the discomfort in a way similar to the attributes in the LAB dataset. If a child is younger than the minimum age for the activity, the difference between her age and the minimum age of the activity was considered for determining the discomfort. The contributions to the discomfort when the child is older than the maximum allowed age for the activity were computed similarly.


\paragraph{Maximum possible match sizes when all agents comply.}
\label{sec:msize}
Here, we compare the WNH-WB, SNH-WB and SNH-SB relaxations for the special case where all agents comply. Since a SNH-SB relaxation is optimized to handle the case when not all agents comply, this stricter condition might impact the matching size when compared with WNH-WB, even when all agents comply. The SNH-WB is somewhere between the SNH-SB and the WNH-WB in the courses dataset and is comparable with SNH-SB in the two other datasets. 
Figure~\ref{fig:msize} depicts the resultant matching sizes for WNH-WB, SNH-WB and SNH-SB relaxations when all agents comply. The baseline bar is for when no relaxation is allowed. First, note that relaxation does indeed increase the match size considerably, even for low bounds.  In the CHILD dataset, a 5\% increase in the relaxation cost bound allows 32 more children to be matched. 
Comparing the results of WNH-WB, SNH-WB and SNH-SB, we see that while SNH-SB and SNH-WB do provide  better results sometimes in smaller matchings, the difference is relatively small.

\begin{figure}[htb]
\centering

\includegraphics[width=0.47\columnwidth]{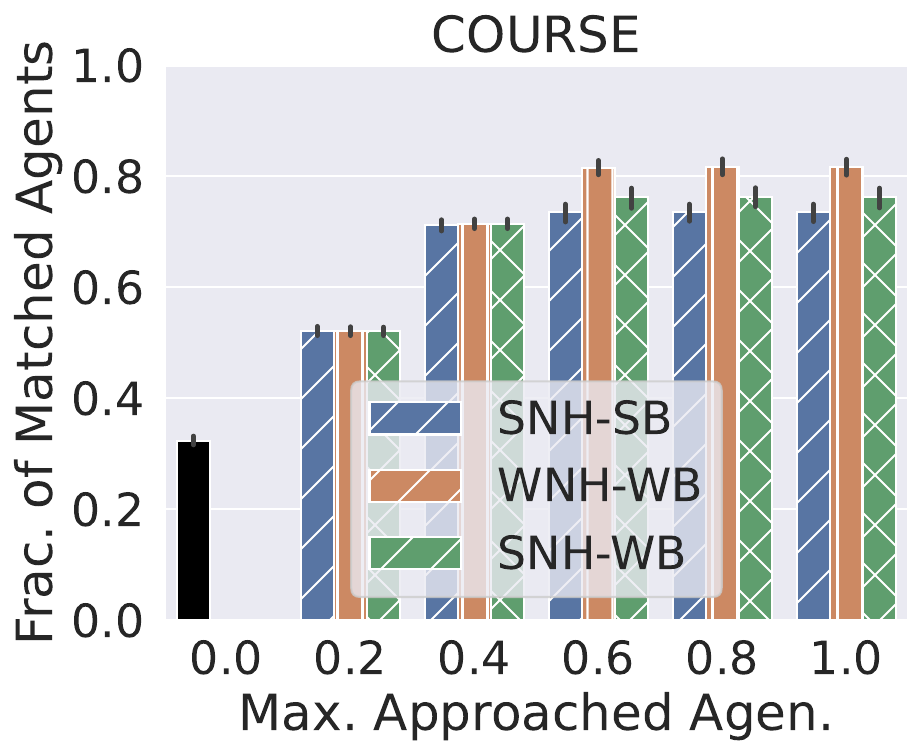}
\includegraphics[width=0.45\columnwidth]{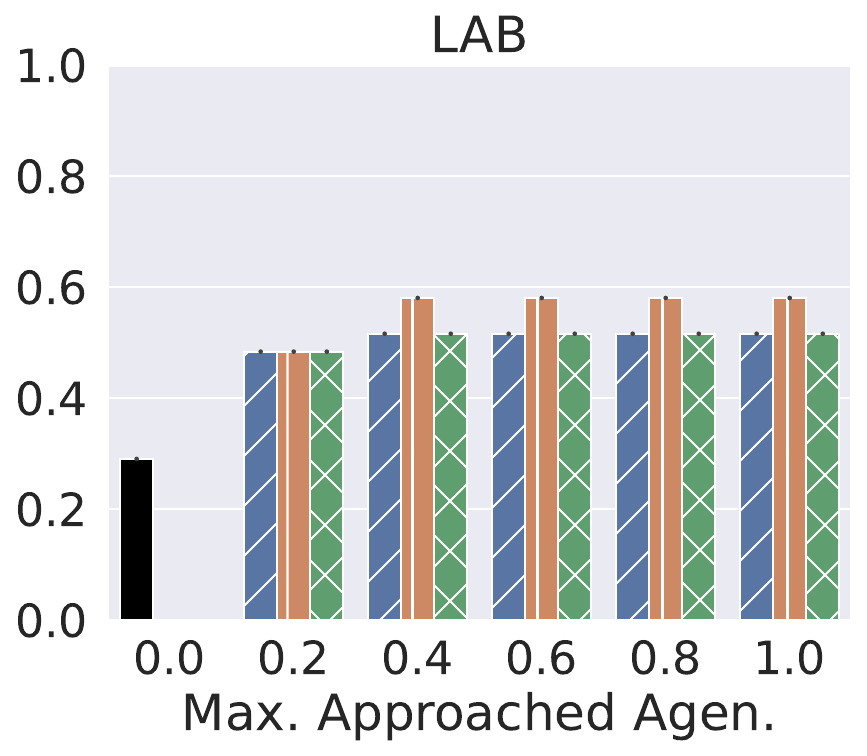}
\includegraphics[width=0.45\columnwidth]{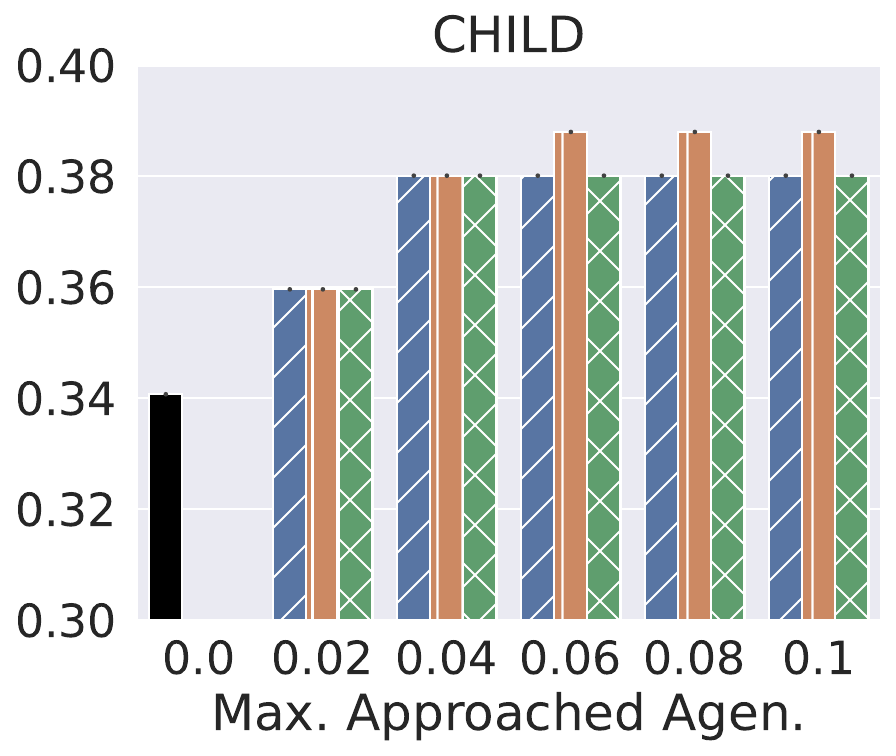}

\caption{Fraction of matched agents as a function of the bound on allowed relaxations, for the SNH-SB, SNH-WB and WNH-WB guarantees. The scale and limits of the $y$ axis are different for the different datasets. The baseline (in black) corresponds to no relaxations or a cost bound of zero.\label{fig:msize}}

\end{figure}

\paragraph{Match sizes when not all agents comply.}
Figure~\ref{fig:relaxing_agents} depicts the matching size as a function of the number of complying agents, with no bound restrictions. Here, we first computed the full SNH-SB, SNH-WB and WNH-WB relaxations, using the methods devised in the earlier sections, randomly selected a set of complying agents (of different sizes), and then computed the maximum matching with the resultant relaxation. Each point on the graph is an average of ten such random samples.   
We can see that SNH-SB performs better with low and moderate compliance rates, while WNH-WB performs better when more agents comply.  
The performance of SNH-WB is comparable to that of SNH-SB in most cases and it is  between SNH-SB and WNH-WB for the COURSE dataset.

Hence, we propose using SNH-SB when it is expected that several agents will not comply. However, if all agents are known to comply, WNH-WB is preferable. In the intermediate case, where we expect very high compliance, the facilitator should weigh the costs of losing the SNH-SB or the SNH-WB guarantees and the dissatisfaction of some agents against the benefits of increasing the allocation size.

\begin{figure}[htb]
\centering

\includegraphics[width=0.46\columnwidth]{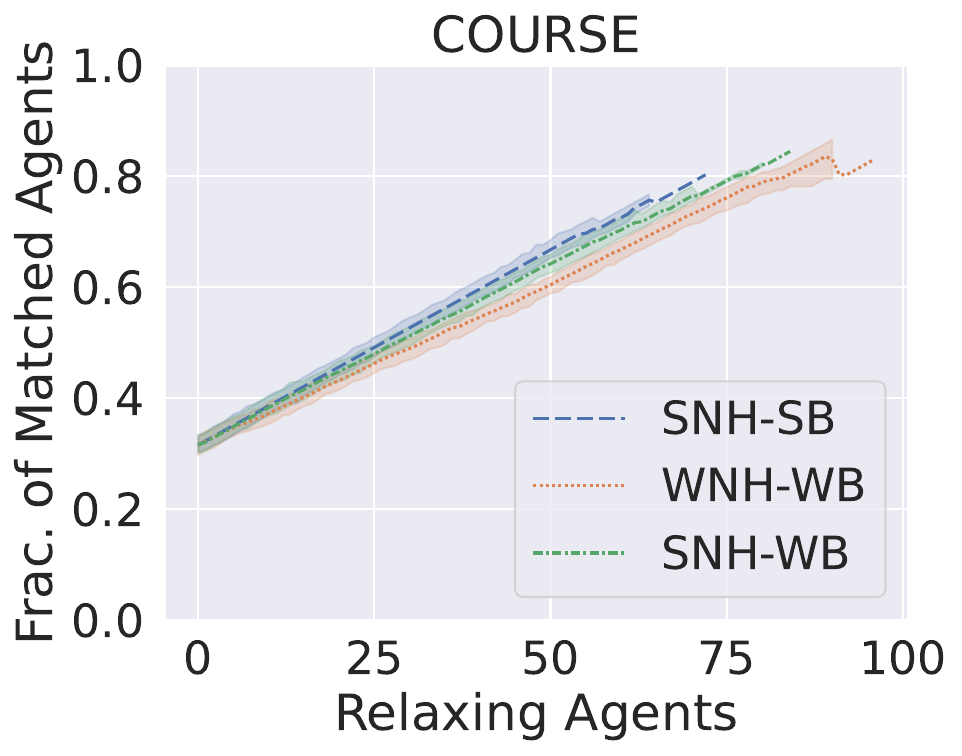}
\includegraphics[width=0.42\columnwidth]{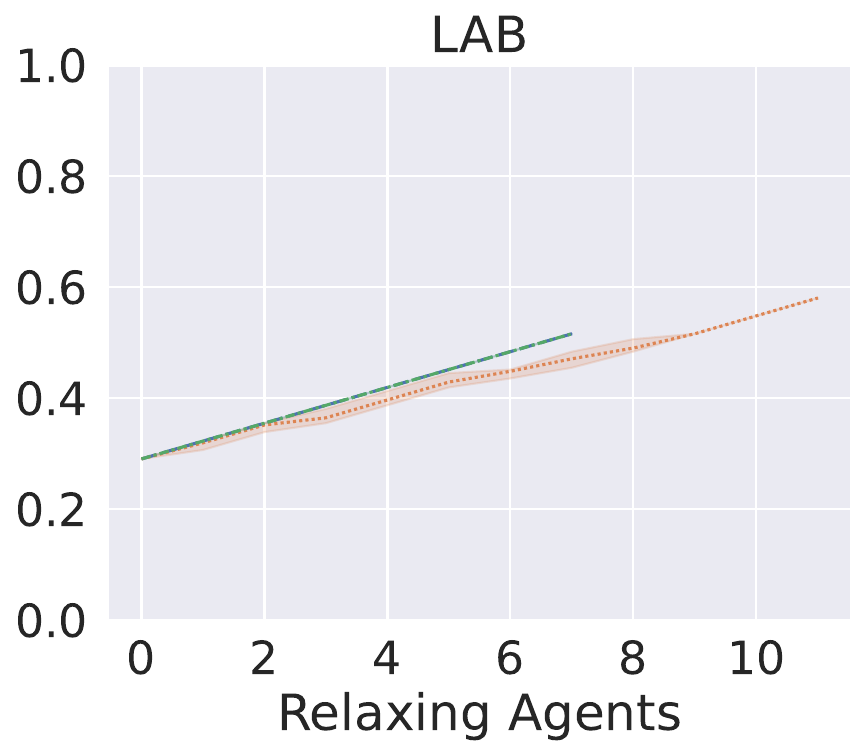}
\includegraphics[width=0.42\columnwidth]{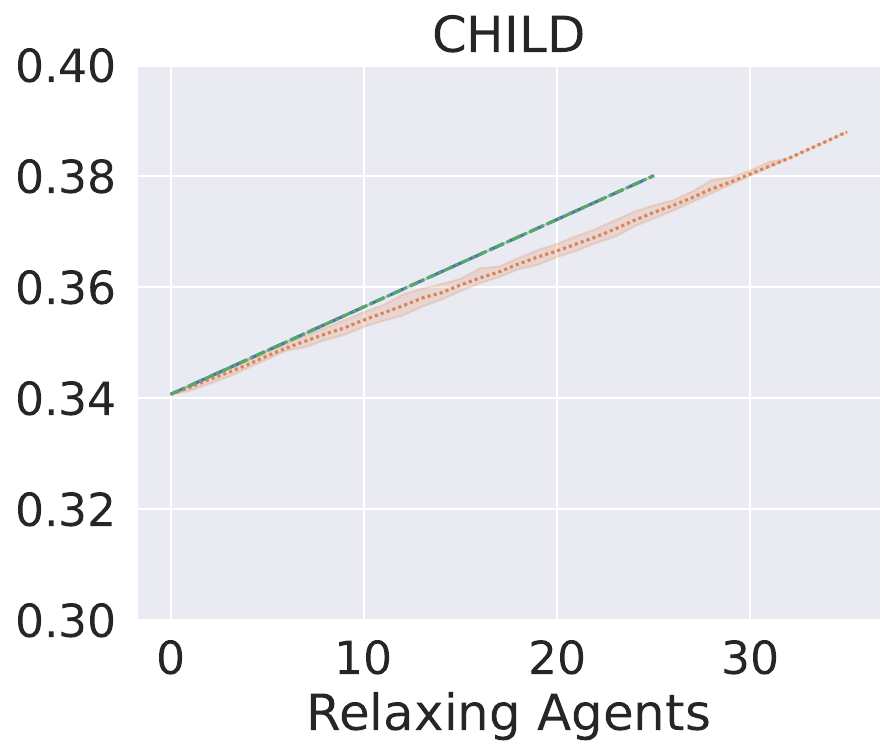}

\caption{Fraction of matched agents vs. number of relaxing agents for SNH-SB, SNH-WB, and WNH-WB guarantees. The scale and limits of the $y$ axis are different for the different datasets.
\label{fig:relaxing_agents}}
\end{figure}

\paragraph{Many-to-one and one-to-many.}

For the COURSE dataset, we allow some courses to have multiple rooms. Such a scenario may happen in exams where there needs to be some gap between students for preventing dishonesty. The CHILD dataset is used when some of the activities allow multiple children for fixing the balance between the children and the avaialabe activities. 
For the CHILD dataset, we chose 1-3 instances at random for each activity. For the COURSE dataset, we again chose 1-3 instances at random for each course.  Similar to Figure~\ref{fig:relaxing_agents}, In Figures~\ref{fig:many} (left) we can see that SNH-SB performs better with low
and moderate compliance rates, while WNH-WB performs better when more agents comply. The bumps in the graphs, where the number of relaxing agents is greatest, is due to the fact that the facilitator only addresses so many agents in a few of the replications and we can therefore safely ignore these bumps. 
In Figure~\ref{fig:many} (right) (COURSE), the fraction of matches is lower than in Figure~\ref{fig:msize}. That is since some courses might need more than one classroom so there is stronger competition on the available classrooms. The opposite is true for the CHILD dataset, where multiple children are allowed per activity.


\begin{figure}[htb]
\centering

\includegraphics[width=0.45\columnwidth]{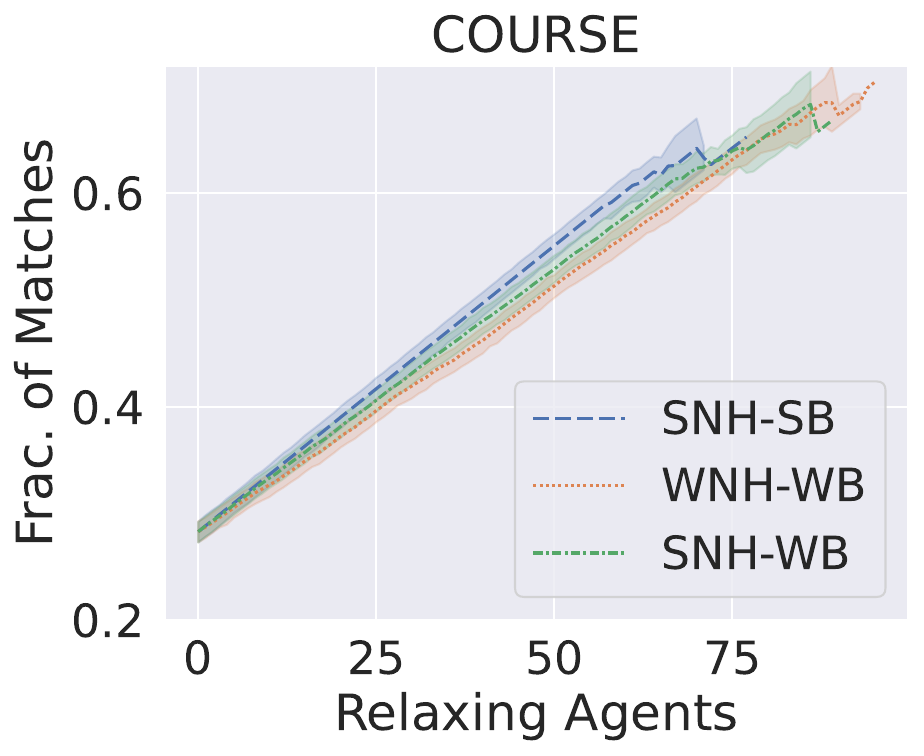}
\includegraphics[width=0.45\columnwidth]{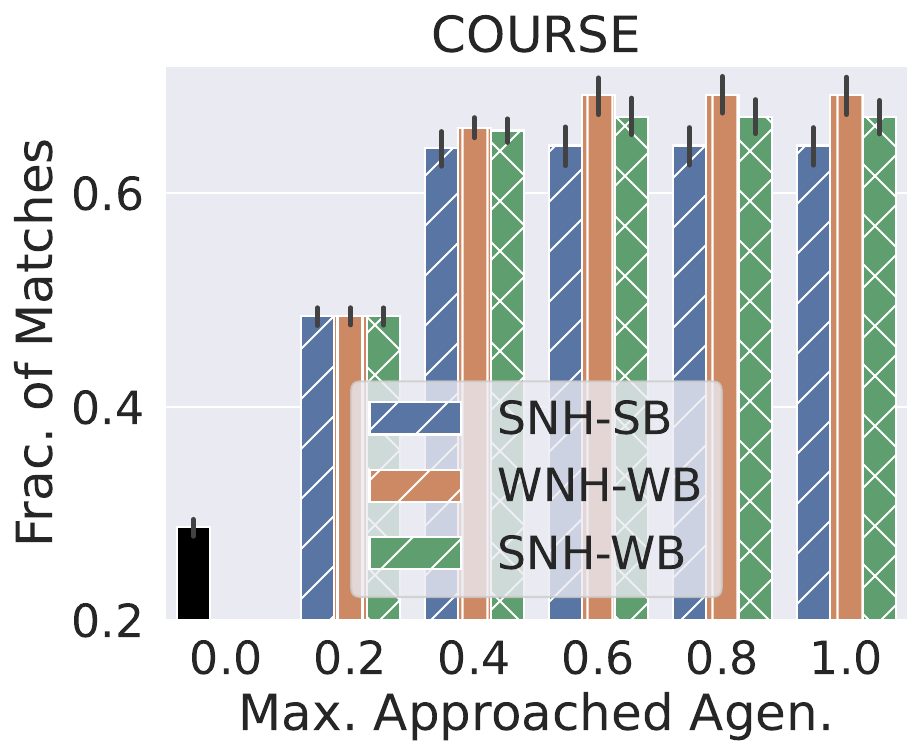}
\includegraphics[width=0.45\columnwidth]{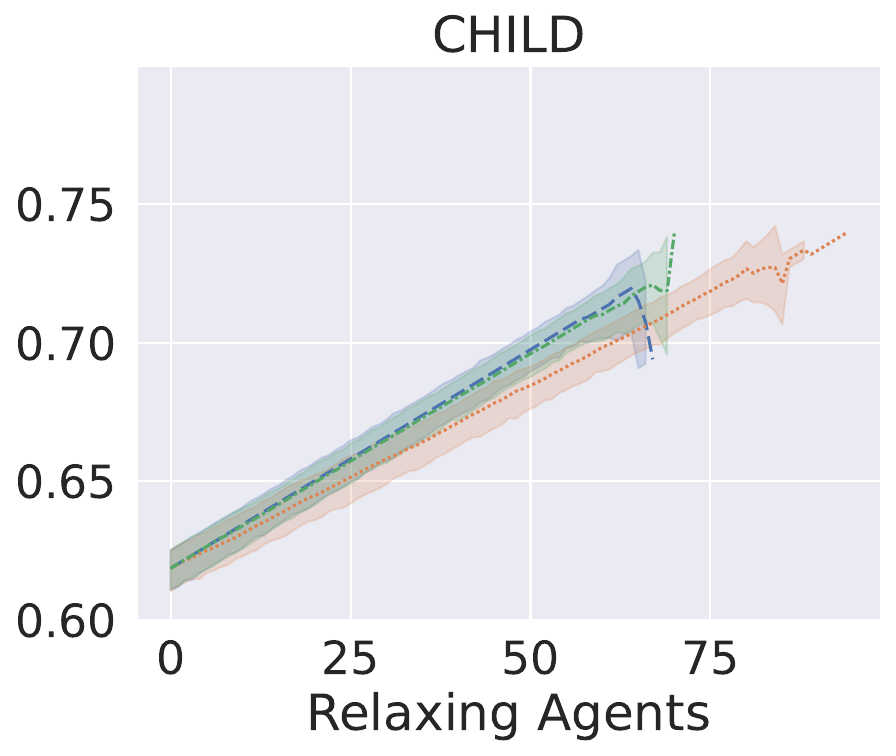}
\includegraphics[width=0.45\columnwidth]{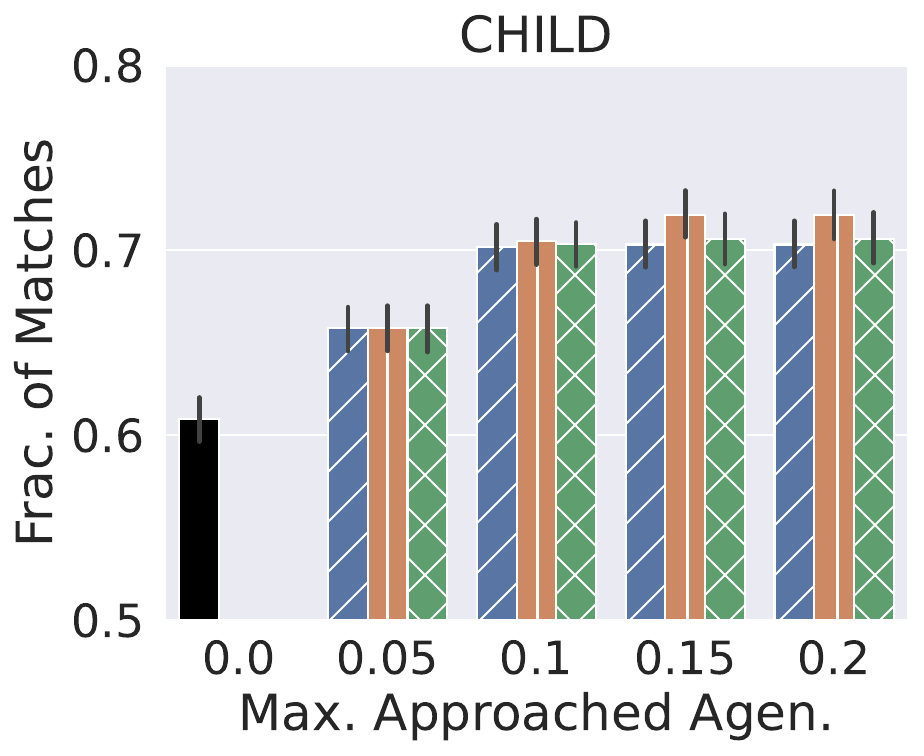}

\caption{Results for many to one (COURSE) and one to many (CHILD)\label{fig:many}}
\end{figure}

\section{Conclusion}
\label{sec:concl}

In this work, we define a matching problem where a facilitator advises agents on which constraints to relax. The facilitator's goal is to increase the allocation size while ensuring that no agent is harmed by the suggested relaxations and that relaxing agents benefit from relaxing by securing a guaranteed resource. Additionally, the facilitator is constrained by a bound on the total cost of relaxations that can be suggested.

For three allocation settings, namely one-to-one, many-to-one and one-to-many, we consider three variants of participation guarantees and two possible forms of cost aggregration.
A general approach is presented to tackle the 
one-to-one problems leading to polynomial time algorithms.
An extension of this approach is presented to obtain polynomial time algorithms  for the many-to-one and one-to-many problems as well.
Our experiments demonstrate the usefulness of these algorithms by applying them to three real-world datasets.

Expanding our research by developing other techniques for defining discomfort levels and quantifying them through numerical values is left for future work.

\clearpage

\clearpage 

\noindent
\textbf{Acknowledgment:}~ We thank the ECAI-2025 reviewers for carefully reading the paper and providing very helpful comments.
*The research of Yonatan Aumann is supported in part by ISF
grant 3007/24. The research of Sarit Kraus is supported in part by
ISF grant 2544/24.

\medskip 

\bibliography{refs}

\clearpage
\onecolumn
 \appendix
\begin{center}
\fbox{{\Large\textbf{Technical Appendix}}}
\end{center}

\bigskip

\noindent
\textbf{Paper title:}~Facilitating Matches on Allocation Platforms

\medskip

\section{Additional Material for Section~\ref{sec:algorithms}}
\label{sec:appalg}

\subsection{Proof of Theorem~\ref{pro:pcfair}}

We prove the following properties of the set $\hat{E}$ returned by Algorithm~\ref{alg:core}:
\begin{enumerate}
    \item $\hat{E}$ is SNH-SB.
    \item For the bound $\beta$ and for the selected aggregation cost function $g$, $g(\hat{E})\leq \beta$.
    \item $\mu(E\cup \hat{E})$ is maximal among all relaxation sets that fulfil the last two conditions.
    
\end{enumerate}

Throughout, we consider the base graph $G=(X,Y,E)$. We denote $k_{\min}$ as the value of $k$, used in line 14 of Algorithm~ \ref{alg:core} to define $M_{\min}$ and $E_{min}$. 
The values of $M_{\min}$ and $E_{min}$ in line 14 are denoted as $M_{k_{\min}}$ and $E_{k_{min}}$. 
The following lemmata are needed for proving the theorem: 
\begin{statementof}[Lemma~\ref{lem:mwc}]
$|M_{k_{min}}\cap E|=\mu(E)$.
\end{statementof}

\begin{proofof}[Lemma \ref{lem:mwc}]
We first note that since $M_{k_{min}}\cap E$ is a matching of $G$, $|M_{k_{min}}\cap E|\leq \mu(E)$.
It remains to show that $|M_{k_{min}}\cap E|\geq \mu(E)$.
Let $M$ be a maximum matching of $G$ (by definition, $|M|=\mu(E)$), and assume for contradiction that $|M_{k_{min}}\cap E|<\mu(E)$. Now,

\begin{align*}
w(M_{k_{min}})\geq w(M_{k_{min}}\cap E_{k_{min}}) + w(M)=w(M_{k_{min}}\cap E_{k_{min}})+ |M|\cdot(1+|X|)^2 \;\; \\\text{(Since $g(M)=0$ and  $k_{min}\leq |Y|-|M|$; otherwise, there is a smaller $k$ with with $g(\cdot)=0$ })\\
> w(M_{k_{min}}\cap E_{k_{min}}) + (|M|-1)\cdot(1+|X|)^2 + |X|(|X|+1)
\end{align*}

But from the assumption, it follows that:
\begin{align*}
w(M_{k_{min}}) \leq w(M_{k_{min}}\cap E_{k_{min}}) + (|M|-1)\cdot(1+|X|)^2 + |X|(|X|+1) \\ 
\text{(Since the weight of an edge in $E_R$ is less than $(|X|+1)$)}.\\
\end{align*}
We have a contradiction, and the lemma follows.
\end{proofof}

The following lemma complements the results of the previous lemma:

\begin{lemma}
$|M_{k_{min}}\cap E_{k_{min}}|=k_{min}$.
\label{lem:mwc2}
\end{lemma}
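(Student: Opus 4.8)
\textbf{Proof plan for Lemma~\ref{lem:mwc2}.}

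The goal is to show that the matching $M_{k_{min}}$ uses exactly $k_{min}$ of the dummy edges $E_{k_{min}}$. The plan is to combine a counting argument with the structure already established by Lemma~\ref{lem:mwc}. Since $|E_{k_{min}}|$ corresponds to $k_{min}$ dummy agents, each of which can be matched at most once, we trivially have $|M_{k_{min}}\cap E_{k_{min}}|\le k_{min}$; the work is in the reverse inequality. First I would argue that every dummy agent is in fact matched in $M_{k_{min}}$. Suppose some dummy agent $x'$ were unmatched. The weight $w(e)$ assigned in line~7 to any dummy edge $e$ strictly exceeds $\sum_{e'\in E\cup E_R} w(e')$, hence exceeds the total weight of all non-dummy edges combined, and in particular exceeds the weight of any single non-dummy edge. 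Since each dummy agent is connected to \emph{all} resources of $Y$, and there are at most $|Y|$ resources, I would show that $x'$ must be adjacent to at least one resource $y$ that is either free in $M_{k_{min}}$ or matched via a non-dummy edge --- otherwise the $k_{min}\le |Y|$ dummy agents would already saturate $Y$, leaving no room for the $\mu(E)$ original edges guaranteed by Lemma~\ref{lem:mwc}, a contradiction once $k_{min}+\mu(E)>|Y|$; and in the boundary case one checks directly. Then swapping: removing the (at most one) non-dummy edge at $y$ and adding $(x',y)$ strictly increases the total weight, contradicting the maximality of $w(M_{k_{min}})$.

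More carefully, I would avoid case analysis on $|Y|$ by noting the following. Let $a=|M_{k_{min}}\cap E_{k_{min}}|$ be the number of matched dummy agents and $b=|M_{k_{min}}\cap(E\cup E_R)|$ the number of matched non-dummy edges. The resources used are $a+b\le |Y|$ distinct vertices of $Y$. By Lemma~\ref{lem:mwc}, $b\ge |M_{k_{min}}\cap E|=\mu(E)$. If $a<k_{min}$, then there is an unmatched dummy agent $x'$; since $a+b\le|Y|$ and $a<k_{min}\le|Y|$, either $b<|Y|-a$ so some resource is free (and adjacent to $x'$, giving a strict weight increase by adding $(x',y)$), or $a+b=|Y|$ in which case I would find a resource $y$ matched by a non-dummy edge (this exists since $b\ge\mu(E)\ge 1$ in any nontrivial instance, and if $\mu(E)=0$ the claim is handled separately), remove that edge, and add $(x',y)$; because one dummy edge outweighs the entire non-dummy edge set, this strictly increases $w(M_{k_{min}})$, contradicting maximality. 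Hence $a=k_{min}$.

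The main obstacle I anticipate is handling the boundary / degenerate cases cleanly: specifically when $Y$ is exactly saturated ($a+b=|Y|$) one must be sure a \emph{non-dummy} edge is available to swap out rather than only dummy edges, and when $\mu(E)=0$ or $|Y|$ is very small the swap argument needs a separate (easy) check. Everything else is a direct consequence of the dominating weights on $E_{k_{min}}$ together with Lemma~\ref{lem:mwc}; in particular no new structural matching theory is needed beyond what the earlier lemmas provide.
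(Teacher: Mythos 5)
Your proposal is correct and follows essentially the same route as the paper's proof: assume a dummy agent is left unmatched, use $k_{min}\le|Y|$ to find a resource that is either free or matched by a non-dummy edge, and perform a swap that strictly increases the weight because a single dummy edge (weight $1+\sum_{e\in E\cup E_R}w(e)$) dominates any non-dummy edge, contradicting the maximality of $w(M_{k_{min}})$. The degenerate case you worry about is vacuous --- when $a+b=|Y|$ and $a<k_{min}\le|Y|$ one automatically gets $b\ge 1$, so a non-dummy edge to swap out always exists regardless of $\mu(E)$.
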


\begin{proofof}[lemma~\ref{lem:mwc2}]
Otherwise, there is an unmatched agent $x\in X_{k_{min}}$ with respect to $M_{k_{min}}$. 
Since $|X_{k_{min}}|\leq |Y|$, at least one agent $x'\notin X_{k_{min}}$ is in $X({k_{min}})$. 
Let $y'$ be the resource matched to $x'$ in $M_{k_{min}}$. In this case, from definitions,  $w(M_{k_{min}}\setminus \{(x',y')\}\cup \{(x',y')\})>w(M_{k_{min}})$, 
contradicting the fact that $M_{k_{min}}$ is a maximum weighted matching of $(X\cup X_k,Y, E\cup E_R \cup E_{k_{min}})$.

\end{proofof}

The following claims are also used for proving 
Lemma~\ref{lem:inc}.
(The proof of Lemma~\ref{lem:inc}, which uses these
claims, appears in the main paper.)

\medskip 

\begin{claim}\label{cl:result_1}
 Let $G=(X,Y,E)$ be a bipartite graph and let $x\in \Gamma(E)$. Let $(x,y)\in E_R$ be an edge.
    Then $\mu(E)=\mu(E\cup \{(x,y)\}$.
\end{claim}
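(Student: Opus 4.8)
The plan is to argue by contradiction, following the proof idea already sketched in the main text. The statement asserts that adding a relaxable edge $(x,y)$ incident to an agent $x$ that lies in every maximum matching of $E$ cannot enlarge the maximum matching. The key observations are: (a) adding a single edge changes the matching number by at most one, and (b) any matching strictly larger than $\mu(E)$ is forced to use the new edge, which exposes a maximum matching of $E$ that misses $x$.

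\noindent
\textbf{Step 1 (trivial bounds).} First I would note that $E\subseteq E\cup\{(x,y)\}$ gives $\mu(E)\le\mu(E\cup\{(x,y)\})$. For the reverse direction I would observe that $\mu(E\cup\{(x,y)\})\le\mu(E)+1$: any matching $M'$ of $E\cup\{(x,y)\}$ either omits $(x,y)$, in which case $M'\subseteq E$ and $|M'|\le\mu(E)$, or contains $(x,y)$, in which case $M'\setminus\{(x,y)\}\subseteq E$ and $|M'|-1\le\mu(E)$. Hence if the two matching numbers differ, they differ by exactly one, and I may assume for contradiction that $\mu(E\cup\{(x,y)\})=\mu(E)+1$.

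\noindent
\textbf{Step 2 (the new edge must be used).} Take a maximum matching $M'$ of $E\cup\{(x,y)\}$, so $|M'|=\mu(E)+1>\mu(E)$. Since every matching contained in $E$ has size at most $\mu(E)$, the matching $M'$ is not contained in $E$; as the only edge of $E\cup\{(x,y)\}$ outside $E$ is $(x,y)$, we conclude $(x,y)\in M'$. In particular $x$ is matched to $y$ in $M'$, and to no other vertex.

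\noindent
\textbf{Step 3 (contradiction).} Put $M:=M'\setminus\{(x,y)\}$. Then $M$ is a matching, $M\subseteq E$, and $|M|=\mu(E)$, so $M$ is a maximum matching of $E$ in which $x$ is unmatched. This contradicts $x\in\Gamma(E)$, i.e.\ the assumption that $x$ participates in all maximum matchings of $E$. Therefore $\mu(E)=\mu(E\cup\{(x,y)\})$, as claimed.

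\noindent
\textbf{Anticipated difficulty.} I do not expect a genuine obstacle here: the argument is short and elementary. The only points that require a little care are justifying the upper bound $\mu(E\cup\{(x,y)\})\le\mu(E)+1$ in Step~1 (so that ``the numbers differ'' can be sharpened to ``they differ by exactly one''), and the observation in Step~2 that a matching of size larger than $\mu(E)$ is \emph{forced} to include the newly added edge; everything else is immediate.
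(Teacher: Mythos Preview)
Your proof is correct and follows essentially the same contradiction argument as the paper: assume the matching number strictly increases, take a maximum matching $M'$ of the enlarged graph, and observe that deleting the edge at $x$ yields a maximum matching of $E$ missing $x$, contradicting $x\in\Gamma(E)$. The only cosmetic difference is that you first pin down $\mu(E\cup\{(x,y)\})=\mu(E)+1$ and argue directly that $(x,y)\in M'$, whereas the paper does a short case split on whether $x$ is matched in $M'$; both routes arrive at the same contradiction.
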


\begin{proof}
We first note that since $E\subseteq E\cup \{(x,y)\}$, it follows that $\mu(E)\leq \mu(E\cup \{(x,y)\}$. 
Assume for contradiction that $\mu(E)<\mu(E\cup \{(x,y)\})$. 
Let $M$ be a maximum matching of $G=(X,Y,E)$. We note that $M$ has exactly one edge with $x$ as an endpoint. 
(Since $M$ is a matching, $M$ has at most one edge with $x$ as an endpoint. Also, since $x$ participates in all maximum matchings of $G$, $M$ has at least one such edge.) 

Let $M'$ be a maximum matching of $(X,Y,E\cup \{(x,y)\})$. We note that since $M'$ is a matching, it has at most one edge incident on $x$. If there is no edge incident on $x$ in $M'$, then $M'$ is a matching of $G$ whose size is greater than $\mu(E)$; this contradicts the assumption that $M$ is a maximum matching of $G$. Otherwise, if we remove from $M'$ the edge with $x$ as an endpoint, we get a matching of $G$ with size at least $|M|$ and that matching does not contain $x$. This contradicts the participation of $x$ in all maximum matchings of $G$.
\end{proof}

\begin{claim}\label{cl:result_2}
    Let $G=(X,Y,E)$ be a graph and let $x$ be an agent that does not participate in all maximum matchings of $G$. Let $(x,y)$ be an edge such that $(x,y)\in E_R \notin E$.
    If $x$ participates in all maximum matchings of $(X, Y, E\cup \{(x,y)\})$,  then 
    $\mu(E)+1=\mu(E\cup \{(x,y)\})$.
\end{claim}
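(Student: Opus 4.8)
\textbf{Plan for proving Claim~\ref{cl:result_2}.}
The goal is to show that if $x$ does not participate in all maximum matchings of $G=(X,Y,E)$ but does participate in all maximum matchings of $(X,Y,E\cup\{(x,y)\})$, then $\mu(E\cup\{(x,y)\})=\mu(E)+1$. First I would pin down the two obvious bounds. Since $(X,Y,E\cup\{(x,y)\})$ is a supergraph of $G$, we have $\mu(E)\le\mu(E\cup\{(x,y)\})$. On the other hand, any matching in $E\cup\{(x,y)\}$ either already lies in $E$ or uses the single new edge $(x,y)$; deleting that edge (if present) yields a matching in $E$ whose size is at most one smaller, so $\mu(E\cup\{(x,y)\})\le\mu(E)+1$. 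Hence the value is either $\mu(E)$ or $\mu(E)+1$, and it suffices to rule out the former.

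The remaining step is a proof by contradiction: assume $\mu(E\cup\{(x,y)\})=\mu(E)$. Then every maximum matching of $G$ has the same cardinality as a maximum matching of the augmented graph, so every maximum matching of $G$ is in fact a maximum matching of $(X,Y,E\cup\{(x,y)\})$ as well (it is a matching there of maximum size). But by hypothesis $x$ does not participate in all maximum matchings of $G$, so there is a maximum matching $M$ of $G$ that leaves $x$ unmatched. This $M$ is then a maximum matching of $(X,Y,E\cup\{(x,y)\})$ in which $x$ is unmatched, contradicting the assumption that $x$ participates in all maximum matchings of the augmented graph. Therefore $\mu(E\cup\{(x,y)\})=\mu(E)+1$, as claimed.

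I do not anticipate a genuine obstacle here; this is essentially the same argument already recorded in the main-paper proof of Result~(2), just restated as a standalone claim in the appendix. The only point requiring a little care is the intermediate observation that when the two matching numbers coincide, maximum matchings of $G$ and of $G\cup\{(x,y)\}$ literally coincide as sets — this follows because a maximum matching of $G$ is a matching of the larger graph of the maximum possible size, hence maximum there, and conversely a maximum matching of the larger graph that happens not to use $(x,y)$ is a maximum matching of $G$. Once that identification is made, the participation hypothesis on $x$ is directly contradicted, so the proof is complete.
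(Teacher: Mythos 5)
Your proposal is correct and follows the paper's own proof essentially verbatim: the same two bounds ($\mu(E)\le\mu(E\cup\{(x,y)\})\le\mu(E)+1$), followed by the same contradiction argument that a maximum matching of $G$ avoiding $x$ would remain a maximum matching of the augmented graph if the two values coincided. No gaps; nothing further to add.
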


\begin{proof}
    We assume that $x$ participates in all maximum matchings of $(\agset, \rset, E\cup \{(x,y)\})$ and prove that $\mu(E)+1=\mu(E\cup \{(x,y)\})$.
    Since $(X,Y,E\cup \{(x,y)\})$ is supergraph of $G$, 
    $\mu(E)\leq \mu(E\cup \{(x,y)\})$. On the other hand, from definition of matching,
    $\mu(E\cup \{(x,y)\})\leq \mu(E)+1$.
    Therefore, either $\mu(E\cup \{(x,y)\}) = \mu(E)$ or $\mu(E\cup \{(x,y)\}) = \mu(E)+1$. Assume for contradiction that $\mu(E\cup \{(x,y)\}) = \mu(E)$. In this case, all maximum matchings of $G$ are also maximum matchings of $(X,Y,E\cup \{(x,y)\})$. From the lemma's definitions, there is a maximum matching of $G$ in which $x$ does not participate. Since it is also a maximum matching of $(X,Y,E\cup \{(x,y)\})$, $x$ does not participate in all maximum matchings of $(X,Y,E\cup \{(x,y)\})$, a contradiction.
\end{proof}

We now present a result that is common knowledge in the area of matching theory
(see e.g., \cite{Lovasz-Plummer-1986}).
We present its proof only for completeness.
In stating this result, we need the following definitions.
A \textbf{free node} of a bipartite graph with respect to a maximum
matching $M$ is one that does not appear as an end point of any 
edge in $M$.
An \textbf{alternating path} with respect to a matching $M$ 
is a simple path in which the first edge is in $M$ and the edges
alternate between those in $M$ and those not in $M$.
An \textbf{even length alternating path} with respect to a matching $M$ 
is an alternating path in which the number of edges is \emph{even}.

\begin{claim}\label{cl:result_3}
Suppose $G(X,Y,E)$ is bipartite
graph. A node $x \in X$ does not participate in all maximum matchings of $G$ if and only if there is an even-length alternating path between $x$ and a free node with respect to any maximum matching of $G$.
\end{claim}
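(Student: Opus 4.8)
\textbf{Proof proposal for Claim~\ref{cl:result_3}.}

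The plan is to prove both directions using standard symmetric-difference arguments on matchings. For the forward direction, suppose $x \in X$ does not participate in all maximum matchings of $G$. Then there is a maximum matching $M_0$ that leaves $x$ unmatched. Let $M$ be an arbitrary maximum matching of $G$. If $M$ also leaves $x$ unmatched, the length-$0$ "path" consisting of $x$ alone is a trivial even-length alternating path to the free node $x$ itself (and if the convention forbids a length-$0$ path, one may instead argue directly that $x$ is free and pick a neighbor's alternating structure — but in fact when $x$ is free w.r.t.\ $M$ the statement is immediate). So assume $M$ matches $x$. Consider the symmetric difference $M \oplus M_0$; since both are maximum matchings, every connected component of $M \oplus M_0$ is an even-length alternating cycle or an even-length alternating path (no augmenting path can exist, since both are maximum, and no component can have odd length with unequal numbers of $M$- and $M_0$-edges that would contradict $|M| = |M_0|$... more precisely, odd-length paths in $M\oplus M_0$ would change cardinality). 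The vertex $x$ is matched in $M$ but free in $M_0$, so $x$ is an endpoint of a path component $P$ of $M \oplus M_0$; the edge of $P$ at $x$ is an $M$-edge, and since $|M| = |M_0|$ the path $P$ has equal numbers of $M$- and $M_0$-edges, hence even length, and its other endpoint is a vertex free with respect to $M$. Thus $P$ is an even-length alternating path (w.r.t.\ $M$) from $x$ to a free node, as required.

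For the reverse direction, suppose there is an even-length alternating path $P$ with respect to a maximum matching $M$, running from $x$ to a free node $z$, where the edge at $x$ is in $M$. Since $P$ has even length and alternates starting and ending... the edge at $z$ is not in $M$ (because $z$ is free, no $M$-edge is incident to it). Flipping $P$ — that is, taking $M' = M \oplus P$ — exchanges $M$-edges and non-$M$-edges along $P$; since $P$ starts with an $M$-edge and ends with a non-$M$-edge and has even length, it contains equally many edges of each type, so $|M'| = |M|$, i.e.\ $M'$ is again a maximum matching. In $M'$, the first edge of $P$ (incident to $x$) has been removed, and no other edge of $M'$ is incident to $x$ (the second edge of $P$ at $x$ is not incident to $x$ since $P$ is a simple path, and $M' \setminus P = M \setminus P$ has no edge at $x$ because $x$ was matched by the now-removed first edge of $P$). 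Hence $M'$ leaves $x$ unmatched, witnessing that $x$ does not participate in all maximum matchings of $G$.

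The main technical point to get right is the parity bookkeeping: one must carefully track that an even-length alternating path starting with a matched edge ends with an unmatched edge, and conversely, so that flipping it preserves cardinality; and one must confirm that in $M \oplus M_0$ the component containing $x$ is genuinely a path (not a cycle) with the correct endpoints. Both rely only on the elementary fact that the symmetric difference of two matchings is a disjoint union of paths and cycles, together with $|M| = |M_0|$ forcing all path components to be even. I do not anticipate a genuine obstacle here — this is textbook matching theory (cf.\ \cite{Lovasz-Plummer-1986}) — but stating the endpoint/parity details cleanly, and handling the degenerate case where $x$ is itself free with respect to $M$, are the parts that warrant care.
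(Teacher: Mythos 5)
Your proof is correct and follows essentially the same route as the paper's: the forward direction via the symmetric difference of an arbitrary maximum matching with one that misses $x$ (whose component at $x$ must be an even alternating path ending at a free node, since an odd path would be augmenting), and the reverse direction by flipping the path to produce an equal-size matching that leaves $x$ unmatched. Your treatment of the degenerate case where $x$ is already free and of the parity/endpoint bookkeeping matches the paper's argument, so no changes are needed.
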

\begin{proof}
We first consider the ``Only if'' part.
Suppose $x \in X$ participates in at least one but not all maximum matchings of $G$.
 Let $M_1$ and $M_2$ be maximum matchings of $G$ with and without $x$ respectively. Consider the 
 symmetric difference $M_1 \bigoplus M_2$, defined by 
 $M_1 \bigoplus M_2$ = $(M_1 \cup M_2) 
 \setminus (M_1 \cap M_2)$.
 Since $x$ participates only in $M_1$, there is a path in $M_1 \bigoplus M_2$ with $x$ as one endpoint. (This path is not a cycle since the  degree of $x$ is $1$. Also, this path does not contain cycles since the maximum degree of a node in $M_1 \bigoplus M_2$ is $2$.) 
 If this path is of odd length, it is an augmenting path for $M_2$,
 and by Berge's lemma, $M_2$ cannot be a maximum matching.
 Otherwise, it is an even-length path that translates to an even-length alternating path between $x$ and a free node in $M_1$. (From the definitions, the path must end with a free node of $M_1$.) Since $x$ is free in $M_2$, $x$ has an even length alternating path of size 0 to a free node in $M_2$. Since this conclusion holds for all matchings 
 $M_1$ and $M_2$ such that $x$ participates in $M_1$ and $x$ does not 
 participate in  $M_2$, the ``Only if"  part follows.  

We consider the ``If'' part of the lemma.
Given a bipartite graph $G=(\agset, \rset, E)$ and any maximum matching $M$ of $G$, suppose there exists an even length alternating path between a vertex $x$ and a free node with respect to $M$.
If this path is of length 0, then $x$ is a free node with respect to
$M$ and so $x$ doesn't participate in all the maximum matchings of $G$.
So, we can assume that the even length alternating path to a free node
must have two or more edges.
In this case, we can replace any two adjacent edges in the alternating path and construct a maximum matching without $x$.
This completes our proof for the ``If'' part and also that of
the Lemma.
\end{proof}

We need the following lemma for proving that  removing the dummy agents does not harm the gained properties:

\begin{lemma}
    \label{lem:remove}
    Let $G=(X,Y,E)$ be a bipartite graph, let $x \in X$ be an agent that participates in all maximum matchings of $G$ and let $x'$ be any other agent. Agent $x$ participates in all maximum matchings of $(X\setminus \{x'\},Y,E\setminus E_{x'})$.        
\end{lemma}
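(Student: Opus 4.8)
\textbf{Proof plan for Lemma~\ref{lem:remove}.}

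The plan is to argue by contradiction: suppose $x \in \Gamma(E)$ in $G = (X,Y,E)$ but $x$ fails to participate in some maximum matching of the reduced graph $G' = (X \setminus \{x'\}, Y, E \setminus E_{x'})$. Let $M'$ be a maximum matching of $G'$ that does not cover $x$. The goal is to lift $M'$ to a matching of $G$ that is maximum in $G$ and still does not cover $x$, contradicting $x \in \Gamma(E)$.

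First I would observe the relationship between $\mu$ of the two graphs. Since $G'$ is obtained from $G$ by deleting one vertex $x'$ (and its incident edges), any maximum matching of $G$ covers $x'$ with at most one edge, so $\mu(E) \le \mu(E \setminus E_{x'}) + 1$, i.e.\ $\mu(G') \ge \mu(G) - 1$. Now take the matching $M'$ in $G'$ that misses $x$. If $M'$ already has size $\mu(G)$, then $M'$ is a maximum matching of $G$ (it is a matching of $G$ of maximum size) missing $x$, contradiction. Otherwise $|M'| = \mu(G) - 1$, and the task is to extend $M'$ by one edge in $G$ without touching $x$.

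The key step is this extension. Consider $M'$ as a matching in $G$; it is not maximum in $G$, so by Berge's lemma there is an augmenting path $P$ for $M'$ in $G$. Both endpoints of $P$ are $M'$-free vertices of $G$. I claim we can choose such an augmenting path avoiding $x$: the only vertices free in $G$ but not available in $G'$ are $x'$ and possibly $x$ itself. Since $x \in \Gamma(E)$, every maximum matching of $G$ covers $x$; I would use this to rule out the case that the \emph{only} augmenting path through the free vertices passes through $x$. Concretely: augmenting along any augmenting path $P$ yields a matching $M'' = M' \oplus P$ with $|M''| = |M'| + 1 = \mu(G)$, hence maximum in $G$; if $P$ does not have $x$ as an endpoint, $M''$ still misses $x$ (augmenting only changes the status of the two endpoints of $P$), contradiction. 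So every augmenting path for $M'$ in $G$ must have $x$ as one endpoint. But $x'$ is also $M'$-free in $G$ (it is absent from $G'$, hence unmatched by $M'$), and since $\mu(G') = \mu(G) - 1$, the vertex $x'$ must be coverable: more carefully, I would take the augmenting path $P$ (which has $x$ as an endpoint and some other free vertex $z$ as the other endpoint) and augment to get the maximum matching $M''$ of $G$ covering $x$; then, reasoning with $\Gamma$ and the even-length alternating path characterization (Result~(3) in the excerpt, i.e.\ Claim~\ref{cl:result_3}), derive that $x$ could be made free in a maximum matching of $G$ after all.

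The main obstacle I anticipate is handling the interaction between the two ``special'' free vertices $x$ and $x'$ in the symmetric-difference/augmenting-path analysis — ensuring that the augmentation we perform to reach size $\mu(G)$ can be routed so as to cover $x'$ (or some other vertex) rather than being forced to cover $x$. The cleanest route is probably: take a maximum matching $M$ of $G$ (which covers $x$ since $x \in \Gamma(E)$), form $M \oplus M'$, and analyze its components. Since $|M| = |M'| + 1$, there is exactly one more augmenting-type path component favoring $M$; since $x$ is covered by $M$ and free in $M'$, $x$ is an endpoint of some path component $Q$ of $M \oplus M'$. If $Q$ is an augmenting path for $M'$ (odd length, the extra edge coming from $M$), augmenting $M'$ along $Q$ gives a maximum matching of $G$; but then I would instead argue that $x'$, being free in $M'$, is an endpoint of some component $Q'$, and by a careful choice we can augment $M'$ along a component \emph{not} ending at $x$ — using that $x'$'s component provides the needed extra edge — yielding a maximum matching of $G$ missing $x$, the desired contradiction. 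Making this last choice rigorous (showing such a component exists and that augmenting along it really produces a matching of size $\mu(G)$) is the crux.
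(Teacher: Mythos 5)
Your setup is right and is in the same spirit as the paper's own (terse) proof: reduce to the case $\mu(G') = \mu(G)-1$ for $G' = (X\setminus\{x'\},Y,E\setminus E_{x'})$ and run an augmenting-path argument. But the step you yourself flag as ``the crux'' is genuinely open in your write-up, and neither of the two routes you sketch closes it as stated. The missing idea is to use the maximality of $M'$ \emph{in the reduced graph} $G'$ to control where augmenting paths can live. Concretely: any $M'$-augmenting path in $G$ that avoids the vertex $x'$ is already an $M'$-augmenting path in $G'$, contradicting that $M'$ is maximum there. Hence every $M'$-augmenting path in $G$ contains $x'$; since $x'$ is $M'$-exposed and all internal vertices of an augmenting path are matched, $x'$ must be an endpoint; and since an augmenting path in a bipartite graph has odd length, its other endpoint lies in $Y$. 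The exposed vertex $x$ therefore cannot lie on the path at all (not an endpoint, and not internal because it is unmatched), so augmenting produces a maximum matching of $G$ that still misses $x$ --- the desired contradiction. With this observation your first argument closes in one line; without it, the conclusion you actually reach (``every augmenting path has $x$ as an endpoint'') is not by itself contradictory, and your appeal to the even-length alternating-path characterization is left hanging.

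The same gap reappears in your symmetric-difference variant: you assert that ``$x'$'s component provides the needed extra edge,'' i.e., that the component of $M \oplus M'$ containing $x'$ is the $M$-heavy (augmenting) one, but nothing you have said forces this --- that component could be balanced. The fix is again the same fact: every $M$-heavy path component of $M \oplus M'$ is an $M'$-augmenting path in $G$, so it must contain $x'$ (else it would contradict the maximality of $M'$ in $G'$), and by bipartite parity it cannot also have $x$ as an endpoint; flipping it yields a maximum matching of $G$ exposing $x$. (You should also record that $x'$ is covered by $M$ at all in this case, which holds because otherwise $M$ itself would be a matching of $G'$ larger than $M'$.) For reference, the paper's proof instead splits on whether $x' \in \Gamma(E)$; its trivial case matches your $\mu(G')=\mu(G)$ case, and its nontrivial case is the same augmenting-path idea you are attempting, so once the above observation is added your argument becomes a complete (and more detailed) version of the paper's.
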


\begin{proof}
    \textbf{Case 1: $x'$ participates in all maximum matchings of $G$.}
    Removing $x'$ from the graph reduces the maximum matching size by 1. Returning $x'$ and its adjucent edges can add either $x'$ or $x$ to the maximum matching but not both (e.g., when expanding via an augmenting path).

    \textbf{Case 2: $x'$ does not participate in all maximum matchings of $G$.}
    In this case, any maximum matching of $(X\setminus \{x'\},Y,E\setminus E_{x'})$ is also a maximum matching of $G$ and therefore $x$ participates in all of them. 
\end{proof}

The following definitions are needed for proving the lemma below:

We define a precedence order~ $\preceq$~ on subsets of $E_R$.
Let $\hat{E},\hat{F}\subseteq E_R$ be two relaxation sets. We say that $\hat{E}\preceq \hat{F}$ if and only if $\hat{E}$ appears before $\hat{F}$ in the order.
Let $u:E_R\rightarrow \mathbb{R}$. We say that  $u$ \emph{represents} $\preceq$ if
\begin{align*}
\hat{F} \preceq \hat{E} \iff \sum_{e\in \hat{F}}u(e) \leq  \sum_{e\in \hat{E}} u(e). 
\end{align*}

We are now ready to prove Lemma~\ref{lem:mkpc} which is essential for proving the correctness of Theorem \ref{pro:pcfair}.

\medskip
\noindent
\textbf{Statement of Lemma~\ref{lem:mkpc}:}

\noindent
The set, $\hat{E}$ returned by  Algorithm~\ref{alg:core} has the following properties:
\begin{enumerate}[label={(\arabic*)}]
    \item It provides the strong benefit for relaxers.
    \item It provides the strong no harm guarantee.
    \item For any SNH-SB set  $\hat{F}\subseteq E_R$,  $|M_{k_{min}}|\geq \mu(E\cup E_{k_{min}}\cup \hat{F})$.
    \item For any SNH-SB set $\hat{F}\subseteq E_R$ and for any aggregation function $g$, if $|M_{k_{min}}|=\mu(E\cup E_{k_{min}} \cup \hat{F} )$, then $\hat{E} \preceq \hat{F}$.        
\end{enumerate}

Set $w^*=|X|+1$ and let $\hat{E}=M_{k_{min}}\setminus E$ 
be the relaxation returned by the algorithm.

\begin{proofof}[lemmas ~\ref{lem:mkpc}(1) and ~\ref{lem:mkpc}(2)]
Consider $\hat{F}\subseteq \hat{E}$. Let$\bar{M}=M_{k_{min}}\cap E$.  
Without loss of generality, assume that $\Gamma(E) \cup \hat{F}\neq \emptyset$.

Recall that by Algorithm \ref{alg:core}, $M_{k_{min}}$ is a matching in $(X,Y,E\cup E_{k_{min}} \cup \hat{E})$ with $M_{k_{min}}=\bar{M} \cup \hat{E}$. So, also $\bar{M}\cup\hat{F}$ is a matching in $(X,Y,E\cup E_{k_{min}}\cup \hat{F})$.  
Consider a maximum matching $\bar{M}_{\hat{F}}$ in $(X,Y,E\cup E_{k_{min}}\cup\hat{F})$. 
Then, 
\begin{align}
    |\bar{M}_{\hat{F}}|\geq |\bar{M}\cup \hat{F}| = |\bar{M}|+|\hat{F}| = \mu(E) + k_{min} + |\hat{F}| \label{eq:MAF}
\end{align}
where the last equality is by Lemmas \ref{lem:mwc} and \ref{lem:mwc2}. 
On the other hand,
\begin{align}
    |\bar{M}_{\hat{F}}|= & 
    ~k_{min}+ | (\bar{M}_{\hat{F}}\cap E) \cup (\bar{M}_{\hat{F}}\cap \hat{F})| \leq
    k_{min} + |\bar{M}_{\hat{F}}\cap E| + |\bar{M}_{\hat{F}}\cap \hat{F}| \nonumber\\
    ~\leq~ & k_{min}+ \mu(E) + |\hat{F}|. 
    \label{eq:MAF2}
\end{align}
So, together with \eqref{eq:MAF}, $|\bar{M}_{\hat{F}}| =k_{min} + \mu(E)+|\hat{F}|$, and the weak inequality of \eqref{eq:MAF2} must be an equality.  So, (i) $|\bar{M}_{\hat{F}}\cap E|=\mu(E)$, and (ii) $|\bar{M}_{\hat{F}}\cap \hat{F}|=|\hat{F}|$.  So, by (i) $\bar{M}_{\hat{F}}\cap E$ is a maximum matching of $(X,Y,E)$, so $\Gamma(E)\subseteq \bar{M}_{\hat{F}}$, and by (ii) $\hat{F}\subseteq \bar{M}_{\hat{F}}$. 
Since it is correct for any $\hat{F}$, $\hat{E}$ has the no-harm and benefit for relaxers SNH-SB guarantees with respect to $(X\cup X_{k_{min}},Y,E\cup E_{k_{min}}\cup \hat{E})$.

By construction of $E_{k_{min}}$, all agents $X_{k_{min}}$ participate in all maximum matchings of $(X\cup X_{k_{min}},Y,E\cup E_{k_{min}}\cup \hat{F})$. 
Therefore, applying lemma~\ref{lem:remove} multiple times and removing the 
$X_{k_{min}}$ agents and the $E_{k_{min}}$ edges, we get that $\hat{E}$ has the SNH-SB guarantees also with respect to $(X,Y,E\cup\hat{E})$ as requested.

\end{proofof}

\begin{proofof}[lemma ~\ref{lem:mkpc}(3)]
For purposes of contradiction, suppose that there is another SNH-SB set $\hat{F}$, for which the maximum matching of $(X\cup X_{k_{min}},Y,E \cup \hat{F} \cup E_{k_{min}})$ is of greater size. 

We set $\hat{F}'\subseteq \hat{F}$ as an inclusion minimal subset of $\hat{F}$ for which $\mu(\hat{F}')=\mu(\hat{F})$.

In particular, since $\hat{F}'$ is inclusion minimal,  $\mu(E\cup E_{k_{min}}\cup{\hat{F}})>\mu(E\cup E_{k_{min}}\cup{\hat{F}}\setminus \{e\}))$.
So, the conditions of Lemma \ref{lem:inc} hold for $(E\cup E_{k_{min}})$ and $\hat{F}'$. 
Let $\bar{M}_{\hat{F}'}$ be a maximum matching of $(X,Y,E\cup E_{k_{min}}\cup{\hat{F}})$. 
So, since $M_{k_{min}}$ is a matching of $(X\cup X_{k_{min}},Y,E\cup E_{k_{min}} \cup \hat{E})$, $|\bar{M}_{\hat{F}'}|> |M_{k_{min}}|$.
So,
\begin{align*}
    |\bar{M}_{\hat{F}'}\cap \hat{F}'| ~=~ & |\bar{M}_{\hat{F}'} \setminus E | = |\bar{M}_{\hat{F}'}| - |\bar{M}_{\hat{F}'}\cap E| \\
    ~=~ &  |\bar{M}_{\hat{F}'}| - \mu(E) \;\; \text{by Lemma~\ref{lem:inc}} \\
    ~>~ & |M_{k_{min}}| - \mu(E) \;\; \text{by assumption} \\
    ~=~ & |M_{k_{min}}| - |M_{k_{min}}\cap E| \;\; \text{by Lemma~\ref{lem:mwc}} \\
    ~=~ & |M_{k_{min}} \cap \hat{E}| +|M_{k_{min}} \cap E_{k_{min}}|
\end{align*}

\begin{align}
\mbox{So,}\;\;\;    
w(M_{k_{min}}) ~<~ & w(M_{k_{min}}\cap    
    E_{k_{min}}) + |M_{k_{min}}\cap E|(w^*)^2 +    |M_{k_{min}}\cap \hat{E}|w^* \nonumber \\
    ~=~ & w(M_{k_{min}}\cap    
    E_{k_{min}}) + \mu(E)(w^*)^2+ |M_{k_{min}}\cap \hat{E}|w^* \nonumber\\
    ~\leq~& w(M_{k_{min}}\cap    
    E_{k_{min}}) + \mu(E)(w^*)^2+ (|\bar{M}_{\hat{F}}\cap \hat{F}|-1)w^*\nonumber \\
    ~<~  & w(M_{k_{min}}\cap    
    E_{k_{min}}) + \mu(E)(w^*)^2+ |\bar{M}_{\hat{F}'}\cap \hat{F}|w^*-|X|\nonumber \\
    ~<~  & w(M_{k_{min}}\cap    
    E_{k_{min}}) + \mu(E)(w^*)^2+ \sum\nolimits_{e\in \bar{M}_{\hat{F}'}\cap \hat{F}'}w(e) \label{eq:sum}\\
    ~=~ & w(M_{k_{min}}\cap    
    E_{k_{min}}) + w(\bar{M}_{\hat{F}'}\cap E)+w(\bar{M}_{\hat{F}'}\cap \hat{F}') ~=~ w(\bar{M}_{\hat{F}'}) \nonumber \label{eq:ef}\\
\end{align}
(where \eqref{eq:sum} holds since $|\hat{F}'|\leq |X|$ and $w(e)\geq w^*-1$ for all $e$ and \eqref{eq:ef} is from the minimality of $\hat{F}'$).
This contradicts the maximality of $w(M_{k_{min}})$.
\end{proofof}

\begin{proofof}[lemma ~\ref{lem:mkpc}(4)]
Let $\hat{F}$ be a SNH-SB relaxation with maximum allocation when adding $k_{min}$ dummy agents and for which $g(\hat{F})\leq \beta$.

First note that since $\hat{F}$ is SNH-SB so is any subset of $\hat{F}$. So, for any $e\in \hat{F}$, $\mu(E\cup{\hat{F}})>\mu(E\cup{\hat{F}\setminus \{e\}})$, or else  $\hat{F}\setminus \{e\}$ is a SNH-SB set with allocation of the same size that preceding $\hat{F}$ in $\preceq$ (Since $u(\cdot)$ is strictly positive).

For $e\in E_R$, set $u'(e)=\frac{u(e)}{\max_{e'\in E_R}\{ w(e')\}}$.  So, $u'$ is also a representation of $\preceq$, and
\begin{align*}
        w(e) = \begin{cases}
            (w^*)^2  & e\in E \\
            w^*-u'(e) & e\in E_R
        \end{cases}
\end{align*}
We have already established that
\begin{align*}
\mu(E\cup E_{k_{min}} \cup \hat{E}) &=\mu(E)+ k_{min}+ |\hat{E}|.   
\end{align*}
And by Lemma~\ref{lem:inc}
\begin{align*}
\mu(E\cup E_{k_{min}}\cup \hat{F})=\mu(E)+ k_{min}+ |\hat{F}|.  
\end{align*}
So, since both $\hat{E}$ and $\hat{F}$ have allocation of the same size, $|\hat{E}|=|\hat{F}|$.
By Lemma~\ref{lem:mwc}
(for the first equality) and Lemma~\ref{lem:inc} (for the second) we have:
\begin{align*}
w(M_{k_{min}})& ~=~ w(M_{k_{min}}\cap    
    E_{k_{min}})+ \sum_{e\in \hat{E}}(w^*-u'(e))+\mu(E)\cdot (w^*)^2\\
~=~ &w(M_{k_{min}}\cap    
    E_{k_{min}}) -u'(\hat{E})+|\hat{E}|w^*+\mu(E)\cdot (w^*)^2\\
w(\bar{M}_{\hat{F}})& ~=~ w(M_{k_{min}}\cap    
    E_{k_{min}})+ \sum_{e\in \hat{F}}(w^*-u'(e))+\mu(E)\cdot (w^*)^2\\
~=~ &w(M_{k_{min}}\cap    
    E_{k_{min}}) -u'(\hat{F})+|\hat{F}|w^*+\mu(E)\cdot (w^*)^2
\end{align*}
By construction $w(M_{k_{min}})\geq w(\bar{M}_{\hat{F}})$.  So, since $|\hat{E}|=|\hat{F}|$, $u'(\hat{E})\leq u'(\hat{F})$.  So, $\hat{E}\preceq\hat{F}$.
\end{proofof}

We now provide a lemma for showing that Lemma~\ref{lem:mkpc} is still valid on $(X,Y,E\cup \hat{E})$. 


\begin{lemma}
\label{lem:gequ}
Given two sets, $\hat{E}$ and $\hat{F}$,
\begin{enumerate}
\item If $u_1(\hat{E})<u_1(\hat{F})$ then \mbox{\bf $\Si_\rho(\hat{E})$} $<$ \mbox{\bf $\Si_\rho(\hat{F})$}.
\item If $u_2(\hat{E})<u_2(\hat{F})$ then \mbox{\bf $\Ut(\hat{E})$} $<$ \mbox{\bf $\Ut(\hat{F})$},~ and
\end{enumerate}

\end{lemma}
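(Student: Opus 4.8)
The plan is to observe that both claims are immediate once one unwinds the convention under which the single‑edge functions $u_1,u_2$ are extended to relaxation sets. Throughout the paper this extension is by summation over the edges of the set — this is precisely the convention under which the precedence order $\preceq$ is said to be ``represented'' by $u$, and the one used in the proof of Lemma~\ref{lem:mkpc}(4) (expressions such as $u'(\hat E)$, $u'(\hat F)$). So the first step is to fix this explicitly: for a relaxation $\hat E\subseteq E_R$ write $u_i(\hat E)=\sum_{e\in\hat E}u_i(e)$.

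For part (1), I would then simply note that $u_1(e)\equiv 1$, hence $u_1(\hat E)=\sum_{e\in\hat E}1=|\hat E|$, which by the definition of the size aggregation function is exactly $\Si(\hat E)$ (and likewise $u_1(\hat F)=\Si(\hat F)$). Consequently the hypothesis $u_1(\hat E)<u_1(\hat F)$ is literally the same statement as $\Si(\hat E)<\Si(\hat F)$, so the implication holds — indeed as an equivalence. For part (2), the same unwinding applies: $u_2(e)$ is by definition the discomfort $\rho(e)$ of relaxing $e$, so $u_2(\hat E)=\sum_{e\in\hat E}\rho(e)=\Ut(\hat E)$ by the definition of the total‑cost aggregation function, and similarly for $\hat F$; hence $u_2(\hat E)<u_2(\hat F)$ is identical to $\Ut(\hat E)<\Ut(\hat F)$, giving the claim.

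There is essentially no genuine obstacle here; the only point requiring care is the notational mismatch between $\Si_\rho$ as it appears in the statement and the size function $\Si(\hat F)=|\hat F|$ defined in Section~\ref{sec:prelims} — I would identify these at the outset (the ``discomfort‑aware'' bookkeeping of $\Si$ reduces to edge counting) so that each case reads off in one line. The lemma is then plugged into the correctness argument together with Lemma~\ref{lem:mkpc}(4): since $u_i$ and the corresponding $g\in\{\Si,\Ut\}$ agree on every relaxation set, the $\preceq$‑minimality of the returned $\hat E$ (i.e.\ $u_i(\hat E)\le u_i(\hat F)$ for every competing SNH‑SB set $\hat F$ achieving the same allocation size) transfers verbatim to $g(\hat E)\le g(\hat F)$, which is exactly what is needed to conclude optimality with respect to the actual aggregate cost bound $\beta$, and hence that the properties of Lemma~\ref{lem:mkpc} remain valid on $(X,Y,E\cup\hat E)$.
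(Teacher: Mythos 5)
Your proposal is correct and matches the paper's approach: the paper's entire proof is ``the lemma follows directly from definitions,'' and your unwinding of $u_1(\hat{E})=|\hat{E}|=\Si(\hat{E})$ and $u_2(\hat{E})=\sum_{e\in\hat{E}}\rho(e)=\Ut(\hat{E})$ is exactly the content that assertion leaves implicit. Nothing further is needed.
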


\begin{proofof}[lemma~\ref{lem:gequ}]
   The lemma follows directly from definitions. 
\end{proofof}

\begin{proofof}[Theorem ~\ref{pro:pcfair}]

We first prove that $\hat{E}$ is SNH-SB. 
By Lemmas \ref{lem:mkpc}(1) and (2), $\hat{E}$ is SNH-SB in $(X\cup X-{k_{min}}, Y, E \cup E_{K_{min}} \cup \hat{E})$.
Applying Lemma~\ref{lem:remove} multiple times, we get that $\Gamma((E\cup E_k \cup \hat{E}))\cap (E \cup \hat{E}) \subseteq \Gamma(E \cup \hat{E})$.

 We conclude that the set $\hat{E}$ is has the SNH-SB guarantee for both no-harm and benefit to relaxers.
As we showed in lemma~\ref{lem:mkpc}(3), the solution maximizes the size of the allocation given $k_{min}$. From Lemmas~\ref{lem:mkpc}(4) and ~\ref{lem:gequ}, it follows that the returned allocation has the maximal size regardless of the chosen value of $k$. 
(If there was a smaller $k$ with feasible allocation and within the bound, the algorithm would stop earlier).
Therefore, Theorem~\ref{pro:pcfair} is proved.
\end{proofof}

\subsection{Weak no harm weak benefit guarantee (WNH-WB)}
\label{sec:fc}
For optimizing the allocation size given the weaker WNH-WB guarantees, we need to modify the weight function which is now defined as follows:

\begin{align*}
        w(e) = \begin{cases}
            |X|+1, & e\in E, \\
            |X|+1-\frac{u(e)}{\max_{e'\in E_R}\{ u(e')\}}, & e\in E_R\,.
        \end{cases}
\end{align*}

For WNH-WB (appendix), the weights of the original edges are only slightly greater than the weight of the relaxed edges; this way we prioritize the size of the matching over the strength of the guarantees.

We now prove 
that Algorithm \ref{alg:core} with the above weight
function $w$ solves the WNH-WB Optimization Problem.
The following theorem summarizes our  results for the WNH-WB optimization problem.

\begin{theorem}
Algorithm~\ref{alg:core} with the weight function $w(\cdot)$ solves the
WNH-WB optimization problem with an aggregate cost function $g$ (see Theorem~\ref{pro:pcfair}) and a bound $\beta$.
\label{pro:mainfc}
\end{theorem}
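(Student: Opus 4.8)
\textbf{Proof proposal for Theorem~\ref{pro:mainfc}.}

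The plan is to mirror the structure used for Theorem~\ref{pro:pcfair}, but to weaken exactly those parts of the argument that relied on the strong (subset-closed) guarantees, since here we only need the weak guarantees WNH-WB, i.e. the no-harm and benefit-to-relaxers conditions need only hold when \emph{all} agents comply, so we only need to reason about the single relaxation $\hat{E}$ returned by the algorithm rather than about all of its subsets. Concretely, I would establish three things about the set $\hat{E}$ returned by Algorithm~\ref{alg:core} run with the new weight function: (a) $\hat{E}$ satisfies $\Gamma(E)\subseteq\Gamma(E\cup\hat{E})$ and $X(\hat{E})\subseteq\Gamma(E\cup\hat{E})$ (the WNH-WB guarantees); (b) $g(\hat{E})\le\beta$; and (c) $\mu(E\cup\hat{E})$ is maximum among all WNH-WB relaxations of cost at most $\beta$. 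Parts (b) and the feasibility bookkeeping for $k_{\min}$ are identical to the SNH-SB case (the binary-search structure and the stopping rule are unchanged), so the real content is in (a) and (c).

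For (a) and (c) I would first re-prove the analogues of Lemmas~\ref{lem:mwc} and~\ref{lem:mwc2} for the new weights. Here the structure of $w$ is: each $E$-edge has weight $|X|+1$, each $E_R$-edge has weight in $(|X|, |X|+1)$, and each dummy edge in $E_k$ has the huge weight $1+\sum_{e\in E\cup E_R}w(e)$. Since a maximum weighted matching must saturate all dummy agents (each dummy edge outweighs any collection of real edges that could be displaced), $|M_{k_{\min}}\cap E_{k_{\min}}| = k_{\min}$ as before. The key difference from SNH-SB is that here an $E$-edge and an $E_R$-edge differ in weight by less than $1$, so the matching $M_{k_{\min}}$ is genuinely a maximum-\emph{cardinality} matching on the real vertices (after removing the dummies): among all maximum-cardinality matchings it picks one of smallest total relaxation-cost, measured through $u$. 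This is exactly what we want for optimality of $\mu(E\cup\hat{E})$. For the WNH guarantee, the crucial observation is that a relaxable edge $e=(x,y)$ with $x\in\Gamma(E)$ is \emph{never} used in $M_{k_{\min}}$: by Claim~\ref{cl:result_1} (Result (1)), adding such an edge does not increase the matching size, so replacing $e$ by the $E$-edge already incident to $x$ in a maximum matching of $E$ strictly increases weight (swapping weight $<|X|+1$ for weight $|X|+1$), contradicting maximality. Hence every edge of $\hat{E}$ has its agent endpoint outside $\Gamma(E)$, and since $\hat{E}$ is (after minimalization, via Lemma~\ref{lem:inc} applied to the minimal core) a matching that extends a maximum matching of $E$ by exactly $|\hat{E}|$ edges, the $\mu(E)$ edges of $M_{k_{\min}}\cap E$ form a maximum matching of $(X,Y,E)$, giving $\Gamma(E)\subseteq\Gamma(E\cup\hat{E})$; and each relaxing agent is matched in $M_{k_{\min}}$ hence in $\Gamma(E\cup\hat{E})$ since $M_{k_{\min}}$ restricted to the real vertices is a maximum matching there. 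That gives WNH-WB.

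For optimality (c), I would argue by contradiction exactly as in the proof of Lemma~\ref{lem:mkpc}(3)–(4): suppose some WNH-WB relaxation $\hat{F}$ with $g(\hat{F})\le\beta$ yields a strictly larger allocation when $k_{\min}$ dummies are added (or an equal allocation with smaller $u$-value). Pass to an inclusion-minimal $\hat{F}'\subseteq\hat{F}$ with the same $\mu$; apply Lemma~\ref{lem:inc} to $E\cup E_{k_{\min}}$ and $\hat{F}'$ to conclude the maximum matching $\bar M_{\hat F'}$ decomposes as a maximum matching of $E$ plus exactly $|\hat F'|$ relaxed edges; then compare $w(M_{k_{\min}})$ and $w(\bar M_{\hat F'})$ term by term. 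Because every $E$-edge weight is $|X|+1$ and every relaxed-edge weight lies in $(|X|,|X|+1)$, a cardinality advantage of even one edge dominates the total $u$-slack over at most $|X|$ relaxed edges, so $w(\bar M_{\hat F'})>w(M_{k_{\min}})$, contradicting maximality; in the tie-cardinality case the comparison reduces to $u'(\hat E)\le u'(\hat F)$, i.e. $\hat E\preceq\hat F$, which with Lemma~\ref{lem:gequ} gives $g(\hat E)\le g(\hat F)$. Then invoke Lemma~\ref{lem:remove} to transfer the guarantees from the padded graph $(X\cup X_{k_{\min}},Y,E\cup E_{k_{\min}}\cup\hat E)$ back to $(X,Y,E\cup\hat E)$.

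The main obstacle I anticipate is verifying that the weight gap is calibrated correctly: with $E$-edges at weight $|X|+1$ and $E_R$-edges strictly between $|X|$ and $|X|+1$, one must check that no exchange along an augmenting path can trade a single extra $E$-edge for a cardinality loss, and conversely that the $u$-slack accumulated over up to $|X|$ relaxed edges is strictly less than the weight of one additional edge — this is where the "$+1$" over $|X|$ (rather than over $|X|+1$ or some other quantity) matters, and it is the place where the WNH-WB analysis differs most delicately from the SNH-SB one. Everything else is a routine adaptation of the already-established lemmas.
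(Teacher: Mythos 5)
Your optimality argument (part (c)) is essentially the paper's Lemma~\ref{lem:mk}(3)--(4), but the part of your proposal that establishes the WNH and WB guarantees rests on two claims that are false for the WNH-WB weight function. First, you assert that a relaxable edge $(x,y)$ with $x\in\Gamma(E)$ is never used in $M_{k_{\min}}$. That is the design of the \emph{SNH-WB} weight function (which assigns weight $-1$ to such edges, Theorem~\ref{pro:mc}); for WNH-WB it fails. Take $X=\{x_1,x_2\}$, $Y=\{y_1,y_2\}$, $E=\{(x_1,y_1)\}$, $E_R=\{(x_1,y_2),(x_2,y_1)\}$ with $u\equiv 1$: the maximum weight matching is $\{(x_1,y_2),(x_2,y_1)\}$ (weight $4>3$), which relaxes an edge at $x_1\in\Gamma(E)$ --- and the whole point of WNH-WB (cf.\ Figure~\ref{fig:examplepcfc}) is that such relaxations are permitted because no harm arises when everyone complies. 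Your proposed swap of $(x,y)$ for the $E$-edge at $x$ also ignores that the other endpoint of that $E$-edge may already be matched in $M_{k_{\min}}$. Second, in the same example $|M_{k_{\min}}\cap E|=0\neq 1=\mu(E)$, so the ``analogue of Lemma~\ref{lem:mwc}'' you plan to re-prove does not exist: the paper lowers the $E$-weights precisely so that a maximum matching of $E\cup\hat{E}$ need \emph{not} contain a maximum matching of $E$. Relatedly, Lemma~\ref{lem:inc} is stated for minimal \emph{SNH-SB} relaxations, so invoking it for a minimal core of an arbitrary WNH-WB relaxation $\hat{F}$ is not licensed; the paper's Lemma~\ref{lem:mk}(3) instead proves maximality directly against $E\cup E_R$ using only $w(e)\geq|X|$.

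The deeper issue is that you derive both guarantees from properties of the single matching $M_{k_{\min}}$ computed by the facilitator, whereas $\Gamma(E)\subseteq\Gamma(E\cup\hat{E})$ and $X(\hat{E})\subseteq\Gamma(E\cup\hat{E})$ quantify over \emph{all} maximum matchings of $(X,Y,E\cup\hat{E})$: the platform computes its own, unweighted, maximum matching $M_P$. ``$x$ is matched in $M_{k_{\min}}$'' does not imply ``$x\in\Gamma(E\cup\hat{E})$''. The paper closes this gap with two arguments you would need to supply: a weight comparison showing $\hat{E}\subseteq M_P$ for \emph{every} maximum matching $M_P$ of the relaxed graph (Lemma~\ref{lem:mk}(1)), and an alternating-path construction starting from a hypothetically unmatched $x_0\in\Gamma(E)$, which uses the fact that $\Gamma(E)$ and $X\setminus\Gamma(E)$ lie in separate connected components of $(X,Y,E)$ (Lemma~\ref{lemma:O-connected}) to produce either an augmenting path or an equal-size matching with fewer relaxed edges, contradicting part~(1) (Lemma~\ref{lem:mk}(2)); the transfer back from the padded graph is then done via Lemma~\ref{lem:remove}. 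Without an argument of this kind about arbitrary platform matchings, the proposal does not establish WNH-WB.
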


Our proof of Theorem~\ref{pro:mainfc} uses the following lemma.

\begin{lemma}
\label{lemma:O-connected}
    For any $B=(X,Y,F)$, $\Gamma(F)$ and $X\setminus \Gamma(F)$ reside in separate connected components.   
\end{lemma}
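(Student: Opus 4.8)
The statement is that in any bipartite graph $B=(X,Y,F)$, no edge of $F$ can join a vertex of $\Gamma(F)$ (a vertex saturated by every maximum matching) to a vertex of $X\setminus\Gamma(F)$; more precisely, the vertex set partitions so that $\Gamma(F)$-vertices and the rest lie in different connected components. Since the graph is bipartite with the ``guaranteed'' side being $X$, the content is really about edges incident to $X$-vertices: I would first reduce to showing that if $x\in\Gamma(F)$ and $(x,y)\in F$, then $y$ is matched by every maximum matching to a vertex of $\Gamma(F)$, and conversely any $x$ adjacent to such a $y$ is itself in $\Gamma(F)$. In other words, I want to show that the ``alternating reachability'' structure that characterizes $X\setminus\Gamma(F)$ (Claim~\ref{cl:result_3} / Result~(3) in the excerpt) cannot cross from an unguaranteed vertex into a guaranteed one.

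The key tool is Result~(3): a vertex $x\in X$ is in $X\setminus\Gamma(F)$ iff, with respect to some (equivalently any) fixed maximum matching $M$, there is an even-length $M$-alternating path from $x$ to an $M$-free vertex. First I would fix a maximum matching $M$. Suppose, for contradiction, that there is an edge of $F$ between a vertex in a connected component $C$ and a vertex outside $C$ — wait, rather: suppose there is an edge $(x,y)\in F$ with $x\in X\setminus\Gamma(F)$ and the component of $y$ also containing some vertex of $\Gamma(F)$. The cleaner route: I claim the set $X\setminus\Gamma(F)$ together with all $Y$-vertices reachable from it by $M$-alternating paths forms a union of whole connected components. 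Take $x\in X\setminus\Gamma(F)$ and an even-length alternating path $P$ from $x$ to an $M$-free vertex $y_0$. For any edge $(x,y)\in F$: if $y$ is $M$-free, then $(x,y)$ extended is itself an odd augmenting path, contradicting maximality of $M$, so $y$ is matched, say to $x'$ via $M$. Then either $(x,y)\in M$ (so $y$ is on $x$'s matching edge — but then following the even path from $x$ backwards shows $x'=$ the $M$-partner structure is still reachable, so $x'\in X\setminus\Gamma(F)$), or $(x,y)\notin M$, in which case $x,y,x'$ gives a new even-length alternating path from $x'$ to the free vertex $y_0$ (prepend $x' - y - x - \cdots - y_0$, which is even length), hence $x'\in X\setminus\Gamma(F)$ by Result~(3). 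Either way every $F$-neighbor of $x$, and the $M$-partner of that neighbor, is again in $X\setminus\Gamma(F)$. Iterating, the connected component of $x$ in $B$ contains no vertex of $\Gamma(F)$.

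Conversely, if $x\in\Gamma(F)$ then $x$ is matched by $M$, and I would argue its component contains only $\Gamma(F)$-vertices on the $X$-side: if some vertex $x''$ in $x$'s component were in $X\setminus\Gamma(F)$, the previous paragraph applied to $x''$ would put all of $x''$'s component — in particular $x$ — into $X\setminus\Gamma(F)$, a contradiction. Hence the components split exactly along $\Gamma(F)$ versus $X\setminus\Gamma(F)$ (with each $Y$-vertex falling on the side of its neighbors; an isolated or all-unguaranteed-neighbor $Y$-vertex goes with the unguaranteed side — and one should note an $M$-free $X$-vertex is trivially in $X\setminus\Gamma(F)$, so no edge of $F$ is lost). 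The main obstacle is the bookkeeping around which maximum matching one fixes and checking that "prepending two edges to an even alternating path yields another even alternating path to a free vertex" is legitimate — in particular that the prepended vertices are not already on $P$ (handle this by taking $P$ to be a shortest such path, or by using the symmetric-difference argument of Claim~\ref{cl:result_3} directly, which avoids the simple-path subtlety entirely). I would lean on the symmetric-difference formulation to keep the argument clean.
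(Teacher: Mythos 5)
Your core closure step is fine: if $x\in X\setminus\Gamma(F)$, $(x,y)\in F$, and $y$ is $M$-matched to $x''$, then $x''\in X\setminus\Gamma(F)$ (either by prepending $x''\!-\!y\!-\!x$ to the even alternating path, or, when $x$ is $M$-free, by the swap $M\setminus\{(x'',y)\}\cup\{(x,y)\}$). The genuine gap is the sentence ``Iterating, the connected component of $x$ in $B$ contains no vertex of $\Gamma(F)$.'' Your iteration only ever moves from a $Y$-vertex to its \emph{$M$-partner}; a connected component is closed under moving to \emph{all} neighbors of that $Y$-vertex, and nothing you prove forces the other neighbors to be exposable. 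This is not a bookkeeping issue that shortest paths or symmetric differences can repair, because the lemma as stated is false for a general bipartite graph. Take $X=\{a,b,c\}$, $Y=\{p,q\}$, $F=\{(a,p),(a,q),(b,p),(c,p)\}$: since $q$ is adjacent only to $a$, the maximum matchings are exactly $\{(a,q),(b,p)\}$ and $\{(a,q),(c,p)\}$, so $\Gamma(F)=\{a\}$ and $b,c\notin\Gamma(F)$, yet the graph is connected and $a$ shares the neighbor $p$ with $b$ and $c$. Your iteration started at $b$ visits $p$, jumps to $p$'s partner $c$, and terminates having reached only $\{b,c\}$; it never says anything about $a$, and indeed cannot, since $a\in\Gamma(F)$. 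What your argument actually proves is the weaker (and correct) statement that $X\setminus\Gamma(F)$ is closed under alternating reachability, not under graph connectivity.

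For comparison, the paper's own proof of Lemma~\ref{lemma:O-connected} is a one-line swap: pick $M$ missing $x'$ and set $M'=M\setminus\{(x,y)\}\cup\{(x',y)\}$. This tacitly assumes $M'$ is a matching, which holds only when $y$ is $M$-free or $M$-matched to $x$; in the example above $y=p$ is matched to the third vertex $c$, so $M'$ is not a matching and no contradiction is obtained. So both arguments founder on the same configuration, and the statement that the downstream proof (Lemma~\ref{lem:mk}(2)) can legitimately rely on is the alternating-path closure you established, not the connected-components claim.
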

\begin{proofof}[Lemma~\ref{lemma:O-connected}]
If the lemma does not hold, then there exist 
$x\in \Gamma(F),x'\not\in \Gamma(F), y\in Y$, with $(x,y),(x',y)\in F$. 
Let $M$ be a maximum matching in which $x'$ does not participate. In particular, $(x',y)\not\in M$.  Define $M'=M\setminus \{(x,y)\} \cup \{ (x',y)\}$. Then, $|M'|\geq |M|$.  But $M$ is maximum.  So $|M'|=|M|$ and  thus $(x,y)\in M$. So $x$ does not participate in the maximum matching $M'$. This is a contradiction and
the lemma follows.

\end{proofof}

Recall the notations:  $X_{k_{min}}$ is the set of dummy agents added by the algorithm for $k_{min}$. $M_{k_{min}}, E_{k_{min}}$ are the computed maximum weighted matching and edge set respectively.
For a set $F$ of edges, $X(F)$ are the nodes from $X$ with edges in $F$, and for a graph $B=(X,Y,F)$, $\Gamma(F)$ is the set of $x\in X$ that participate in all maximum matchings of $B$.
$M_P$ is any maximum matching of $\hat{G}$ returned by the platform. 
For a function $f:D\rightarrow \mathbb{R}$ and set $D'\subseteq D$, let $f(D')=\sum_{d\in D'}f(d)$.

\smallskip 

We are now ready to present the following lemma which is essential for proving Theorem~\ref{pro:mainfc}:

\begin{lemma}
The set, $\hat{E}$ returned by  Algorithm~\ref{alg:core} has the following properties:
\begin{enumerate}[label={(\arabic*)}]
    \item It provides the weak benefit for relaxers.
    \item It provides the weak no harm participation guarantee.
    \item $M_{k_{min}} ~\geq~ \mu(E\cup E_{k_{min}} \cup E_R )$. 
    \item For all $\hat{F}\subseteq E_R$, if $|M_{k_{min}}|=\mu(E\cup E_{k_{min}} \cup \hat{F} )$ then $\hat{E} \preceq \hat{F}$.        
\end{enumerate}
\label{lem:mk}
\end{lemma}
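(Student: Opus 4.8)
The plan is to mirror the structure used for Lemma~\ref{lem:mkpc} in the SNH-SB case, but with the simpler WNH-WB weight function in which original edges get weight $|X|+1$ and relaxable edges get a slightly smaller weight. The key structural facts we need are analogues of Lemmas~\ref{lem:mwc} and~\ref{lem:mwc2}: namely that any maximum weighted matching $M_{k_{min}}$ of $G_k$ uses exactly $\mu(E)$ edges of $E$ and exactly $k_{min}$ dummy edges. The first follows because the weight gap between $(|X|+1)$ and the relaxed-edge weights is smaller than one, but an edge of $E$ still weighs strictly more than any relaxed edge, so a maximum weighted matching never trades an $E$-edge for a relaxed edge; combined with the dummy edges having enormous weight $1+\sum w(e)$, the matching must saturate all $k_{min}$ dummies and then pack as many high-weight $E$-edges as possible, which is $\mu(E)$. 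The second follows exactly as in Lemma~\ref{lem:mwc2}: if a dummy agent were unmatched, some non-dummy resource matched to a relaxed or original edge could be re-routed to the dummy, strictly increasing weight.

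For part~(2), weak no harm, I would argue as follows. Since $M_{k_{min}}$ restricted to $E$ is a maximum matching of $(X,Y,E)$ (by the analogue of Lemma~\ref{lem:mwc}), every node of $\Gamma(E)$ is matched by $M_{k_{min}}$ using an $E$-edge. Now consider the platform's maximum matching $M_P$ of $\hat G = (X,Y,E\cup\hat E)$. Using Lemma~\ref{lemma:O-connected}: $\Gamma(E)$ and $X\setminus\Gamma(E)$ live in separate connected components of $(X,Y,E)$. The crucial point is that the returned $\hat E$ cannot add an edge incident to $\Gamma(E)$ — if it did, say $(x,y)\in\hat E$ with $x\in\Gamma(E)$, then since $x$ is already saturated in every maximum matching of $(X,Y,E)$ and adding an edge at a fully-saturated-in-all-max-matchings vertex does not increase $\mu$ (Claim~\ref{cl:result_1}/Result~(1)), the edge $(x,y)$ would be redundant and the maximum weighted matching would not include it, contradicting $(x,y)\in\hat E = M_{k_{min}}\setminus(E\cup E_{min})$. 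Hence every relaxed edge is incident only to $X\setminus\Gamma(E)$, so within the component(s) containing $\Gamma(E)$ the graph is unchanged, and $\Gamma(E)\subseteq\Gamma(E\cup\hat E)$. Part~(1), weak benefit, should follow from the minimality argument exactly as in the SNH-SB case restricted to the actual returned set: because relaxed edges are chosen into the maximum weighted matching, each one strictly increases matching size, so by (the argument behind) Lemma~\ref{lem:inc} applied to the returned $\hat E$ we get $\mu(E\cup\hat E)=\mu(E)+|\hat E|$, forcing every relaxing agent to be matched in every maximum matching of $\hat G$.

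For parts~(3) and~(4), optimality, I would run the same contradiction as in Lemma~\ref{lem:mkpc}(3)–(4). Suppose some relaxation $\hat F\subseteq E_R$ yields a strictly larger $\mu(E\cup E_{k_{min}}\cup\hat F)$ (for~(3)), or an equal-size allocation with $\hat F\prec\hat E$ (for~(4)). Replace $\hat F$ by an inclusion-minimal subset $\hat F'$ with the same $\mu$; apply Lemma~\ref{lem:inc} (or rather the WNH-WB analogue — since here we only need $\mu(E\cup E_{k_{min}}\cup\hat F')=\mu(E)+k_{min}+|\hat F'|$, which holds for any inclusion-minimal relaxation regardless of benefit guarantees, a point worth checking) to compute the weight of a maximum matching $\bar M_{\hat F'}$ of $(X\cup X_{k_{min}},Y,E\cup E_{k_{min}}\cup\hat F')$, and show $w(\bar M_{\hat F'})>w(M_{k_{min}})$ in case~(3) and $u'(\hat E)\le u'(\hat F)$ in case~(4), using that $u'$ represents $\preceq$ and then Lemma~\ref{lem:gequ} to transfer the conclusion to the actual aggregation function $g$. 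Finally, since the binary search over $k$ picks the smallest feasible $k_{min}$, removing the dummy agents via repeated application of Lemma~\ref{lem:remove} preserves all guarantees on $(X,Y,E\cup\hat E)$, which closes the argument.

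The main obstacle I anticipate is part~(2): unlike the SNH-WB weight function, the WNH-WB weights do not explicitly forbid touching $\Gamma(E)$ vertices with relaxed edges, so the no-harm property is not "built in" — it must be extracted from the interplay between Lemma~\ref{lemma:O-connected}, the fact that an $E$-edge always outweighs a relaxed edge (so $\Gamma(E)$ stays saturated by $E$-edges), and the redundancy-elimination property of maximum weighted matchings. Making the component-separation argument airtight — in particular ruling out that a relaxed edge could attach to $\Gamma(E)$ and still be forced into the matching by some parity/augmenting-path effect — is the delicate step, and I expect it to lean on Claim~\ref{cl:result_1} (adding an edge at an already-always-matched vertex does not raise $\mu$) together with the strict weight dominance of $E$ over $E_R$.
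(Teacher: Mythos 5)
Your proposal imports structural facts from the SNH-SB analysis that fail under the WNH-WB weight function, and several of your steps collapse as a result. The central error is the claimed analogue of Lemma~\ref{lem:mwc}, namely that $|M_{k_{min}}\cap E|=\mu(E)$ because ``a maximum weighted matching never trades an $E$-edge for a relaxed edge.'' Under WNH-WB weights an $E$-edge is worth only $|X|+1$ while a relaxed edge is worth more than $|X|$, so the matching will gladly trade one $E$-edge for \emph{two} relaxed edges, gaining roughly $|X|$ in weight; this is precisely why the paper notes that a maximum matching of $(X,Y,E\cup\hat E)$ need not contain a maximum matching of $(X,Y,E)$ as a subgraph. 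A two-agent example ($E=\{(x_1,y_1)\}$, $E_R=\{(x_1,y_2),(x_2,y_1)\}$) already kills it: the returned $\hat E$ is $\{(x_1,y_2),(x_2,y_1)\}$, so $M_{k_{min}}\cap E=\emptyset$ while $\mu(E)=1$. The same example refutes your key claim in part (2) that the algorithm never returns a relaxed edge incident to $\Gamma(E)$ (here $x_1\in\Gamma(E)$ is asked to relax); indeed, if that claim were true there would be no need for the separate SNH-WB weight function that explicitly assigns weight $-1$ to such edges. It also breaks your route to part (1): in the example $\mu(E\cup\hat E)=2\neq\mu(E)+|\hat E|=3$, so Lemma~\ref{lem:inc} does not apply to the returned set --- that lemma genuinely needs the SNH-SB property, not just inclusion-minimality. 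The ``point worth checking'' you flag for parts (3)--(4) is therefore not a technicality but a genuine failure: the same $\hat F'=\{(x_1,y_2),(x_2,y_1)\}$ is inclusion-minimal yet violates the identity you need.

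The paper's actual proof takes a different and necessary route. Part (1) is a direct weight comparison: for any maximum matching $M_P$ of the relaxed graph, setting $F=\hat E\setminus M_P$ one shows $w(M_{k_{min}})\le w(M_P)$ with strict inequality when $F\neq\emptyset$, forcing $\hat E\subseteq M_P$. Part (2) does not exclude relaxed edges at $\Gamma(E)$; instead, assuming some $x_0\in\Gamma(E)$ is unmatched in $M_P$, it builds an alternating path confined to $\Gamma(E)$ (via Lemma~\ref{lemma:O-connected}) that is either augmenting, contradicting maximality of $M_P$ by Berge's lemma, or terminates at a relaxed/dummy edge, yielding an equal-size matching that uses strictly fewer relaxed edges and so contradicts part (1). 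Part (3) is a one-line cardinality-versus-weight bound ($w(M)\ge|M|\cdot|X|>|M_{k_{min}}|(|X|+1)\ge w(M_{k_{min}})$ whenever $|M|>|M_{k_{min}}|$), with no appeal to minimal relaxations at all, and part (4) compares $u'(\hat E)$ with $u'(\hat F)$ after first reducing to the case $\hat F\subseteq M$. You would need to rebuild your argument along these lines; the component-separation idea you lean on for part (2) cannot be made airtight because the premise it rests on is false.
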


\begin{proofof}[Lemma \ref{lem:mk}(1)]
We consider two cases.
In the first case, there is an agent $x_{k_{min}}\in X_{k_{min}}$ such $x_{k_{min}}$ is not adjacent to any edge in $M_P$. Therefore, the set $\hat{E}$ is considered also for $k-1$ and by lemma~\ref{lem:gequ}, $k_{min}$ is not the smallest number, contradicting lines 11 through 13 of Algorithm~\ref{alg:core}.  
 
In the second case, all agents of $X_{k_{min}}$ are matched in $M_P$.
It suffices to prove that $\hat{E}\subseteq M_P$.  Set $F=\hat{E}\setminus M_P$.  Note that $M_P\subseteq E\cup (\hat{E}\setminus F)$. 
Set $j=|M_P|=|M_k|$ (where the equality was established above) and $w^*=|X|+1$.
Then:
\begin{align}
w(M_{k_{min}}) ~=~ & w(E_{k_{min}}) + w(\hat{E})+(j-|\hat{E}|)w^* \nonumber\\
~\leq~  & w(E_k) + w(\hat{E})-w(F)+|F|w^* +(j-|\hat{E}|)w^* \label{eq:strict}\\
& \;\;\;\;\;\;\text{(since $\max_{e\in E_R} \{ w(e)<w^*$)}\nonumber \\
~=~  & w(E_{k_{min}}) + w(\hat{E}\setminus F)+(k-|\hat{E}\setminus F|)w^* \nonumber\\
& \;\;\;\;\;\;\text{\{ since $F\subseteq \hat{E}$\}}\nonumber\\
~=~ &w(M_P), \nonumber 
\end{align}
and the inequality in \eqref{eq:strict} is strict if $F\neq \emptyset$.  So, since $w(M_{k_{min}})$ is maximal, it must be that $F=\emptyset$.
\end{proofof}

\begin{proofof}[Lemma \ref{lem:mk}(2)]

For purposes of contradiction, let $x_0\in \Gamma(E)$ be an agent that does not participate in $M_P$. Let $M$ be a maximum matching of $(X,Y,E)$. By definition, all agents of $\Gamma(E)$ participate in $M$. Define a path $Q=(x_0,y_0,x_1,y_1,\ldots, x_{\ell},y_{\ell})$, with $x_0,\ldots, x_{\ell}\in \Gamma(E)$, inductively as follows:  \newline

\noindent
$\bullet$ $y_i$: is the $y'$ such that $(x_i,y')\in M$ (this exists since $x_i\in \Gamma(E)$, as we shall show).

\noindent
$\bullet$ $x_{i+1}$: is the $x'\in X\cup X_{k_{min}}$ such that $(x',y_i)\in M_P\setminus \hat{E}$, if such an $x'$ exists.  If no such $x'$ exists, then the path ends.

\smallskip 

Note that indeed $x_i\in \Gamma(E)$, for all $i=1,\ldots,\ell$.  This holds since, by Lemma \ref{lemma:O-connected} only vertices of $\Gamma(E)$ are reachable from $x_0$ using $E$, and the definitions of both $y_i$ and $x_{i+1}$ use only edges of $E$. 

We consider two cases, both of which result in a contradiction. 
\begin{itemize}
    \item There is no $(x,y_{\ell})\in M_P$: so the path $Q$ is an augmenting path in $M_P$, and by Berge's lemma, $M_P$ cannot be maximum.  
    \item There exists $(x,y_{\ell})\in M_P$:  So, $(x,y_{\ell})\in \hat{E}\cup E_{k_{min}}$.  But then consider the matching 
    \begin{align*}
        M'_P=(M_P\setminus  \{(x,y_{\ell} \} \setminus \{ (x_{i+1},y_i) : i=0,\ell-1 \} ) \\
        \cup \{ (x_i,y_i): i=0,\ldots,\ell\} .
    \end{align*}
This matching is obtained from $M_P$ by removing $\ell+1$ edges and adding the same number of edges. So, it is of the same size. But, $M'_P$ is obtained with relaxing one less edge. Furthermore, $M'_P\cap \hat{E}\subseteq M_P\cap \hat{E}$. This
contradicts Lemma~\ref{lem:mk}(1).
\end{itemize}
\end{proofof}

\begin{proofof}[Lemma \ref{lem:mk}(3)]
Set $G=(X,Y,E), \hat{G}=(X\cup X_{k_{min}},Y,E\cup E_{k_{min}} \cup \hat{E})$. 
By construction $w(e)\geq |X|$ for all $e\in (E \cup E_R \cup E_{k_{min}})$.  
Note that since $u$ is strictly positive, $\hat{F}\prec \hat{E}$, 
whenever $\hat{F}\subsetneq \hat{E}$.

We prove that $M_{k_{min}}$ is a maximum matching of $(X\cup X_{k_{min}},Y,E\cup E_R\cup E_{k_{min}})$. 
To get a contradiction, suppose $M$ is a matching of $\hat{G}$ with $|M|>|M_{k_{min}}|$. Then,
\begin{align*}
    w(M)\geq |M|\cdot |X| &\geq 
    (|M_{k_{min}}|+1)|X| \\  &> |M_{k_{min}}|(|X|+1) \geq w(M_{k_{min}})
\end{align*}
(where the inequalities hold since $|M_{k_{min}}|<|M|\leq |X|$).  This contradicts the maximality of $w(M_{k_{min}})$.  So, $M_{k_{min}}$ is a maximum matching of $(X\cup X_{k_{min}},Y,E\cup E_R\cup E_{k_{min}})$. It is also a possible matching of $\hat{G}$.  So, since $M_P$ is a maximum matching of $\hat{G}$, also $|M_P|\geq |M_{k_{min}}|$.  So, $M_P$ is a maximum matching of $(X,Y,E\cup E_R \cup E_{k_{min}})$.

We conclude that $M_{k_{min}}$ is a maximum matching of $(X\cup X_{k_{min}},Y,E\cup E_R \cup E_{k_{min}})$. Therefore, it is the maximum sized allocation when adding $k_{min}$ agents. 

\end{proofof}

\begin{proofof}[lemma~\ref{lem:mk}(4)]
Set $j = |M_{k_{min}}|$. Consider any other set $\hat{F}\subseteq E_R$. Let $M$ be a maximum matching of $(X,Y,E\cup E_{k_{min}}\cup \hat{F})$. 
From lemma~\ref{lem:mk}(1) it follows that $|M|=|M_{k_{min}}|$.

If $\hat{F}\not \subseteq M$, then $M\cap \hat{F}$  precedes $\hat{F}$ in $\preceq$ (since $M\cap \hat{F} \subsetneq \hat{F}$). So, we may assume that $\hat{F}\subseteq M$.

For $e\in E_R$, set $u'(e)=\frac{u(e)}{\max_{e'\in E_R}\{ w(e')\}}$.  So, $u'$ also represents $\preceq$, and
\begin{align*}
        w(e) = \begin{cases}
            w^*  & e\in E \\
            w^*-u'(e) & e\in E_R
        \end{cases}
\end{align*}
Then:
\begin{align*}
w(M_{k_{min}}\setminus E_{k_{min}})&=\sum_{e\in \hat{E}}w(e)+(k-|\hat{E}|)w^*\\
&=\sum_{e\in \hat{E}}(w^*-u'(e))+(j-|\hat{E}|)w^* =
jw^*-u'(\hat{E})
\end{align*}
Similarly,
\begin{align*}
w(M\setminus E_{k_{min}})&= jw^*-u'(\hat{F}).       
\end{align*}

So, if $|M_{k_{min}}\cap E_{k_{min}}|\leq |M\cap E_{k_{min}}|$, then the lemma is proved. 
Otherwise, since $g(\hat{F})\leq \beta$, the set $\hat{F}$ was considered for $k_{min}-1$.
From Lemma~\ref{lem:gequ}, it follows that Algorithm~\ref{alg:core} should have stopped at $k_{min}-1$ or before and we have a contradiction.
\end{proofof}

Applying Lemma~\ref{lem:remove} multiple times on $(X\cup X_{k_{min}},Y,E\cup E_{k_{min}}\cup \hat{E}$), we get that $\Gamma((E\cup E_k \cup \hat{E}))\cap (E \cup \hat{E}) \subseteq \Gamma(E \cup \hat{E})$. 

We conclude that the set $\hat{E}$ is has the WNH-WB guarantee for both no-harm and benefit to relaxers.
As we showed in lemma~\ref{lem:mk}(3),  the solution maximizes the size of the allocation given $k_{min}$. From Lemmas~\ref{lem:mk}(4) and ~\ref{lem:gequ} it follows that the returned allocation has the maximal size regardless of the chosen value of $k$. 
(If there was a smaller $k$ with feasible allocation and within the bound, the algorithm would stop earlier).
Therefore, Theorem~\ref{pro:mainfc} is proved.

\subsection{Proof of Theorem~\ref{pro:mc}}
We note that most of the correctness proof is very similar to the correctness proof of Theorem~\ref{pro:mainfc}.
Therefore, we focus here on the differences.
Actually, it remains to explain (and prove) that approaching agents from $\Gamma(E)$ results in violating the stronger no-harm guarantee. That is the goal of the following lemma:

\begin{lemma}
Let $\hat{E}$ be the set of edges returned 
by Algorithm~\ref{alg:core} and let $x\in X$ be an agent that participates in all maximum matchings of $(X,Y,E)$.
If there is an edge $\{x,y\}\in \hat{E}$, then $x$ does not participate in all maximum matchings of $(X,Y,E\cup\hat{E}\setminus \{x,y\})$.
\label{lem:mixed}
\end{lemma}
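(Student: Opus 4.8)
The plan is to argue by contradiction, exploiting the specific weight function used for the SNH-WB case. Suppose $x \in \Gamma(E)$, that $\{x,y\} \in \hat{E}$, and yet $x$ still participates in all maximum matchings of $(X,Y,E \cup \hat{E} \setminus \{x,y\})$. The key observation is that in the SNH-WB weight function, every edge incident to an agent of $\Gamma(E)$ is assigned weight $-1$, whereas every other edge (whether in $E$ or in $E_R$) has weight at least $w^* - 1 = |X| > 0$. First I would recall that $\hat{E} = M_{k_{min}} \setminus (E \cup E_{k_{min}})$, so $\{x,y\} \in \hat{E}$ means in particular $\{x,y\} \in M_{k_{min}}$, i.e. the maximum-weight matching $M_{k_{min}}$ actually uses the negatively-weighted relaxed edge $\{x,y\}$.

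The core step is to construct a strictly heavier matching $M'$, contradicting the maximality of $w(M_{k_{min}})$. Since by assumption $x$ participates in every maximum matching of $G' := (X \cup X_{k_{min}}, Y, E \cup \hat{E} \cup E_{k_{min}} \setminus \{x,y\})$, and since $M_{k_{min}} \setminus \{x,y\}$ is a matching of $G'$ of size $|M_{k_{min}}| - 1$, one shows that the assumption forces $\mu(G') = |M_{k_{min}}|$: indeed, if $\mu(G') = |M_{k_{min}}| - 1$ then all maximum matchings of $G'$ avoid $x$ (none uses $\{x,y\}$ as that edge is not in $G'$, and adding more edges at $x$ would only exist via an augmenting path increasing the size), contradicting $x \in \Gamma(G')$. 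Here I would lean on Result~(2) / Claim~\ref{cl:result_2} from the excerpt: if $x$ fails to be in all maximum matchings of $G' \setminus \{x,y\}$... but actually the cleanest route is: $\mu(G') \ge |M_{k_{min}}|-1$ trivially from $M_{k_{min}} \setminus \{x,y\}$, and $\mu(G') \le \mu(E \cup \hat{E} \cup E_{k_{min}}) = |M_{k_{min}}|$; if $\mu(G') = |M_{k_{min}}| - 1$, then every maximum matching of $G'$ has size one less and $x$ lies in all of them, yet $M_{k_{min}} \setminus \{x,y\}$ is such a matching and it does contain $x$ only if $M_{k_{min}}$ had another edge at $x$ — impossible since $M_{k_{min}}$ is a matching and already contains $\{x,y\}$. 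So $M_{k_{min}} \setminus \{x,y\}$ is a maximum matching of $G'$ \emph{not containing} $x$, contradicting $x \in \Gamma(G')$ unless $\mu(G') = |M_{k_{min}}|$.

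So we conclude $\mu(G') = |M_{k_{min}}|$; let $N$ be a maximum matching of $G'$. Then $|N| = |M_{k_{min}}|$ but $N$ does not use the edge $\{x,y\}$ (it is absent from $G'$), so every edge of $N$ incident to a $\Gamma(E)$-agent still carries weight $-1$, \emph{but there is one fewer such edge than in} $M_{k_{min}}$, or more precisely: $w(N) - w(M_{k_{min}}) = \sum_{e \in N \setminus M_{k_{min}}} w(e) - \sum_{e \in M_{k_{min}} \setminus N} w(e)$, and since $\{x,y\} \in M_{k_{min}} \setminus N$ with $w(\{x,y\}) = -1$ (negative), while we can choose $N$ among maximum matchings of $G'$ to be weight-maximal there, a counting/exchange argument along the symmetric difference $M_{k_{min}} \oplus N$ shows $w(N) > w(M_{k_{min}})$ — contradicting that $M_{k_{min}}$ is a maximum-weight matching of $G$ (which contains $G'$ as a subgraph plus the single edge $\{x,y\}$). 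The main obstacle I anticipate is making this last exchange argument airtight: one must verify that replacing $M_{k_{min}}$ by a maximum matching $N$ of $G'$ genuinely removes the $-1$-weight edge $\{x,y\}$ without being forced to introduce another equally-bad or worse edge, and that the size is preserved; the cleanest way is to take $N$ to be a \emph{maximum-weight} matching among all matchings of $G'$ of size $|M_{k_{min}}|$ and then compare $w(N)$ directly with $w(M_{k_{min}})$, noting $M_{k_{min}} \setminus \{x,y\} \cup \{\text{(any edge }N\text{ uses to cover }y)\}$ type arguments, or simply that $w(N) \ge w(M_{k_{min}} \setminus \{x,y\}) = w(M_{k_{min}}) + 1 > w(M_{k_{min}})$, since $M_{k_{min}} \setminus \{x,y\}$ is itself a matching of $G'$ and removing the weight-$(-1)$ edge increases total weight by $1$. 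This final inequality $w(N) \ge w(M_{k_{min}} \setminus \{x,y\}) = w(M_{k_{min}}) + 1$ is the crux and gives the contradiction immediately, since $N$ is also a matching in $G = (X,Y,E \cup E_R)$ (indeed in $G_k$ with the dummy edges), so $M_{k_{min}}$ was not of maximum weight.
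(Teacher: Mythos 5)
Your proof rests on the wrong weight function, and this changes what is actually being proved. You take $w$ to be the SNH-WB function in which every edge incident to an agent of $\Gamma(E)$ has weight $-1$. Under that function the hypothesis ``$\{x,y\}\in\hat{E}$ with $x\in\Gamma(E)$'' can never occur: as your own ``crux'' inequality shows, $M_{k_{\min}}\setminus\{x,y\}$ is already a matching of $G_{k_{\min}}$ of weight $w(M_{k_{\min}})+1$, so a maximum-weight matching never contains a weight-$(-1)$ edge. Note that this final contradiction never uses the assumption that $x$ participates in all maximum matchings of $(X,Y,E\cup\hat{E}\setminus\{x,y\})$; you have shown the hypothesis is unsatisfiable, which makes the statement vacuously true but misses its purpose. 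The lemma exists precisely to justify \emph{why} the $-1$ weights are needed: it concerns a run of Algorithm~\ref{alg:core} with a weight function that \emph{can} select a relaxed edge at a $\Gamma(E)$-agent (the paper's proof invokes Lemma~\ref{lem:mk}(1), i.e., the WNH-WB-style weights, where such an edge has positive weight $|X|+1-u'(e)$). There the contradiction hypothesis is essential: if $x$ still participated in all maximum matchings of $(X,Y,E\cup\hat{E}\setminus\{x,y\})$, then (in the spirit of Claim~\ref{cl:result_1}) the same matching cardinality is attainable without the edge $\{x,y\}$, and since every relaxed edge carries a strictly positive deduction $u'(e)$ from the base weight, the resulting matching of $(X\cup X_{k_{\min}},Y,E\cup E_{k_{\min}}\cup\hat{E}\setminus\{x,y\})$ has weight strictly exceeding $w(M_{k_{\min}})$, contradicting the maximality of the weighted matching the algorithm computed. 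Your argument never reaches this step because under your choice of weights the contradiction is immediate and hypothesis-free.

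There is a second, independent gap in the part of your argument that does try to use the hypothesis: you silently upgrade ``$x$ participates in all maximum matchings of $(X,Y,E\cup\hat{E}\setminus\{x,y\})$'' to ``$x\in\Gamma(G')$'' for $G'=(X\cup X_{k_{\min}},Y,E\cup\hat{E}\cup E_{k_{\min}}\setminus\{x,y\})$. Adding dummy agents joined to every resource does not preserve membership in $\Gamma$: with $X=\{x\}$, $Y=\{y\}$, $E=\{(x,y)\}$, the agent $x$ lies in every maximum matching, yet after adding one dummy agent $d$ adjacent to $y$ the matching $\{(d,y)\}$ is maximum and omits $x$. So this transfer would require a converse of Lemma~\ref{lem:remove}, which is false in general; the paper instead reasons directly about matching sizes and weights in the augmented graph.
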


\begin{proofof}[lemma~\ref{lem:mixed}]
Assume for purposes of contradiction that $x$ participates in all maximum matchings of $(X,Y,E\cup\hat{E}\setminus \{x,y\})$.
Consider the set $\hat{E}' = \hat{E}\setminus \{x,y\}$.
We note that from definition of the aggregate functions, $g(\hat{E}')\leq g(\hat{E})$.
Therefore, $\hat{E}'$ was considered in iteration $k_{min}$.
Let $M$ be a maximum weighted matching of $(X\cup X_{k_{min}},Y,E\cup E_{k_{min}} \cup (\hat{E}\setminus \{x,y\})$.
 From Lemma~\ref{lem:mk}.1 and from definition of weight function $w$, we have $w(M) > w(M_{k_{min}})$ contradicting the assumption.
\end{proofof}

The definition of the weight function indeed avoids edges of $\Gamma(E)$ and allows all other edges. The rest of the proof is similar to that of Theorem~\ref{pro:mainfc}.

\section{Additional material for Section~\ref{sec:experiments}}
\label{sec:appexp}

\subsection{Implementation Details}

Experiments were carried out on both a Windows laptop and a Linux server, using Python 3.9 and the NetworkX 2.6 library. Each experiment was repeated 10 times, and in cases involving randomness, the default seeds from Python and the operating system were applied. Testing with different seeds yielded consistent results. In Figures 3-5, error bars indicate confidence intervals of 0.95.

\end{document}